\documentclass[acmsmall]{acmart}
\AtBeginDocument{%
  }

\acmDOI{10.1145/3786638}

\setcopyright{cc}
\setcctype{by}
\acmJournal{PACMMOD}
\acmYear{2026} 
\acmVolume{4} 
\acmNumber{1 (SIGMOD)} 
\acmArticle{24} 
\acmMonth{2} 
\acmPrice{}

\usepackage{graphicx}
\usepackage{amsmath,amsfonts}

\usepackage{amssymb}
\theoremstyle{definition} 
\newtheorem{definition}{Definition}[section]
\newtheorem{lemma}{Lemma}[section]  
\newtheorem{example}{Example}[section]
\newtheorem{theorem}{Theorem}[section]
\newtheorem{remark}{Remark}
\usepackage[ruled,linesnumbered,vlined]{algorithm2e}
\SetKw{Continue}{continue}
\SetKwInput{KwParam}{Public Parameters} 
\usepackage{multirow,booktabs}
\usepackage{makecell}
\usepackage{graphicx}  % Required for \resizebox
\usepackage{tikz}
\usetikzlibrary{trees,arrows}
\usetikzlibrary{matrix}
\usetikzlibrary{math,fit}
\usepackage{subcaption}
\usepackage{float}
\usepackage{soul}
\usepackage{placeins}
\usepackage{xcolor}
\SetKw{KwAnd}{and}
\SetKw{KwOr}{or}
\usepackage{ifthen}

\begin{document}
\newcommand{\err}{\mathrm{Error}} 

%%
%% The "title" command has an optional parameter,
%% allowing the author to define a "short title" to be used in page headers.
\title{Defense against Poisoning Attacks under Shuffle-DP}

%%
%% The "author" command and its associated commands are used to define
%% the authors and their affiliations.
%% Of note is the shared affiliation of the first two authors, and the
%% "authornote" and "authornotemark" commands
%% used to denote shared contribution to the research.
\author{Siyi Wang}
\authornote{These authors contributed equally to this research.}
\authornote{Work partially completed during internship at Ant Group.}
\orcid{0009-0004-1853-9028}
% \author{G.K.M. Tobin}
% \authornotemark[1]
% \email{webmaster@marysville-ohio.com}
\affiliation{%
  \institution{Nanyang Technological University}
  \city{Singapore}
  % \state{Ohio}
  \country{Singapore}
}
\email{wang.siyi@ntu.edu.sg}

\author{Qiyao Luo}
\authornotemark[1]
\orcid{0000-0003-4167-8670}
\affiliation{%
  \institution{OceanBase, Ant Group}
  \city{Shanghai}
  \country{China}
  }
\email{luoqiyao.lqy@antgroup.com}

\author{Yihua Hu}
\authornotemark[1]
\orcid{0009-0001-2948-8340}
\affiliation{%
  \institution{Nanyang Technological University}
  \city{Singapore}
  \country{Singapore}
}
\email{yihua001@e.ntu.edu.sg}

\author{Lixu Wang}
\orcid{0000-0003-2518-4160}
\affiliation{%
 \institution{Nanyang Technological University}
  \city{Singapore}
  \country{Singapore}
}
\email{wanglixu4334@gmail.com}

\author{Quanqing Xu}
\orcid{0000-0001-8989-9662}
\affiliation{%
  \institution{OceanBase, Ant Group}
  \city{Hangzhou}
  \country{China}}
\email{xuquanqing.xqq@oceanbase.com}  

\author{Chuanhui Yang}
\orcid{0009-0009-3530-6476}
\affiliation{%
  \institution{OceanBase, Ant Group}
  \city{Hangzhou}
  \country{China}}
\email{rizhao.ych@oceanbase.com}

\author{Zhan Qin}
\orcid{0000-0001-7872-6969}
\affiliation{%
  \institution{Zhejiang University}
  \city{Hangzhou}
  \country{China}}
\email{qinzhan@zju.edu.cn}

\author{Kui Ren}
\orcid{0000-0002-1969-2591}
\affiliation{%
  \institution{Zhejiang University}
  \city{Hangzhou}
  \country{China}}
\email{kuiren@zju.edu.cn}

\author{Wei Dong}
\authornote{Wei Dong is the corresponding author.}
\orcid{0000-0002-0394-4125}
\affiliation{%
  \institution{Nanyang Technological University}
  \city{Singapore}
  \country{Singapore}}
\email{wei_dong@ntu.edu.sg}

%%
%% By default, the full list of authors will be used in the page
%% headers. Often, this list is too long, and will overlap
%% other information printed in the page headers. This command allows
%% the author to define a more concise list
%% of authors' names for this purpose.
\renewcommand{\shortauthors}{Siyi Wang et al.}

%%
%% The abstract is a short summary of the work to be presented in the
%% article.
\begin{abstract}
  Differential Privacy (DP) has become the gold standard for protecting individual privacy in data analytics, and the shuffle-DP model has attracted significant attention from both academia and industry due to its favorable balance between privacy and utility. However, existing shuffle-DP protocols rely on a strong assumption: all users behave honestly. In real-world scenarios, adversarial users can exploit this vulnerability through poisoning attacks, compromising both privacy guarantees and the utility of analytical results.
While defending against poisoning attacks in the shuffle-DP model has recently gained interest, existing solutions are limited to frequency estimation tasks. To address this issue, we propose the first general defense framework for all union-preserving queries, capable of transforming any shuffle-DP protocol into a version resilient to poisoning attacks.
Beyond robust defense against poisoning attacks, our framework achieves high utility of analytical results. Compared to the original shuffle-DP protocol, it retains asymptotically equivalent error in attack-free settings and incurs only a polylogarithmic increase in error when a constant number of attackers are present.
We demonstrate the generality of our framework on several common queries, including summation, frequency estimation, and range counting. Experimental results confirm that our approach effectively defends against poisoning attacks while maintaining strong utility and communication efficiency.
\end{abstract}

%%
%% The code below is generated by the tool at http://dl.acm.org/ccs.cfm.
%% Please copy and paste the code instead of the example below.
%%
\begin{CCSXML}
<ccs2012>
   <concept>
       <concept_id>10002978.10003018</concept_id>
       <concept_desc>Security and privacy~Database and storage security</concept_desc>
       <concept_significance>500</concept_significance>
       </concept>
   <concept>
       <concept_id>10002951.10002952</concept_id>
       <concept_desc>Information systems~Data management systems</concept_desc>
       <concept_significance>500</concept_significance>
       </concept>
 </ccs2012>
\end{CCSXML}

\ccsdesc[500]{Security and privacy~Database and storage security}
\ccsdesc[500]{Information systems~Data management systems}
% \begin{CCSXML}
% <ccs2012>
%  <concept>
%   <concept_id>00000000.0000000.0000000</concept_id>
%   <concept_desc>Do Not Use This Code, Generate the Correct Terms for Your Paper</concept_desc>
%   <concept_significance>500</concept_significance>
%  </concept>
%  <concept>
%   <concept_id>00000000.00000000.00000000</concept_id>
%   <concept_desc>Do Not Use This Code, Generate the Correct Terms for Your Paper</concept_desc>
%   <concept_significance>300</concept_significance>
%  </concept>
%  <concept>
%   <concept_id>00000000.00000000.00000000</concept_id>
%   <concept_desc>Do Not Use This Code, Generate the Correct Terms for Your Paper</concept_desc>
%   <concept_significance>100</concept_significance>
%  </concept>
%  <concept>
%   <concept_id>00000000.00000000.00000000</concept_id>
%   <concept_desc>Do Not Use This Code, Generate the Correct Terms for Your Paper</concept_desc>
%   <concept_significance>100</concept_significance>
%  </concept>
% </ccs2012>
% \end{CCSXML}

% \ccsdesc[500]{Do Not Use This Code~Generate the Correct Terms for Your Paper}
% \ccsdesc[300]{Do Not Use This Code~Generate the Correct Terms for Your Paper}
% \ccsdesc{Do Not Use This Code~Generate the Correct Terms for Your Paper}
% \ccsdesc[100]{Do Not Use This Code~Generate the Correct Terms for Your Paper}

%%
%% Keywords. The author(s) should pick words that accurately describe
%% the work being presented. Separate the keywords with commas.
\keywords{Differential privacy, shuffle model, poisoning attacks}

% \received{20 February 2007}
% \received[revised]{12 March 2009}
% \received[accepted]{5 June 2009}
\received{July 2025}
\received[revised]{October 2025}
\received[accepted]{November 2025}

%%
%% This command processes the author and affiliation and title
%% information and builds the first part of the formatted document.
\maketitle

\section{Introduction}

A key challenge in data analytics is to obtain meaningful insights while preserving privacy. \textit{Differential privacy} (DP)~\cite{dwork2006calibrating}, the gold standard for private data analysis, achieves this by adding noise to query results, preventing inference of individual data.
Traditional DP model assumes a trusted curator who collects data and privatizes results before release, known as \textit{central-DP}. However, real-world scenarios often lack a universally trusted curator, leading to the adoption of \textit{local-DP} \cite{kasiviswanathan2011can}, where users privatize data locally before sharing it. 
Although local-DP offers stronger privacy guarantees, it incurs higher errors. For instance, in bit counting, central-DP achieves an error of $O(1/\varepsilon)$ by adding Laplace noise of scale $1/\varepsilon$ \cite{dwork2006calibrating}, where $\varepsilon$ is a privacy budget to control the privacy loss (see Section~\ref{sec:Preliminary} for more details).
In contrast, the optimal error under local-DP has been proven to be $O(\sqrt{n}/\varepsilon)$~\cite{beimel2008distributed,chan2012optimal}.

To balance privacy and utility, shuffle-DP~\cite{bittau2017prochlo,cheu2019distributed,borja2019the,erlingsson2019amplification} was introduced, leveraging a trusted shuffler to anonymize user messages before analysis.
The process consists of three steps: (1) Each user privatizes its own data using a randomizer $\mathcal{R}(\cdot)$. (2) A trusted shuffler $\mathcal{S}$ collects all users' responses, randomly permutes them, and passes them to a potentially untrusted analyzer $\mathcal{A}$. (3) The analyzer conducts further analysis.
By introducing additional randomness through shuffling, shuffle-DP reduces the required noise per user, thus improving utility. In bit counting, each user adds only $O(1/n\varepsilon)$ extra noise~\cite{ghazi2021differentially}, which leads to a total error of $O(1/\varepsilon)$ and matches the central-DP error. However, unlike central-DP which places all trust in the analyzer, shuffle-DP only requires a trusted shuffler, which can be easily implemented via anonymous communication channels (e.g., mix networks~\cite{chaum1982untraceable, danezis2003Mixminion}, onion routing~\cite{reed1998onion, roger2004tor}) or trusted nodes/hardware~\cite{reiter1998crowds, bittau2017prochlo}).
Due to its strong privacy-utility trade-off, shuffle-DP has been widely studied in fundamental problems such as 
sum aggregation~\cite{ghazi2021differentially, balle2020privatesum, ghazi2024pure}, frequency estimation~\cite{luo2022frequency, ghazi2020private, badih2021on, cheu2022differentially}, distinct counting~\cite{chen21distinct}, as well as in advanced problems such as triangle counting~\cite{imola2022differentially}.

\subsection{Poisoning Attack}
\label{sec:intro_poison}
However, traditional shuffle-DP protocols rely on a strong, often implicit trust assumption: while the analyzer may be adversarial, all users are honest and faithfully follow the protocol.
This setting does not always align with real-world scenarios. In practice, some users themselves might act maliciously, aiming to disrupt the shuffle-DP protocol. 
The adversarial behavior is known as \textit{poisoning attack}~\cite{barreno2006can, cheu2021manipulation}. 
More precisely, a poisoning attacker corrupts some users in the protocol and acts maliciously with two primary objectives: (1) to break privacy, and (2) to destroy the utility of analytical results. 

For the privacy breaking, since the final result aggregates contributions from all users, the attacker can compromise the privacy mechanism by circumventing the required noise. For example, if half of the users are corrupted and withhold noise, the effective privacy protection is reduced by half.
Mitigating such privacy issues has been extensively studied, known as \textit{robust shuffle-DP} \cite{balcer2021connecting}. Existing approaches~\cite{balcer2021connecting} suggest that honest participants introduce additional noise to counteract the noise omitted by corrupted users. Specifically, in the above example, each honest user adjusts the noise scale by doubling it to account for the lack of noise contributed by the remaining half, the corrupted users.

For the utility destroying, since shuffle-DP allows each user to send additional noise, attackers can manipulate the aggregation results by sending an excessive number of messages. In the bit counting protocol~\cite{ghazi2021differentially}, each attacker can inject up to $O(n)$ noisy bits without being detected by the analyzer in the worst case,
significantly compromising the utility of the final result (see Section~\ref{sec:Threat Model} for more details).  
Astute readers may find that such an attack is a unique challenge in shuffle-DP.  
Under local-DP, each message is linked to a real user identity, allowing the analyzer to verify whether messages are reasonable. However, under shuffle-DP, since messages are anonymous, attackers can impersonate honest users and send excessive noise messages, but the analyzer cannot distinguish these poisoned messages from honest ones.

\begin{table}[t]
    \centering
    % \small
    \resizebox{\linewidth}{!}{
        \begin{tabular}{c|c||c|c||c|c|c|c}
            \toprule
            \multirow{2}{*}{Problem} & \multirow{2}{*}{Protocol} & \multirow{2}{*}{Detection} & \multirow{2}{*}{Recovery} & \multirow{2}{*}{\#Messages/user} & \multirow{2}{*}{\#Bits/message} & \multicolumn{2}{c}{Error} \\
            & & & & & & w/o attacker & w/ attacker \\
            \midrule \midrule
            \multirow{3}{*}{Bit counting} & GKMPS \cite{ghazi2021differentially} & $\times$ & $\times$ &  $1+O(\log n / n)$ & 2 & $O(1)$ & $\infty$ \\
            & CSUZZ \cite{cheu2019distributed} & \checkmark & $\times$ & 1 & 1 & $O(\log n)$ & $\infty$\\
            & Ours + GKMPS & \checkmark & \checkmark & $O(\log n)$ & $O(\log n)$ & $O(1)$ & $O(\log^{3} n)$ \\
            \midrule
        
            \multirow{5}{*}{Summation} & GKMPS \cite{ghazi2021differentially} & $\times$ & $\times$ & $1+O(\log^3 n \cdot U/n)$ & $O(\log n)$ & $O(U)$ & $\infty$ \\
             & BBGN(IKOS) \cite{balle2020privatesum} & $\times$ & $\times$ & $O(1)$ & $O(\log n)$ & $O(U)$ & $\infty$ \\ 
             & BBGN(recursive) \cite{balle2020privatesum}  & \checkmark & $\times$ & $O(\log \log n)$ & $O(\log n)$ & $O(U\sqrt{\log n} \cdot \log \log n )$ & $\infty$ \\
             & Ours + GKMPS & \checkmark & \checkmark & $O(\log^2 n \cdot U)$ & $O(\log n)$ & $O(U)$ & $O(U \log^{3} n) $  \\ 
             & Ours + BBGN(IKOS) & \checkmark & \checkmark & $O(\log n \cdot \log \log n)$ & $O(\log n)$ & $O(U)$ & $O(U \log^{3} n) $ \\ 
            \midrule
        
            \multirow{5}{*}{\shortstack{Frequency \\ estimation}} & GKMPS \cite{ghazi2020private, ghazi2021differentially} & $\times$ & $\times$ & $1+O(\log n \cdot U/n)$ & $O(\log n)$ & $O(\log n)$ & $\infty$ \\
            & GGKPV \cite{badih2021on} & $\times$ & $\times$ & $O(\log n)$ & $O(\log^2 n)$ & $O(\log n)$ & $\infty$ \\
             & LWY \cite{luo2022frequency} & $\times$ & $\times$ & $O(1)$ & $O(\log n)$ & $O(\log n)$ & $\infty$ \\
             & CZ \cite{cheu2022differentially} & \checkmark & \checkmark & 2 & $U$ & $O(\log n)$ & $O(\log n)$ \\ 
             & Ours + LWY & \checkmark & \checkmark & $O(\log n)$ & $O(\log n)$ & $O(\log n)$ & $O(\log^{3} n)$ \\
            \bottomrule
        \end{tabular}
    }
    \caption{Comparison between our results and prior works for bit counting, summation, and frequency estimation under shuffle-DP, assuming $\varepsilon = \beta = \Theta(1)$ and $\log n = \Theta(\log(1/\delta)) = \Theta(\log U)$. We use $\ell_\infty$ error for frequency estimation.}
    \label{tab:contribution}
\end{table}

Such an issue has gained significant attention recently.
Some protocols~\cite{balle2020privatesum,cheu2022differentially, badih2021on} control the number of messages each user can send, enabling the detection of corrupted users who exceed the expected count.
For example, \cite{balle2020privatesum} studies the sum aggregation problem and limits the number of messages from a single user to $c = O(\log \log n)$.
This allows the analyzer to identify poisoning attacks by counting the total number of received messages (i.e., equal to $c\cdot n$) and verifying whether each message falls in a reasonable range. \underline{However, this approach only enables attack detection, not recovery:} the analyzer can recognize the happening of poisoning attacks, but cannot extract any meaningful analytical results from the poisoned messages. Consequently, corrupted users still succeed in their objective of destroying utility.
Balle et al.~\cite{cheu2022differentially} propose a shuffle-DP protocol for frequency estimation that incorporates a blind signature scheme to restrict messages during communication. This approach enables the recovery of meaningful results by filtering out messages with invalid signatures. 
However, the protocol is limited to scenarios where each user sends a fixed number of messages, and each message is drawn from a well-bounded domain. Consequently, it cannot be generalized to most existing shuffle-DP protocols, as they do not adhere to these constraints.

At first glance, there are several  straightforward solutions to these questions: 
(i) Trusted shuffler with authentication: one could use the trusted shuffler to perform standard authentication and message‐integrity checks, thereby preventing corrupted users from injecting arbitrarily many malicious messages.
(ii) Local-DP protocols followed by a shuffler: another idea is to have each user first apply a local-DP mechanism to their data and then send the perturbed result to the shuffler, thus benefiting from privacy amplification.
{(iii) Two-round strategy: one may identify corrupted users based on malicious behaviors in an initial run, and then re‐execute the shuffle‐DP protocol with only the remaining honest users.}
{However, we can easily show all of these approaches to fail in practice:
the first requires the shuffler to support authentication, which is unrealistic to implement and offers limited utility even if feasible.
The second is only effective under a very strict privacy regime, namely when $\varepsilon = O(1/\sqrt{n})$, and cannot recover meaningful results once corrupted users inject large amounts of noise.}
{The third fails against adaptive corrupted users who behave honestly in the first round and attack in the second, leading to the same error as the shuffle-DP protocol without any defense.}
{A more detailed discussion of these limitations is provided in Section~\ref{sec:baseline_limit}.
}

\textbf{Problem Statement.~}
Therefore, the question comes:
\textit{Does there exist a general framework to defend against poisoning attacks in shuffle-DP, ensuring that the defense mechanism not only detects the attacks but also recovers meaningful analytical results from the poisoned messages?}

\subsection{Our Results}
In this work, we answer the above question affirmatively. We present a general framework for arbitrary union-preserving queries. It can transform any shuffle-DP protocol vulnerable to poisoning attacks into one that can defend against such attacks.
Our approach develops a hierarchical defense framework based on the given shuffle-DP protocol. Specifically, the framework organizes the $n$ users into a binary tree structure, where each leaf node corresponds to a single user, and each non-leaf node represents the union of users from its child nodes. Users within each node independently and in parallel execute the shuffle-DP protocol once.
To detect utility-destroying attacks, the analyzer compares the output of each node with the aggregated results of its child nodes. If the deviation exceeds a predefined threshold, the node’s output is flagged as poisoned. The final query result is recovered by recursively replacing poisoned results with aggregated outputs from valid child nodes' results.
To defend against privacy-breaking attacks, honest users adjust their noise scale based on the estimated fraction of corrupted users in the population, thereby compensating for the noise lost due to malicious behaviors.
With this idea, we can extend existing shuffle-DP protocols to defend against poisoning attacks.  

More precisely, our contributions are shown below:
\begin{itemize}
    \item We first consider the single-attacker setting and propose a general framework that extends any shuffle-DP protocol for union-preserving queries to defend against poisoning attacks. In the non-adversarial case (i.e., when no poisoning attacker is present), the framework preserves asymptotically equivalent error to the original shuffle-DP protocol. When a single attacker is present, the error increases by at most a polylogarithmic factor. Regarding communication cost, for most queries, it also increases by at most a polylogarithmic factor. 
    As a result, we propose the first shuffle-DP mechanisms that defend against poisoning attacks for a range of tasks, including bit counting and summation
    and our results
     for most fundamental queries are summarized in Table~\ref{tab:contribution}.
    \item We further extend our framework to support the multi-attacker setting. Compared to the single-attacker case, the error increases by an additional factor of $k$, where $k$ is the number of attackers, and the communication cost remains unchanged. Additionally, we propose one strategy to further optimize communication efficiency.
    \item We conduct comprehensive empirical evaluations\footnote{The code is available at \url{https://github.com/Whelsea/DefenseShuffleDP}} across three fundamental query tasks: bit counting, summation, and frequency estimation, using both synthetic and real-world datasets. 
    Our experiments compare the proposed framework with state-of-the-art (SOTA) protocols under both non-adversarial and adversarial settings, evaluating both utility and efficiency. 
    The empirical results consistently align with our theoretical analysis.
\end{itemize}

\subsection{Related Work}
\label{sec:Related Work}
{
DP has made a significant impact in the data management community~\cite{dong2022r2t,zhang2017privbayes, kifer2011no}. Early studies primarily focus on the central-DP model~\cite{dwork2006calibrating, hay2010hierarchical, mcsherry2009privacy, qardaji2013understanding, hay2016principled}, where a trusted curator has access to the raw data. To address trust concerns in distributed environments, local-DP was introduced and widely explored~\cite{cormode2018marginal, li2024local, zhang2025federated, he2024common,kulkarni2019answering, li2020estimating, ren2022ldp, he2025robust, yu2025privrm}. However, local-DP often suffers from an $O(\sqrt{n})$ utility loss compared to central-DP~\cite{chan2012optimal}. To strike a balance between these two extremes, shuffle-DP was introduced, which works in a distributed setting and matches central-DP in queries like count and sum~\cite{10.14778/3424573.3424576,rm2}.
}

The anonymous communication was first introduced in \cite{ishai06cryptography}, which solves the secure summation problem in $O(\log n)$ messages. Combining this idea with differential privacy \cite{bittau2017prochlo} has led to the development of the shuffle-DP model.
Since it reaches the sweet spot between privacy and utility, shuffle-DP has been widely studied in statistical analysis.
\cite{balle2020privatesum} improved the message complexity of shuffle-DP summation protocol to $O(1)$ and \cite{ghazi2021differentially} further improved to $1+o(1)$, both achieving central-DP error.
\cite{ghazi2020private, cheu2022differentially, luo2022frequency, badih2021on} developed shuffle-DP protocols for the frequency estimation problem that also achieve central-DP error.
% , where \cite{luo2022frequency} achieved the optimal $1+o(1)$ communication costs.
\cite{badih2021on, rm2} extended the frequency estimation protocol to range counting queries.
% and \cite{rm2} reduced the communication cost using real messages reduction mechanism.
All these fundamental problems in shuffle-DP can reach central-DP error except distinct counting, where \cite{chen21distinct} proved and reached the lower bounds of both shuffle-DP (i.e., $O(\sqrt{n})$) and local-DP (i.e., $O(n)$).

Poisoning attacks have also been widely studied in different areas. \cite{cheu2021manipulation, cao2021data, li2023fine, wu2022poisoning, tong2024data} studies the attacks under local-DP and \cite{barreno2006can, battista2012poisoning, fang2020local, tolpegin2020data} studies them under machine learning. 
In shuffle-DP, despite the concept of robust shuffle-DP \cite{balcer2021connecting}, \cite{cheu2022differentially} studies the same behaviors in the frequency estimation problem with strict constraints.
These types of behaviors are also studied in \textit{secure multi-party computation} (MPC), known as malicious security. Unlike semi-honest adversaries who follow the protocol honestly but try to learn as much information from the transcript, malicious adversaries aim to destroy the protocol in different aspects. There are several general frameworks to defend against them using cryptographic methods, including zero-knowledge proof \cite{goldreich1987how} and authenticated secret-sharing/garbling \cite{wang2017malicious, ivan2012malicious, bendlin2011malicious}. However, these methods rely on end-to-end communication and require multiple rounds, which are not easily extended to the shuffle-DP model which has only anonymous communication channels and aims for single-round communication.

\section{Preliminary}
\label{sec:Preliminary}

%\wei{Add some subscriptions for Q for different functions. You can reuse it.}

\subsection{Notation}
\label{sec:notation} 
Let multiset $D = \{x_1,x_2,\dots,x_n\}$ be the input dataset, where each $x_i\in \mathcal{X}$ can be either a single element or a vector.
Given a vector $x$, we use $||x||_{\ell_p}$ to denote its $\ell_p$ norm and define its $\ell_p$ distance to a set of vectors $S$ as:
\[\mathrm{dis}_{\ell_p}(x,S) = \min\{\|x-y\|_{\ell_p}:y\in S\}.\]

The query $Q:\mathcal{X}^{n}\rightarrow \mathcal{Y}$ is a \textit{union-preserving query} iff for any $D_1\in \mathcal{X}^{n_1}$ and $D_2\in \mathcal{X}^{n_2}$, we have $Q(D_1\uplus D_2) = Q(D_1) + Q(D_2)$. Some common union-preserving queries are listed below.

\begin{enumerate}
    \item Bit counting: $Q_{\text{count}}(D) = \sum_{i=1}^n x_i$ where $\mathcal{X} = \{0, 1\}$;
    \item Summation: $Q_{\text{sum}}(D) = \sum_{i=1}^n x_i$ where $\mathcal{X} = \{0, 1, \cdots, U\}$;
    \item Frequency estimation: $Q_{\text{hist}}(D) = ( \sum_{i=1}^n \mathbb{I}{[x_i=j]} )_{j=0}^{U}$; % where $\mathcal{X} = \{0, 1, \cdots, U\}$;
    \item Range counting: $Q_{\text{range}}(D) = ( \sum_{i=1}^n \mathbb{I}{[l \leq x_i \leq r]} )_{0 \leq l \leq r < U}$. %where $\mathcal{X} = \{0, 1, \cdots, U\}$;
    % \item Max: $Q_{\text{max}}(D) = \max_{i=1}^n \{x_i\}$ where $\mathcal{X} = \{0, 1, \cdots, U-1\}$;
\end{enumerate}
Unless otherwise specified, all queries considered in this paper are assumed to be union-preserving.

Let $\mathrm{Range}(Q,n) = \{Q(D),D\in\mathcal{X}^n\}$ be the output set of a query $Q$ with all possible inputs and $\mathrm{\gamma}_{\ell_p}(Q,n)$ be the diameter of $\mathrm{Range}(Q,n)$ in $\ell_p$ metric:
\[\mathrm{\gamma}_{\ell_p}(Q,n) =\max \big\{\|Q(D)-Q(D')\|_{\ell_p},D,D'\in \mathcal{X}^n\big\}.\]

For clarity, key notations used throughout this paper are summarized in Table~\ref{tab:notation}.
\begin{table}[t]
\small
    \centering
    {\begin{tabular}{c|c}
    \toprule
      Notation   & Meaning \\
      \midrule
     % $\mathcal{X}$ & Data domain \\
      $x_i$ & Data of user $i$ \\
      $D$ & Input dataset \\
      $n$ & \#Users (equivalently, data size) \\
      $k$ & True \#corrupted users \\
      $\hat{k}$ & Public upper bound of $k$ \\
      $U$ & Domain size \\
      $Q$ & Union-preserving query \\
      $Q(D)$ & True value of $Q$ on $D$ \\
      $\mathrm{Range}(Q,n)$ & Output set of $Q$ with $n$ users \\
      $\mathrm{dis}_{\ell_p}(x, S)$ & $\ell_p$-distance from vector $x$ to set $S$ \\
      $\gamma_{\ell_p}(Q,n)$ & $\ell_p$-diameter of $\mathrm{Range}(Q,n)$ \\
      $\mathcal{R}, \mathcal{S}, \mathcal{A}$ & Randomizer, shuffler, analyzer \\
      $\mathcal{P}_Q$ = ($\mathcal{R}, \mathcal{S}, \mathcal{A}$) & Shuffle-DP protocol for $Q$ \\
      $\mathcal{P}_Q(D)$ & Query result of $\mathcal{P}_Q$ on $D$ \\
      $Y^{(r)}_i$ & Messages generated by user $i$ at level $r$ via $\mathcal{R}$ \\
      $\mathcal{S}^{(r)}_g$ & Shuffler assigned to group $g$ at level $r$ \\
      $Z^{(r)}_g$ & Shuffled messages of group $g$ at level $r$ via $\mathcal{S}^{(r)}_g$ \\
      $\varepsilon^{(r)},\delta^{(r)}$ & Privacy parameters at level $r$\\
      $\beta^{(r)}$ & High probablity parameter at level $r$ \\
      $\tilde{Q}^{(r)}_g$ & Query result of group $g$ at level $r$ \\
      $\mathrm{Error}_{\ell_p}(P_Q,\varepsilon,\delta,\beta)$ & $\ell_p$-error guarantee of $\mathcal{P}_Q$ \\
      $\mathrm{Msg}(P_Q,\varepsilon,\delta,n)$ & Expected \#messages per user under $\mathcal{P}_Q$ \\
      $\mathrm{Bit}(P_Q,\varepsilon,\delta,U,n)$ & Expected \#bits per message under $\mathcal{P}_Q$ \\
    \bottomrule
    \end{tabular}}
    \caption{{Notations used in the paper.}}
    \label{tab:notation}
\end{table}

\subsection{Differential Privacy}
\begin{definition}[Differential Privacy]
\label{def: DP}
For some $\varepsilon > 0$ and $0 \leq \delta < n^{-\Omega(1)}$, a randomized mechanism $\mathcal{M} : \mathcal{X}^{n} \to \mathcal{Y}$ is $(\varepsilon, \delta)$-differentially private if for any two neighboring datasets $D \sim D'$ (i.e., $D$ and $D'$ differ by a single element), $\mathcal{M}(D)$ and $\mathcal{M}(D')$ are $(\varepsilon, \delta)$-indistinguishable. That is, for any subset of outputs $Y \subseteq \mathcal{Y}$,
\[
\Pr[\mathcal{M}(D) \in Y ] \leq e^\varepsilon \cdot \Pr[\mathcal{M}(D') \in Y ] + \delta.
\]
\end{definition}
In practice, $\varepsilon$ is usually a constant between $0.1$ and $10$, and $\delta$ should be much smaller than $1/n$.
Below are some commonly used DP properties.

\begin{lemma} [Post Processing~\cite{dwork2014algorithmic}]
\label{lem:Post Processing}
If $\mathcal{M}: \mathcal{X} \to \mathcal{Y}$ satisfies $(\varepsilon, \delta)$-DP and $\mathcal{M'}: \mathcal{Y} \to \mathcal{Z}$ is any randomized mechanism, then $\mathcal{M'}(\mathcal{M}(D))$ satisfies $(\varepsilon, \delta)$-DP.
\end{lemma}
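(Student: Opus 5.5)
Post-processing follows directly from the definition. I will first treat a deterministic post-processing map $f:\mathcal{Y}\to\mathcal{Z}$, and then lift the result to randomized $\mathcal{M}'$ by averaging over the internal randomness of $\mathcal{M}'$.

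\textbf{Deterministic case.} Fix neighboring datasets $D \sim D'$ and an arbitrary event $Z \subseteq \mathcal{Z}$. Consider the preimage $Y = f^{-1}(Z) \subseteq \mathcal{Y}$. By construction, $f(\mathcal{M}(D)) \in Z$ if and only if $\mathcal{M}(D)\in Y$. Applying the $(\varepsilon,\delta)$-indistinguishability of Definition~\ref{def: DP} to this particular $Y$ gives
\[
\Pr[f(\mathcal{M}(D))\in Z]=\Pr[\mathcal{M}(D)\in Y]\le e^{\varepsilon}\Pr[\mathcal{M}(D')\in Y]+\delta=e^{\varepsilon}\Pr[f(\mathcal{M}(D'))\in Z]+\delta .
\]
Since $Z$ was arbitrary, $f\circ\mathcal{M}$ is $(\varepsilon,\delta)$-DP. (For general output spaces one needs $f$ measurable so that $Y$ is an event; I would assume this implicitly, as the paper does.)

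\textbf{Randomized case.} Write $\mathcal{M}'(y)=f_R(y)$, where $R$ is random coins drawn independently of $\mathcal{M}$ from a distribution $\mu$ that does not depend on the data. For each fixed value $r$ of the coins, $f_r$ is deterministic, so the first case gives
\[
\Pr[f_r(\mathcal{M}(D))\in Z]\le e^{\varepsilon}\Pr[f_r(\mathcal{M}(D'))\in Z]+\delta .
\]
Integrating both sides against $\mu(dr)$ preserves the inequality. The additive term remains exactly $\delta$ because $\int \delta\, d\mu=\delta$. This yields the claim for $\mathcal{M}'(\mathcal{M}(D))$.

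\textbf{Main obstacle.} The only delicate step is justifying the reduction to deterministic maps. It requires that the randomness of $\mathcal{M}'$ be independent of $D$, and, in general measurable spaces, that $(r,y)\mapsto f_r(y)$ be jointly measurable so that Fubini's theorem applies. The rest of the argument is a direct application of the definition.
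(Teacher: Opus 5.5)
Your proof is correct and is essentially the standard argument. The paper gives no proof of its own and simply cites Dwork and Roth, whose proof (their Proposition 2.1) also handles a deterministic map via preimages and then treats a randomized $\mathcal{M}'$ as a mixture of deterministic maps; your caveats about independence of the coins and measurability are the only fine print.
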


\begin{lemma} [Basic Composition~\cite{dwork2014algorithmic}]
\label{lem:Sequential Composition}
If $ \mathcal{M} $ is a (possibly adaptive) composition of differentially private mechanisms $ \mathcal{M}_1,\mathcal{M}_2,\dots, \mathcal{M}_k$, where each $ \mathcal{M}_i$ satisfies $(\varepsilon, \delta)$-DP, then $ \mathcal{M} $ satisfies $(k\varepsilon, k\delta)$-DP.
    
% \begin{itemize}
%     \item $\varepsilon' = k\varepsilon$ and $\delta' = k\delta$ \hfill [Basic Composition]
%     \item $\varepsilon' = \varepsilon \cdot \sqrt{2k \log \frac{1}{\delta''} + k\varepsilon(e^\varepsilon - 1)}$ and $\delta' = k\delta + \delta''$ for any $\delta'' > 0$ \hfill [Advanced Composition]

% \end{itemize}

\end{lemma}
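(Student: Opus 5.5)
We prove the Basic Composition statement (Lemma~\ref{lem:Sequential Composition}) by induction on $k$. The main work is the two-mechanism adaptive case. Write $\mathcal{M}(D) = (Y_1,\dots,Y_k)$, where $Y_i = \mathcal{M}_i(D; y_1,\dots,y_{i-1})$. Here $\mathcal{M}_i$ may depend on the previous outputs, and for every fixed prefix it is $(\varepsilon,\delta)$-DP as a function of $D$. Fix neighboring $D \sim D'$ and an output set $Y$. It suffices to show that an $(\varepsilon_1,\delta_1)$-DP mechanism followed adaptively by an $(\varepsilon_2,\delta_2)$-DP mechanism is $(\varepsilon_1+\varepsilon_2,\delta_1+\delta_2)$-DP. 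Iterating this $k-1$ times then gives $(k\varepsilon,k\delta)$.

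For the two-step case, let $A = \mathcal{M}_1(D)$, $A' = \mathcal{M}_1(D')$, and write
\[
f(y_1) = \Pr[\mathcal{M}_2(D;y_1) \in Y_{y_1}], \qquad f'(y_1) = \Pr[\mathcal{M}_2(D';y_1) \in Y_{y_1}],
\]
where $Y_{y_1}$ is the section of $Y$ at $y_1$. Then $\Pr[\mathcal{M}(D)\in Y] = \mathbb{E}[f(A)]$.

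1. By the DP guarantee of the second mechanism, $f \le \min(1, e^{\varepsilon_2} f' + \delta_2) \le \min(1, e^{\varepsilon_2} f') + \delta_2$ pointwise. Hence
\[
\mathbb{E}[f(A)] \le \mathbb{E}[g(A)] + \delta_2, \qquad g = \min(1, e^{\varepsilon_2} f'),
\]
and $g$ takes values in $[0,1]$.

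2. The $(\varepsilon_1,\delta_1)$ guarantee on sets extends to $[0,1]$-valued functions by the layer-cake identity $\mathbb{E}[g(A)] = \int_0^1 \Pr[g(A) > t]\,dt$. Applying DP to each level set $\{g > t\}$ and integrating gives $\mathbb{E}[g(A)] \le e^{\varepsilon_1}\mathbb{E}[g(A')] + \delta_1$.

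3. Since $g \le e^{\varepsilon_2} f'$, we get $e^{\varepsilon_1}\mathbb{E}[g(A')] \le e^{\varepsilon_1+\varepsilon_2}\mathbb{E}[f'(A')] = e^{\varepsilon_1+\varepsilon_2}\Pr[\mathcal{M}(D')\in Y]$. Chaining steps 1–3 yields the $(\varepsilon_1+\varepsilon_2,\delta_1+\delta_2)$ bound.

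The main obstacle is handling $\delta$ under adaptivity. We cannot naively multiply the two DP inequalities, because the additive $\delta_2$ would get multiplied by $e^{\varepsilon_1}$. The fix is to peel off $\delta_2$ pointwise before applying the first guarantee, and to truncate at $1$ so that $g$ stays in $[0,1]$ and the layer-cake extension of step 2 is valid. Measurability of the sections $Y_{y_1}$ is assumed throughout, as is standard.

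For the induction, treat $(\mathcal{M}_1,\dots,\mathcal{M}_{k-1})$ as a single mechanism, which is $((k-1)\varepsilon,(k-1)\delta)$-DP by the induction hypothesis. Treat $\mathcal{M}_k$, conditioned on the whole prefix, as the second mechanism. The two-step result then gives $(k\varepsilon,k\delta)$.
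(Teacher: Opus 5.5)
Your proof is correct. The paper gives no proof of this lemma; it is imported from Dwork and Roth, and your argument is essentially the standard proof from that reference. The key moves are the same: truncate the second mechanism's acceptance probability at $1$, peel off $\delta_2$ pointwise, and only then invoke the first mechanism's guarantee, so that $\delta_2$ is never multiplied by $e^{\varepsilon_1}$.

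The only difference is your step 2. As I recall the reference, it absorbs $\delta_1$ differently: it bounds the law of $\mathcal{M}_1(D)$ by $e^{\varepsilon_1}$ times the law of $\mathcal{M}_1(D')$ plus an excess measure of total mass at most $\delta_1$, and then integrates the $[0,1]$-valued function against that bound. Your layer-cake identity instead applies the set-wise DP inequality directly to each level set $\{g>t\}$. This avoids constructing the excess measure and needs only measurability of $g$, while giving the same $(\varepsilon_1+\varepsilon_2,\delta_1+\delta_2)$ bound.
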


\begin{lemma} [Parallel Composition~\cite{mcsherry2009privacy}]
\label{lem:Parallel Composition}
    Let $ \mathcal{X}_1, \dots, \mathcal{X}_k $ be pairwise disjoint subdomains of $ \mathcal{X} $, and each $ \mathcal{M}_i : \mathcal{X}^n_i \to \mathcal{Y} $ be an $(\varepsilon, \delta)$-DP mechanism. Then the mechanism $ \mathcal{M}(D) := (\mathcal{M}_1\left(D \cap \mathcal{X}_1 \right)$, $\dots$, $\mathcal{M}_k \left(D \cap \mathcal{X}_k \right))$ satisfies $(\varepsilon, \delta)$-DP.
\end{lemma}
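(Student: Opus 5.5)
The statement to prove is Lemma~\ref{lem:Parallel Composition}: if $\mathcal{M}_1,\dots,\mathcal{M}_k$ are each $(\varepsilon,\delta)$-DP and act on the pairwise disjoint pieces $D\cap\mathcal{X}_1,\dots,D\cap\mathcal{X}_k$, then the joint mechanism $\mathcal{M}$ is $(\varepsilon,\delta)$-DP. The plan is to reduce everything to a single coordinate: a neighboring pair $D\sim D'$ changes at most one sub-dataset, and the mechanisms run with independent randomness. I will treat the subdomains as partitioning $\mathcal{X}$, or else add a trivial part for $\mathcal{X}\setminus\bigcup_i\mathcal{X}_i$ that no mechanism reads.

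\textbf{Step 1: locate the change.} Fix neighbors $D\sim D'$, which differ by a single element. In the replacement form, some $x\in D$ is replaced by $x'$. If $x$ and $x'$ lie in the same $\mathcal{X}_j$, then only $D\cap\mathcal{X}_j$ changes, and it changes to a neighboring dataset. If they lie in different parts $\mathcal{X}_j,\mathcal{X}_l$, two pieces change, each by one insertion or deletion. The clean statement therefore holds for add/remove neighbors, where exactly one index $j$ has $D\cap\mathcal{X}_j\sim D'\cap\mathcal{X}_j$ and $D\cap\mathcal{X}_i=D'\cap\mathcal{X}_i$ for every $i\neq j$. I will adopt that neighboring notion, or equivalently read the definition as "differ in one element within one subdomain." I regard this as the only subtle point: under replacement across parts one would instead get $(2\varepsilon,2\delta)$.

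\textbf{Step 2: condition on the unchanged coordinates.} The randomness of the $\mathcal{M}_i$ is independent. Write $A=\mathcal{M}_j(D\cap\mathcal{X}_j)$ and $B=(\mathcal{M}_i(D\cap\mathcal{X}_i))_{i\neq j}$. Then $B$ has the same distribution under $D$ and $D'$ and is independent of $A$. Take any measurable set $Y$ of joint outputs and, for each value $b$ of $B$, let $Y_b$ be its section. Then
\[
\Pr[\mathcal{M}(D)\in Y]=\mathbb{E}_b\bigl[\Pr[A\in Y_b]\bigr]\le \mathbb{E}_b\bigl[e^{\varepsilon}\Pr[A'\in Y_b]+\delta\bigr]=e^{\varepsilon}\Pr[\mathcal{M}(D')\in Y]+\delta ,
\]
where $A'=\mathcal{M}_j(D'\cap\mathcal{X}_j)$. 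The inequality applies the $(\varepsilon,\delta)$-DP of $\mathcal{M}_j$ pointwise for each $b$. Because $\delta$ is additive and the expectation is over a probability measure, no loss accumulates. Measurability of the sections follows from Fubini's theorem on the product space.

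\textbf{Step 3: conclude.} Since $D\sim D'$ and $Y$ were arbitrary, $\mathcal{M}$ is $(\varepsilon,\delta)$-DP by Definition~\ref{def: DP}. The step requiring the most care is Step 1, namely fixing the neighbor relation so that exactly one coordinate differs. Step 2 is a routine Fubini conditioning argument.
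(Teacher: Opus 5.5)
Your proof is correct, but it takes a different route from the source the paper relies on. The paper gives no proof of its own and cites McSherry~\cite{mcsherry2009privacy}, whose argument is for pure $\varepsilon$-DP under the symmetric-difference metric. There the probability of each joint outcome factors over the independent parts. One multiplies the per-part ratios $e^{\varepsilon|A_i\oplus B_i|}$ and then uses $\sum_i|A_i\oplus B_i|\le|A\oplus B|$.

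That product argument handles an arbitrary pair of datasets at once. It also automatically charges $2\varepsilon$ to a replacement that moves an element across parts, which is exactly your Step~1 caveat. However, it rests on per-outcome likelihood ratios, and these exist only for pure DP. With $\delta>0$ the guarantee holds only for events, and for non-product events one needs exactly your section-and-integrate step, where $\delta$ is integrated against a probability measure and does not accumulate. So your route is what actually delivers the $(\varepsilon,\delta)$ form the paper states. The price is that you must first reduce to neighbors that alter exactly one part.

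Your Step~1 caveat is a genuine defect of the lemma as literally phrased: it pairs a partition of the data domain with the paper's fixed-size, replacement-style neighbors. It does not affect the paper's uses, though. In SUSDP, BSDP and HSDP the parts are disjoint groups of user indices, so replacing one user's value changes exactly one group's input, and your Step~2 applies verbatim.

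One wording point: add/remove neighbors and ``replacement within one subdomain'' are different relations, not equivalent readings. Step~2 needs only that exactly one part changes, to a neighbor in whatever sense $\mathcal{M}_j$ is assumed private.
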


\subsection{Shuffle-DP}
\label{sec:Shuffle-DP}

Different DP models can be captured by Definition~\ref{def: DP} by the formulation of $\mathcal{M}(D)$.
In central-DP, $\mathcal{M}(D)$ represents the output of a trusted curator. In contrast, for local-DP and shuffle-DP, where the analyzer $\mathcal{A}$ is untrusted, $\mathcal{M}(D)$ is defined by the view of $\mathcal{A}$.
Under local-DP, each user $i$ independently privatizes its own data $x_i$ using a local randomizer $\mathcal{R}$ and then sends the randomized response $\mathcal{R}(x_i)$ to $\mathcal{A}$ hence $\mathcal{M}(D)=(\mathcal{R}(x_1), \mathcal{R}(x_2), \dots, \mathcal{R}(x_n))$.
Under shuffle-DP, a trusted shuffler $\mathcal{S}$ randomly permutes the responses before forwarding them to analyzer $\mathcal{A}$, resulting in
\[\mathcal{M}(D) = \mathcal{S} \circ \mathcal{R} (D)= \{\mathcal{R}(x_1), \mathcal{R}(x_2), \dots, \mathcal{R}(x_n)\}.\]

Under shuffle-DP, a protocol which is used to answer query $Q$ is defined as $\mathcal{P}_Q = (\mathcal{R}, \mathcal{S}, \mathcal{A})$, where $\mathcal{R}, \mathcal{S}, \mathcal{A}$ denote the randomizer, shuffler, and analyzer, respectively. The protocol output is given by
\[\mathcal{P}_Q(D)\gets \mathcal{A} \circ \mathcal{S} \circ \mathcal{R} (D) = \mathcal{A}(\mathcal{S}(\mathcal{R}(x_1), \cdots, \mathcal{R}(x_n))).\]
%$\varepsilon$ and $\delta$ are privacy budget and 
$\mathcal{S} \circ \mathcal{R} (D)$ satisfies $(\varepsilon,\delta)$-DP. 
%The parameter $\beta$ controls the error level of the protocol.

\paragraph{Multiple shuffler setting.} 
%In more complex scenarios, multiple shufflers are introduced. 
Some shuffle-DP protocols consider the setting where there exist multiple shufflers $\mathcal{S}$~\cite{balle2020privatesum}.
Here, we have a set of shufflers $\{\mathcal{S}_1, \mathcal{S}_2, \ldots, \mathcal{S}_m\}$ and $n$ users. Each user $i$ can send messages to one or more shufflers $\mathcal{S}$.

\paragraph{Shuffle-DP protocol parameters.} By convention, a shuffle-DP protocol has four parameters: the privacy budgets $\varepsilon$ and $\delta$, the error parameter $\beta$, and the data size $n$. The parameter $\beta$ specifically controls the probability of exceeding the stated error bound. 
More precisely, we use $\err_{\ell_p}(\mathcal{P}_Q, \varepsilon, \delta, \beta)$ to denote the $\ell_p$-norm error of the protocol $\mathcal{P}_Q$. Specifically, for any input dataset $D$, with privacy budgets $\varepsilon, \delta$, with probability at least $1 - \beta$, we have
\[
\left\| \mathcal{P}_Q(D) - Q(D) \right\|_{\ell_p} \leq \err_{\ell_p}(\mathcal{P}_Q, \varepsilon, \delta, \beta).
\]
Notably, the error of existing shuffle-DP protocols for union-preserving queries does not depend on the data size $n$. The exception happens in non-union-preserving queries, such as distinct count~\cite{chen21distinct}, which are beyond the scope of this work.
We quantify the communication cost in two parts. First, $\mathrm{Msg}(\mathcal{P}_Q\allowbreak,\allowbreak \varepsilon\allowbreak,\allowbreak \delta\allowbreak,\allowbreak n)$
denotes the expected number of messages each user sends in protocol $\mathcal{P}$ with a group of $n$ users, i.e., $\mathbb{E}[|\mathcal{R}(x_i)|]$.  
Second, $\mathrm{Bit}(\mathcal{P}_Q,\varepsilon, \delta, U,n)$ denotes the number of bits required to encode each message.  
For all existing shuffle-DP protocols, it is observed that $\err_{\ell_p}(\mathcal{P}_Q, \allowbreak \varepsilon \allowbreak, \delta, \beta)$, $\mathrm{Msg}(\mathcal{P}_Q, \varepsilon, \delta, n)$, and $\mathrm{Bit}(\mathcal{P}_Q\allowbreak,\allowbreak \varepsilon\allowbreak,\allowbreak \delta\allowbreak,\allowbreak U\allowbreak,\allowbreak n)$ are all polynomial functions of their respective parameters. This observation will simplify the analysis in this paper. For example, $\err_{\ell_p}(\mathcal{P}_Q, \varepsilon/c, \delta/c, \beta/c) = O\big(\err_{\ell_p}(\mathcal{P}_Q, \varepsilon, \delta, \beta)\big)$ for any constant $c$.

\begin{example}[Bit Counting Problem $Q_\text{count}$]
\label{exm:Bit counting problem}
We take the bit counting problem ($Q_\text{count}$) as an example.
$\mathrm{Range}(Q_\text{count}\allowbreak,\allowbreak n) = \{\allowbreak0,\allowbreak1,\allowbreak\dots,\allowbreak n\}$, and the diameter $\gamma_{\ell_1}(Q_\text{count},n) = n$. 
{The SOTA solution~\cite{ghazi2021differentially} lets each user $i$ send a single ``$+1$'' bit if $x_i = 1$. In addition, to preserve DP, each user samples a \textbf{random} but bounded number of noisy ``$+1$'' and ``$-1$'' messages. The protocol \cite{ghazi2021differentially} has the following properties:}

\begin{lemma}[Bit Counting Protocol $\mathcal{P}_{Q_{\text{count}}}$~\cite{ghazi2021differentially}]
\label{lem: Correlated Noise for bit counting}
Given $\varepsilon >0, \delta>0$ and $n$, \cite{ghazi2021differentially} solves the bit counting problem under $(\varepsilon,\delta)$-shuffle-DP with the following guarantees:
    \begin{itemize}
        \item Utility guarantee: $\err_{\ell_1}(\mathcal{P}_{Q_{\text{count}}}, \varepsilon, \delta, \beta)=O\left(\tfrac{1}{\varepsilon}\log\tfrac{1}{\beta}\right)$;
        \item Communication cost: $\mathrm{Msg}(\mathcal{P}_{Q_{\text{count}}}, \varepsilon, \delta, n)=1+{O}\left(\tfrac{\log(1/\delta)}{\varepsilon n}\right)$ and $\mathrm{Bit}(\mathcal{P}_{Q_{\text{count}}},\varepsilon, \delta,U,n) = 2$.
        % In expectation, each user sends $1+{O}\left(\tfrac{\log(1/\delta)}{\varepsilon n}\right)$ messages, with each message containing 2 bits.
    \end{itemize}
\end{lemma}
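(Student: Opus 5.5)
The plan is to recall the correlated-noise construction of \cite{ghazi2021differentially} and check its three properties in turn: privacy, utility, and communication. Each user $i$ sends its true bit as a ``$+1$'' message when $x_i=1$. In addition, it samples a noise vector $(z_i^{+},z_i^{-})$. Here $z_i^{-}\sim \mathrm{NB}(r/n,p)$ counts the ``$-1$'' messages, and $z_i^{+}=z_i^{-}+w_i$ counts the ``$+1$'' messages, where $w_i$ is a non-negative, bounded, infinitely divisible share of a discrete noise. We choose $r=\Theta(\log(1/\delta)/\varepsilon)$ and $p=e^{-\Theta(\varepsilon)}$. Infinite divisibility of the negative binomial lets the shares be split across $n$ users, so the aggregate noise has the distribution a central curator would use.

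\textbf{Privacy.} The shuffled multiset is equivalent, by post-processing (Lemma~\ref{lem:Post Processing}), to the pair $(\#(+1),\#(-1))$. This pair equals $\bigl(\sum_i x_i + Z + W,\; Z\bigr)$, where $Z=\sum_i z_i^{-}\sim \mathrm{NB}(r,p)$ and $W=\sum_i w_i$. Changing one $x_i$ shifts only the first coordinate by $1$. We would show that the pair is $(\varepsilon,\delta)$-indistinguishable under such a shift. The argument has two parts. First, the $W$ component behaves like a truncated discrete Laplace-type noise, which gives the $e^{\varepsilon}$ ratio. Second, the negative binomial $Z$ masks the correlation between the two coordinates. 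Its ratio of consecutive probabilities, $\Pr[Z=t+1]/\Pr[Z=t]=p\,(t+r)/(t+1)$, stays within $e^{\pm\varepsilon}$ except on a tail event of mass $\delta$. Bounding that tail via a negative binomial concentration inequality is what forces $r=\Theta(\log(1/\delta)/\varepsilon)$.

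\textbf{Utility and communication.} The analyzer outputs $\#(+1)-\#(-1)=Q_{\text{count}}(D)+W$. The error is therefore exactly $|W|$, and a sub-exponential tail with scale $1/\varepsilon$ gives $|W|=O(\tfrac1\varepsilon\log\tfrac1\beta)$ with probability $1-\beta$; notably, the error is independent of $n$. For communication, each user sends at most one real message plus an expected $\mathbb{E}[z_i^{+}+z_i^{-}]=O(\mathbb{E}[Z]/n)=O\bigl(\tfrac{\log(1/\delta)}{\varepsilon n}\bigr)$ noise messages. Each message is one of $\{+1,-1\}$ and fits in $2$ bits.

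\textbf{Main obstacle.} The hardest step is the privacy analysis. The $\delta$-tail of the negative binomial must be controlled jointly with the bounded shares $w_i$, so that both the $\varepsilon$-ratio and the $\delta$ budget hold with the stated $r$. We would approach this as \cite{ghazi2021differentially} does: fix $Z$, bound the likelihood ratio pointwise, and then integrate over the bad set of $Z$ values.
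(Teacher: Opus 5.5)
The paper does not prove this lemma; it imports it from \cite{ghazi2021differentially}. Your reconstruction of that protocol has a genuine gap at the privacy step. Setting $z_i^{+}=z_i^{-}+w_i$ with $w_i\ge 0$ makes the central noise one-sided. The analyzer sees $\#(+1)-\#(-1)=S+W$ exactly, with $W\ge 0$. Moreover, since $Z$ is independent of $W$ in your construction, the view $(S+Z+W,\,Z)$ is equivalent to $(S+W,\,Z)$. So the negative binomial $Z$ carries no information and has nothing to mask, and your ``main obstacle'' never arises.

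Privacy then rests entirely on shifting a non-negative noise, and that fails. Apply the DP inequality to the event $\{\#(+1)-\#(-1)\le S+m\}$ for neighbours with counts $S$ and $S+1$. This gives $\Pr[W\le m]\le e^{\varepsilon}\Pr[W\le m-1]+\delta$ with $\Pr[W\le -1]=0$, hence $\Pr[W\le m]\le \delta e^{(m+1)\varepsilon}/(e^{\varepsilon}-1)$. So with probability at least $1/2$ you have $W=\Omega(\tfrac1\varepsilon\log\tfrac{\varepsilon}{\delta})$. This contradicts your claim that the error, which you say is exactly $|W|$, is $O(\tfrac1\varepsilon\log\tfrac1\beta)$ independently of $\delta$. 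A shifted truncated discrete Laplace would repair the boundary if the analyzer subtracted the offset, but it cannot be split into $n$ i.i.d.\ shares. Indeed, no non-degenerate infinitely divisible law is bounded, so a ``bounded, infinitely divisible share'' does not exist.

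What is missing is a two-sided central noise whose two halves are hidden by the flooding, which is in essence the construction of \cite{ghazi2021differentially}. Each user sends $\mathrm{NB}(1/n,p)$ noise ``$+1$''s and an independent $\mathrm{NB}(1/n,p)$ noise ``$-1$''s. These aggregate to $A,B\sim\mathrm{NB}(1,p)$, so $A-B$ is discrete Laplace. Each user also sends $\mathrm{NB}(r/n,p)$ cancelling $(+1,-1)$ pairs, aggregating to $F\sim\mathrm{NB}(r,p)$. The view is then $(S+A+F,\,B+F)$, and the second coordinate is now correlated with the output noise through $B$; that correlation is what $F$ must mask.

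Take $p=e^{-\varepsilon/2}$. One direction of the neighbouring pair is handled by shifting $A$ by one, with likelihood ratio $p$, so that direction is pure. The other direction is handled by the injective shift $(B,F)\mapsto(B+1,F-1)$, whose likelihood ratio is $f/(f+r-1)$. This ratio is at least $e^{-\varepsilon}$ unless $F$ falls below about half its mean, an event of probability $e^{-\Omega(r)}$. That is where your consecutive-ratio computation belongs.

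It also shows that $r=\Theta(\log(1/\delta))$ suffices, not $\Theta(\log(1/\delta)/\varepsilon)$. This gives $\mathbb{E}[F]=\Theta(\log(1/\delta)/\varepsilon)$ and hence the stated $1+O(\tfrac{\log(1/\delta)}{\varepsilon n})$ messages, while the error is $|A-B|=O(\tfrac1\varepsilon\log\tfrac1\beta)$. By contrast, with your parameters and the pmf implied by your ratio $p(t+r)/(t+1)$, you get $\mathbb{E}[Z]=rp/(1-p)=\Theta(\log(1/\delta)/\varepsilon^2)$. So your message count is also off by a factor of $1/\varepsilon$.
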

\end{example}

\subsection{Poisoning Attacks}
\label{sec:Threat Model}
Under the shuffle-DP model, we identify two types of adversaries: 

\begin{enumerate}
    \item \textit{Semi-honest analyzer}: it follows the protocol but attempts to infer private information from shuffled messages.
    \item \textit{Poisoning attackers}: they, a.k.a. corrupted users, actively manipulate their inputs to disrupt the protocol.
\end{enumerate}

Any protocol satisfying shuffle-DP effectively defends against semi-honest analyzers. However, poisoning attackers pose additional challenges that require further robustness measures.
Below, we detail their objectives and behaviors.

\subsubsection{Capacity}

The number of corrupted users, denoted by $k$, is bounded by a public parameter $\hat{k}$. While the exact value of $k$ is unknown, we assume that $\hat{k}$ is known to all parties. 
In MPC, $\hat{k}$ is typically set to a large value such as $n/2 - 1$, referred to as the ``honest-majority'' model, where the adversary can corrupt strictly fewer than half of the parties.
However, in the shuffle-DP setting, $\hat{k}$ must be significantly smaller.
{Consider the bit counting task: if $k$ corrupted users alter their inputs from $0$ to $1$, a shift between valid inputs, the final result may shift by as much as $k$. Since these users follow the protocol correctly, it is impossible for the analyzer to detect their manipulation. As a result, if $k$ is too large, any defense becomes ineffective.
Therefore, to ensure meaningful results, we set $\hat{k}$ to be polylogarithmic in $n$.}

\subsubsection{Objectives and behaviors}
\label{sec: Attacker behaviors}

\paragraph{Privacy attack: Exempt from generating randomness.}
Recall that shuffle-DP relies on collective randomness generation across all $n$ users. Therefore, a privacy-breaking attack occurs when poisoning attackers drop out in generating randomness. 
For example, in the case of bit counting, if $k$ users drop out, the resulting privacy protection is degraded compared to the intended level. 
Robust shuffle-DP protocols~\cite{balcer2021connecting} mitigate this by having honest users amplify their noise by $n/(n-\hat{k})$, effectively safeguarding privacy against malicious behavior.

\paragraph{Utility attack 1: Alter input data.}
As mentioned before, the most straightforward utility attack is to alter the input data.
{
In general, a single corrupted user can introduce at most $\gamma_{\ell_p}(Q,1)$ error to the final result through this attack.
Consider the bit counting query as an example. As described in Example~\ref{exm:Bit counting problem}, existing shuffle-DP protocols such as~\cite{ghazi2020private, ghazi2021differentially} only allow users to send bits with value ``$+1$'' or ``$-1$''. Hence, any message outside this domain can be easily detected and filtered by the analyzer. As a result, a fake input can affect the result by at most $O(1)$. Similarly, for sum estimation, the impact of a fake input is bounded by $O(U)$, the size of the domain.}
Such an attack cannot be fundamentally defended, as corrupted users still provide syntactically valid inputs that conform to the protocol. 
Fortunately, since $\hat{k}$ is at most a polylogarithmic function of $n$, the overall impact of such malicious actions on the final result remains limited. Above all, this subtle attack is not our focus.

\paragraph{Utility attack 2: Send excessive messages.}
A more overt and consequential attack is to send excessive messages, potentially leading to unbounded error. 
Still using the bit counting query as an example, in protocols~\cite{ghazi2020private, ghazi2021differentially}, the ``$+1$'' and ``$-1$'' messages sent by users consist of two components: (i) the actual messages from users holding bit~``$1$'', and (ii) the DP noise composed of both ``$+1$'' and ``$-1$'' messages added to ensure privacy. Due to the shuffler’s anonymity, the analyzer cannot distinguish between these two components. As a result, a corrupted user can introduce $O(n)$ error by injecting $O(n)$ noisy ``$+1$'' messages.
This attack is particularly undetectable when the number of true ``+1'' messages is small, as the plausible range of query results is already $O(n)$. 
Several studies~\cite{cheu2022differentially} have identified and tackled this ``flooding'' attack by fixing both the number of messages each user can send and the output domain of each message. However, these methods exhibit a critical limitation: they are often tailored to specialized tasks, lacking a unified framework applicable to more general queries.

\subsubsection{Limitation of naive baselines}
\label{sec:baseline_limit}

{In Section~\ref{sec:intro_poison}, we mentioned several straightforward defense strategies against poisoning attacks. Here we show why these solutions do not work in practice.}

{\paragraph{Trusted shuffler with authentication and message-integrity checks.} One might consider performing authentication and integrity checks at the shuffler to prevent malicious users from injecting unlimited messages. However, this is incompatible with the standard shuffle-DP model, where the shuffler only permutes messages; granting it additional functions changes the trust model and brings the model closer to central-DP. In practice, shufflers are often implemented via anonymous communication channels (e.g., mix networks or onion routing), making such authentication unrealistic. 
More importantly, even if additional integrity checks were available, they would only be effective in limited scenarios—specifically, when each user sends a fixed number of messages and each message originates from a bounded domain. As discussed in the introduction, these assumptions do not always hold.
For example in the BBGN(IKOS) protocol~\cite{balle2020privatesum} where each message is drawn from a range much larger than $nU$, a corrupted user can submit extremely large but valid values that pass verification yet introduce $O(nU)$ error. Hence, authentication and integrity checks may work for a narrow class of protocols with fixed message counts and well-bounded domains, but they cannot provide a general defense framework.}

{\paragraph{Local-DP + shuffler.} 
First, the local-DP model inherently resists poisoning attacks, as each user independently perturbs their data and integrity can be verified at the user side. 
However, it suffers from an $O(\sqrt{n})$ utility loss compared with central-DP and shuffle-DP~\cite{borja2019the}. 
Shuffle-DP alleviates this by introducing \textit{privacy amplification by shuffling}, which reduces the effective privacy budget $\varepsilon$ to $O(\varepsilon_0 / \sqrt{n})$, where $\varepsilon_0$ is the original local privacy budget and $\varepsilon$ is the resulting privacy guarantee after shuffling~\cite{borja2019the,erlingsson2019amplification}. 
Therefore, a common idea is to apply a local-DP mechanism followed by random shuffling. 
However, the privacy amplification from shuffling is significant only under a stringent privacy regime. 
To achieve $\varepsilon = O(1/\sqrt{n})$, the local mechanism must operate with a constant $\varepsilon_0 = \Theta(1)$. 
In contrast, for the typical setting of $\varepsilon = O(1)$ used in most DP works, the required local privacy budget scales as $\varepsilon_0 = \Theta(\log n)$, resulting in the same error as the original local-DP protocol. 
This fundamental limitation explains why shuffle-DP has emerged as an independent and active research area, rather than as a simple combination of local-DP and shuffling. 
Moreover, because the shuffler hides individual contributions, this approach can at best support detection but cannot recover meaningful results once corrupted users inject large noise, as any noise added before shuffling becomes indistinguishably mixed afterward.
}

{\paragraph{Two-round strategy.} The two-round strategy, in which we first detect and exclude corrupted users and then re-run the query, fails against adaptively corrupted users who behave honestly in the first round to avoid detection and launch their attack in the second round.
For example, in a bit counting query, the corrupted users can follow the protocol in the first round by sending a reasonable number of ``1'' messages, thereby avoiding identification as malicious. However, in the second round, they can inject an excessive number of ``1'' messages to distort the result by $O(n)$, resulting in the same error as if no defense were applied.}

\section{A Straw-man Solution}
\label{sec:strawman}

We first propose a simple \textit{Single-User Shuffle-DP} (SUSDP) protocol as a straw-man solution. Although it increases the error by a factor of $n$ compared to the original shuffle-DP protocol, thereby resulting in poor utility, it serves as a useful starting point and offers key insights that inform the design of our final protocol.

Given a shuffle-DP protocol $\mathcal{P}_Q = (\mathcal{R}, \mathcal{S}, \mathcal{A})$, SUSDP requests each user independently to perturb their own data, thus ensuring it remains unaffected by attackers and avoiding the difficulties of detection under anonymization: 
Each user $i$ privates their data $x_i$ using a local randomizer $\mathcal{R}$, and then sends the messages ${Y}_i$ to its corresponding shuffler $\mathcal{S}_i$. Finally, for each user $i$, the analyzer $\mathcal{A}$ generates an estimated result $\tilde{Q}_i$ and evaluates its reasonability by checking whether 
\[\mathrm{dis}_{\ell_p}(\tilde{Q}_i, \mathrm{Range}(Q, 1)) \leq \err_{\ell_p}(\mathcal{P}_Q, \varepsilon, \delta, \beta/n).\]
If not, the user is flagged as a poisoning attacker.\footnote{{
Note that our detection rule assumes all users are honest and treats outliers as honest as well. The confidence parameter \(\beta\) controls the success rate of this check, ensuring that with probability at least $1-\beta$, an honest outlier will not be misclassified.}}
The final result is obtained by aggregating all valid outputs.
The details of the randomizer and analyzer are presented in Algorithm~\ref{alg:Randomizer of SUSDP} and~\ref{alg:Analyzer of Single-User Shuffle-DP}, respectively.

\begin{small}
\begin{algorithm}[t]
\caption{Randomizer of SUSDP}
\label{alg:Randomizer of SUSDP}
\KwParam{{$\varepsilon, \delta, n$}}
\KwIn{$x_i$, $\mathcal{P}_Q = (\mathcal{R}, \mathcal{S}, \mathcal{A})$}
% ${Y}_i \leftarrow \mathcal{R}(x_i; \varepsilon,\delta,\beta/n,1)$ \;
{${Y}_i \leftarrow \mathcal{R}(x_i; \varepsilon,\delta,1)$ \;}
Send ${Y}_i \text{ to shuffler $\mathcal{S}_i$ }$;
\end{algorithm}
\end{small}

\begin{small}
\begin{algorithm}[t]
\caption{Analyzer of SUSDP}
\label{alg:Analyzer of Single-User Shuffle-DP}
\KwParam{$\varepsilon, \delta, \beta, n$ }
\KwIn{$\{Z_i = Y_i\}_i$, $\mathcal{P}_Q = (\mathcal{R}, \mathcal{S}, \mathcal{A})$}

\For{$i \leftarrow 1$ \KwTo $n$}{
    % $\tilde{Q}_i \leftarrow \sum_{t\in\mathcal{T}_i'}t$;
    $\tilde{Q}_i \leftarrow \mathcal{A}(Z_i; \varepsilon,\delta,\beta/n,1)$; 
    
    \If{$\mathrm{dis}_{\ell_p}(\tilde{Q}_i, \mathrm{Range}(Q, 1)) > \mathrm{Error}_{\ell_p}(\mathcal{P}_Q, \varepsilon, \delta, \beta/n)$}{
        $\tilde{Q}_i\gets 0$\;
    }
}
\Return $\sum_i \tilde{Q}_i$
\end{algorithm}
\end{small}
At first glance, this protocol appears to defend against both privacy and utility attacks, as each user employs a distinct shuffler and independently adds noise to their own data.
However, as we use a multi-shuffler architecture, attackers may impersonate others and send messages to others' shufflers, referred to as \textit{impersonation attacks}.
A straightforward countermeasure involves the analyzer assigning a unique random ID to each shuffler, such that only authorized users (for that particular shuffler) possess this ID. More precisely,
\begin{enumerate}
    \item \textit{Shuffler identification:} When $\mathcal{S}_j$ is instantiated, the analyzer generates an ID $r_j$ (e.g., a random string) and shares it only with those users who are authorized to send messages to $\mathcal{S}_j$.
    % to $\mathcal{S}_j$ according to $\phi$.
    \item \textit{Message authorization:} Each message sent to $\mathcal{S}_j$ must carry the corresponding ID $r_j$. If a user without knowledge of $r_j$ attempts to flood $\mathcal{S}_j$ with messages, the analyzer can immediately discard them as unauthorized.
\end{enumerate}
This idea effectively addresses the issue of impersonation attacks, even in the presence of multiple attackers, as each attacker is restricted to sending messages only to the shuffler they are authorized to access. In the following sections, we apply this idea as a general solution for defending against impersonation attacks.

\begin{theorem}
\label{theorem: Strawman}
    Given any $\varepsilon > 0, \delta >0, n \in \mathbb{Z}_+ $ and $U \in \mathbb{Z}_{+}$, when there is only one corrupted user, for any union-preserving query $Q$, the SUSDP protocol achieves that:

    \begin{itemize}
        \item The messages received by the analyzer preserve $(\varepsilon,\delta)$-DP;
        \item With probability at least $1-\beta$, the total error is bounded by: 
        \[
        O\big( n\cdot \err_{\ell_p}(\mathcal{P}_Q, \varepsilon, \delta, \beta/n)\big)\;+ \; \gamma_{\ell_p}(Q,1);
        \]
        \item In expectation, each user sends $\text{Msg}(\mathcal{P}_Q,\varepsilon,\delta,1)$ messages with each containing $\mathrm{Bit}(\mathcal{P}_Q, \varepsilon, \delta, U, 1)+O(\log n)$ bits.
    \end{itemize}

\end{theorem}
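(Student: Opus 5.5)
We prove the three items of Theorem~\ref{theorem: Strawman} separately, using the shuffler-ID mechanism above so that the corrupted user can only send messages to its own shuffler $\mathcal{S}_{i^*}$.

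\textbf{Privacy.} Each honest user $i$ runs $\mathcal{R}(x_i;\varepsilon,\delta,1)$, which is the protocol $\mathcal{P}_Q$ instantiated with a group of size one. By the guarantee of $\mathcal{P}_Q$, the view $\mathcal{S}_i\circ\mathcal{R}(x_i)$ is $(\varepsilon,\delta)$-DP with respect to $x_i$. The inputs to the $n$ shufflers are disjoint parts of $D$, so Parallel Composition (Lemma~\ref{lem:Parallel Composition}) shows that the joint view $(Z_1,\dots,Z_n)$ is $(\varepsilon,\delta)$-DP.

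For the corrupted user, a neighboring change to that user's own data is irrelevant. The user cannot affect any honest user's noise, because every group has size one and its noise is generated locally. Hence no privacy attack that withholds collective randomness is possible here, and the guarantee holds for the honest users' data. The random IDs are chosen independently of the data, so attaching them to messages preserves DP by post-processing (Lemma~\ref{lem:Post Processing}).

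\textbf{Utility.} Run the analyzer with failure parameter $\beta/n$ per user and take a union bound. Then, with probability at least $1-\beta$, every honest $i$ satisfies $\|\tilde Q_i-Q(\{x_i\})\|_{\ell_p}\le \err_{\ell_p}(\mathcal{P}_Q,\varepsilon,\delta,\beta/n)$. Since $Q(\{x_i\})\in\mathrm{Range}(Q,1)$, this means no honest user is zeroed out.

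Now consider the corrupted user $i^*$, and let $x'$ be any value in $\mathrm{Range}(Q,1)$. There are two cases.
\begin{itemize}
\item If the check rejects $i^*$, then $\tilde Q_{i^*}=0$.
\item If the check accepts $i^*$, then $\tilde Q_{i^*}$ lies within $\err_{\ell_p}$ of some point of $\mathrm{Range}(Q,1)$.
\end{itemize}
In both cases, $\|\tilde Q_{i^*}-Q(\{x_{i^*}\})\|_{\ell_p}\le \err_{\ell_p}(\cdot)+\gamma_{\ell_p}(Q,1)$. In the rejection case we use that $0$ is within $\gamma_{\ell_p}(Q,1)$ of $Q(\{x_{i^*}\})$; this holds for the natural queries, where $Q(\emptyset)=0$ is attained or dominated by the diameter, and otherwise we absorb it by comparing to a fixed point of the range.

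Union-preservation gives $Q(D)=\sum_i Q(\{x_i\})$. The triangle inequality then yields
\[
\Big\|\sum_i\tilde Q_i-Q(D)\Big\|_{\ell_p}\le \sum_i\|\tilde Q_i-Q(\{x_i\})\|_{\ell_p}\le n\cdot\err_{\ell_p}(\mathcal{P}_Q,\varepsilon,\delta,\beta/n)+\gamma_{\ell_p}(Q,1).
\]

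\textbf{Communication.} Each honest user sends exactly the messages of $\mathcal{R}$ run with group size $1$, which is $\mathrm{Msg}(\mathcal{P}_Q,\varepsilon,\delta,1)$ in expectation. Each message carries the payload of $\mathrm{Bit}(\mathcal{P}_Q,\varepsilon,\delta,U,1)$ bits plus the shuffler ID. The ID must distinguish $n$ shufflers and be unguessable, so we take $r_j$ to be a random string of $O(\log n)$ bits. With this length, the probability that a forged message carries a valid ID is polynomially small, and this can be folded into $\beta$.

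\textbf{Main obstacle.} The delicate step is bounding the corrupted user's contribution when its output is rejected and zeroed. I would handle it by using $0=Q(\emptyset)$ and the convention that the diameter bound covers the distance from $0$ to $\mathrm{Range}(Q,1)$, as it does for count, sum and histograms.
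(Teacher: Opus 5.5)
Your proposal follows the paper's proof essentially step for step. For privacy you use parallel composition plus post-processing; for utility, a $\beta/n$ union bound with a detected/undetected case split on the corrupted user, using the triangle inequality through some $c\in\mathrm{Range}(Q,1)$ and union-preservation; for communication, $\mathrm{Msg}(\mathcal{P}_Q,\varepsilon,\delta,1)$ messages each carrying a $\log n$-bit shuffler identifier.

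The rejection-case step you flag, $\|0-Q(\{x_{i^*}\})\|_{\ell_p}\le\gamma_{\ell_p}(Q,1)$, is also used silently in the paper, so your argument rests on the same implicit assumption. Your fallback of ``comparing to a fixed point of the range'' would only work if the analyzer replaced a rejected output by that point instead of by $0$, which changes the protocol.
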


\begin{proof}
    \label{sec:Proof of Theorem 3.1}
For privacy, in the SUSDP protocol, each individual user ensures $(\varepsilon,\delta)$-DP, and the detection and recovery post processes the results from users. 
By parallel composition (Lemma~\ref{lem:Parallel Composition}) and post processing (Lemma~\ref{lem:Post Processing}), the SUSDP protocol satisfies $(\varepsilon,\delta)$-DP.

For utility, let $q$ denote the ID of the corrupted user, and $\mathcal{I}= \{1,2,\dots,n\} \setminus \{q\}$ denote the set of IDs of honest users.
% please check !!
{If the corrupted user sends excessive noisy messages and is detected, then $\tilde{Q}(D) = \sum_{i \in \mathcal{I}} \tilde{Q}_i$, and the error is bounded by
\begin{equation*}
    \begin{aligned}
        \|\tilde{Q}(D) - Q(D)\|_{\ell_p} & = \|\sum_{i \in \mathcal{I}} \tilde{Q}_i - Q(D)\|_{\ell_p} \\
        & \leq \sum_{i \in \mathcal{I}} \|\tilde{Q}_i - Q(\{x_i\})\|_{\ell_p} + \|0 - Q(\{x_q\})\|_{\ell_p} \\
        & \leq (n-1) \cdot \mathrm{Error}_{\ell_p}(\mathcal{P}_Q, \varepsilon, \delta, \beta/n) + \gamma_{\ell_p}(Q,1),
    \end{aligned}
\end{equation*}
with probability {at least} $1-\beta$ using {the} union bound.}

{If the corrupted user manipulates its data and is not detected, there must exist $c \in \mathrm{Range}(Q, 1)$ such that $\|\tilde{Q}_q - c\|_{\ell_p} \leq \mathrm{Error}_{\ell_p}(\mathcal{P}_Q,\allowbreak \varepsilon, \delta, \beta/n)$. By {the} triangle inequality,
\begin{equation*}
    \begin{aligned}
        \|\tilde{Q}_q - Q(\{x_q\})\|_{\ell_p}  \leq \|\tilde{Q}_q - c\|_{\ell_p} + \|c-Q(\{x_q\})\|_{\ell_p} 
         \leq \mathrm{Error}_{\ell_p}(\mathcal{P}_Q, \varepsilon, \delta, \beta/n) + \gamma_{\ell_p}(Q,1).
    \end{aligned}
\end{equation*}
{As} $\tilde{Q}(D) = \sum_{i} \tilde{Q}_i$ where $i \in \{1, 2, \dots, n\}$, the error is bounded by
\begin{equation*}
    \begin{aligned}
        \|\tilde{Q}(D) - Q(D)\|_{\ell_p} =  \|\sum_{i=1}^n \tilde{Q}_i - Q(D)\|_{\ell_p} \leq n \cdot \mathrm{Error}_{\ell_p}(\mathcal{P}_Q, \varepsilon, \delta, \beta/n) + \gamma_{\ell_p}(Q,1). \\ 
    \end{aligned}
\end{equation*}
Comparing these two bounds, it is clear that the worst-case error occurs when the attacker's malicious behavior is just not detected and their output is included in the final aggregation. In all, the total error is bounded by $n \cdot \mathrm{Error}_{\ell_p}(\mathcal{P}_Q, \varepsilon, \delta, \beta/n) + \gamma_{\ell_p}(Q,1)$ with probability {at least} $1-\beta$.}

For communication cost, by the complexity of protocol $\mathcal{P}_Q$, each honest user sends $\text{Msg}(\mathcal{P}_Q,\varepsilon,\delta,1)$ messages in expectation, where each message encodes two components:  
\begin{enumerate}
    \item The user data cost is $\mathrm{Bit}(\mathcal{P}_Q,\varepsilon, \delta,U,n)$ bits;
    \item A unique shuffler identifier requires $\log n$ bits.
\end{enumerate}

Note that, regardless of whether the corrupted user is present, honest users always send this amount of messages in expectation, and each message follows the same format. 
This declaration applies to all protocols proposed in this paper that build upon the shuffle-DP model and will not be reiterated in the following sections.
\end{proof}

\begin{example} [SUSDP for $Q_\text{count}$]
\label{exm:count for susdp}
    Following the bit counting example, we use $\mathcal{P}_{Q_\text{count}}$ \cite{ghazi2021differentially} to initialize our SUSDP protocol.
    Since each user independently executes the $\mathcal{P}_{Q_\text{count}}$ procedure, the difference threshold is set to $O\big(\frac{1}{\varepsilon}\log \frac{n}{\beta}\big)$ according to Lemma \ref{lem: Correlated Noise for bit counting}. Under this setup, SUSDP protocol with $\mathcal{P}_{Q_\text{count}}$ achieves the error $O\big(\frac{\sqrt{n}}{\varepsilon}\log \frac{n}{\beta}\big)$\footnote{Directly applying Theorem~\ref{theorem: Strawman} yields an error bound of $O\big(\frac{n}{\varepsilon}\log \frac{n}{\beta}\big)$. However, since the error of $\mathcal{P}_{Q_\text{count}}$ follows a Laplace distribution~\cite{ghazi2021differentially}, we can leverage its concentration bound to obtain a tighter estimate.}, and in expectation, each user sends $O\big(\frac{\log(1/\delta)}{\varepsilon}\big)$ messages, with each message containing $O(\log n)$ bits.

\end{example}

\section{Methodology}
\label{sec:Methodology}
While the SUSDP protocol effectively mitigates the poisoning attacks, the resulting increase in error is suboptimal.
This is because, in contrast to shuffle-DP which amortizes noise across all users for improved utility, each user in SUSDP independently adds noise.
In this section, we present our general framework which can not only detect and recover poisoning attacks but also enjoy the error benefits in shuffle-DP.

We first introduce a \textit{Block Shuffle-DP} (BSDP) protocol as a foundational defense mechanism, which has an error upgraded by a factor of $O(\sqrt{n})$ compared with the given shuffle-DP protocol, in Section \ref{sec:BSDP}. 
Then, we reduce the factor to $O(\log n)$ by \textit{Hierarchical Shuffle-DP} (HSDP) protocol in Section \ref{sec:HSDP}. 
Finally, in Section \ref{sec:Optimization}, we propose optimizations to reduce the communication cost and then extend our solution to the multiple corrupted users setting.

\subsection{Block Shuffle-DP Protocol}
\label{sec:BSDP}

The protocol operates on three levels. At the bottom level, we apply a shuffle-DP protocol for every single user (user-level). At the top level, we apply a shuffle-DP protocol across all users (output-level). Between this, we introduce an intermediate level, where users are partitioned into blocks of fixed size $\sqrt{n}$, and users in each block perform a shuffle-DP protocol (block-level).  

To defend against poisoning attacks, detection is performed at all three levels. 
At the user-level, the detection evaluates the reasonability of each user's output. 
At the block-level, attacks can be identified by comparing each group’s aggregated output with the sum of its members’ user-level outputs and checking whether the discrepancy falls within a reasonable bound. 
Similarly, for the output-level, attacks are detected by comparing the aggregated output over all users with the sum of block-level outputs. 
The illustration of the block shuffle-DP protocol is shown in Fig.~\ref{fig:BSDP}. For simplicity, we assume that $n$ is a perfect square. 
The detailed protocol works as follows:

\paragraph{Randomizer} Each user $i$ privates their data $x_i$ using three local randomizers, each operating with a privacy budget of \( \varepsilon/3 \) and \( \delta/3 \). 
To ensure only a constant factor increase in total error over the original shuffle-DP protocol in the absence of attackers, we allocate $\beta/2$ to output-level query, and distribute the remaining equally among the $\sqrt{n}+n$ block-level and user-level queries.
The first randomizer performs user-level randomization. The output of each user $i$ is sent to their pre-assigned shuffler $\mathcal{S}_{i}^{(1)}$.
The second randomizer performs block-level randomization. All users are partitioned into disjoint blocks of size $\sqrt{n}$, and each block jointly executes a shuffle-DP protocol.
Specifically, the block $b$ consists of users with ID in $\{(b-1)\sqrt{n}+1,\dots,b\sqrt{n}\}$.
Since the attacker may not contribute any noise, we need the remaining $\sqrt{n}-1$ honest users to collectively amortize the noise. 
Therefore, the privacy budget will be scaled by a factor of $(\sqrt{n}-1)/\sqrt{n}$.
Each user sends their output to the block-level shuffler $\mathcal{S}_{b}^{(2)}$, which collects and shuffles messages from all users belonging to block $g$.
The third randomizer performs output-level randomization, where all users jointly execute a shuffle-DP protocol. 
We also rescale the privacy budget by a factor of $(n-1)/n$.
The output is sent to a central shuffler $\mathcal{S}^{(3)}$.
The detailed procedure is shown in Algorithm~\ref{alg:Randomizer of BSDP}.
\begin{figure}[t]
  \centering

  \begin{subfigure}[b]{0.48\textwidth}
    \centering
    \includegraphics[page=1, width=\textwidth]{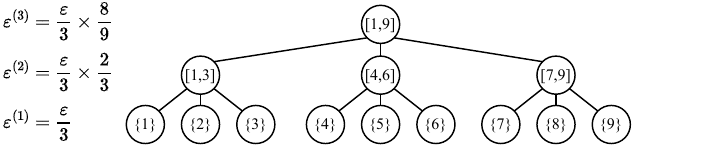}
    \caption{User grouping.}
    \label{fig:sub1}
  \end{subfigure}
  \hspace{0.01\textwidth}
  \begin{subfigure}[b]{0.48\textwidth}
    \centering
    \includegraphics[page=2, width=\textwidth]{figures/GSDP_v5.pdf}
    \caption{Estimated results from each group.}
    \label{fig:sub2}
  \end{subfigure}

  \begin{subfigure}[b]{0.48\textwidth}
    \centering
    \includegraphics[page=3, width=\textwidth]{figures/GSDP_v5.pdf}
    \caption{Bottom-to-top detection.}
    \label{fig:sub3}
  \end{subfigure}
  \hspace{0.01\textwidth}
  \begin{subfigure}[b]{0.48\textwidth}
    \centering
    \includegraphics[page=4, width=\textwidth]{figures/GSDP_v5.pdf}
    \caption{Bottom-to-top recovery.}
    \label{fig:sub4}
  \end{subfigure}

  \caption{Illustration of our block shuffle-DP protocol for bit counting with $n=9$ users. A corrupted user (ID = 1) sends excessive messages at different levels.}
  \label{fig:BSDP}
\end{figure}

\begin{small}
\begin{algorithm}[t]
\caption{Randomizer of BSDP}
\label{alg:Randomizer of BSDP}
\KwParam{{$\varepsilon, \delta, n$} }
\KwIn{$x_i$, $\mathcal{P}_Q = (\mathcal{R}, \mathcal{S}, \mathcal{A})$}

\tcp{User-Level Randomization}

{$Y_{i}^{(1)} \gets \mathcal{R}\bigl(x_i; \frac{\varepsilon}{3},\frac{\delta}{3},1\bigr)$ \;}

Send $Y_{i}^{(1)}$ \text{ to shuffler } $\mathcal{S}_{i}^{(1)}$\;

\tcp{Block-Level Randomization}

{$Y_{i}^{(2)} \gets \mathcal{R}\bigl(x_i; \frac{\varepsilon}{3}\cdot \frac{\sqrt{n}-1}{\sqrt{n}},\frac{\delta}{3}\cdot \frac{\sqrt{n}-1}{\sqrt{n}},\sqrt{n}\bigr)$ \;}

Send $Y_{i}^{(2)} $\text{ to shuffler } $\mathcal{S}^{(2)}_{b}$.

\tcp{Output-Level Randomization}

{$Y_{i}^{(3)} \gets \mathcal{R}\bigl(x_i; \frac{\varepsilon}{3}\cdot \frac{n-1}{n},\frac{\delta}{3}\cdot \frac{n-1}{n},n\bigr)$ \;}
Send $Y_{i}^{(3)} $\text{ to shuffler } $\mathcal{S}^{(3)}$\;

\end{algorithm}
\end{small}

\begin{small}
\begin{algorithm}
\caption{Analyzer of BSDP}
\label{alg:Analyzer of BSDP}
\KwParam{$\varepsilon, \delta,\beta, n$}
\KwIn{$\{Z^{(1)}_i = Y^{(1)}_i\}_i,$
$\{Z^{(2)}_{b} = \cup_{i=(b-1)\cdot\sqrt{n}+1}^{b\cdot\sqrt{n}}Y^{(2)}_i\}_{b},\;Z^{(3)} = \cup_{i} Y_i^{(3)},\;$
$\mathcal{P}_Q = (\mathcal{R}, \mathcal{S}, \mathcal{A})$}

$\Big(\varepsilon^{(1)},\delta^{(1)},\beta^{(1)}\Big)\gets \Big(\frac{\varepsilon}{3}, \frac{\delta}{3}, \frac{\beta}{2(\sqrt{n}+n)}\Big)$\;

$\Big(\varepsilon^{(2)},\delta^{(2)},\beta^{(2)}\Big)\gets \Big(\frac{\varepsilon}{3}\cdot \frac{\sqrt{n}-1}{\sqrt{n}},\frac{\delta}{3}\cdot \frac{\sqrt{n}-1}{\sqrt{n}}, \frac{\beta}{2(\sqrt{n}+n)} \Big)$\;

$\Big(\varepsilon^{(3)},\delta^{(3)},\beta^{(3)}\Big)\gets \Big( \frac{\varepsilon}{3}\cdot \frac{n-1}{n}, \frac{\delta}{3}\cdot \frac{n-1}{n}, \frac{\beta}{2} \Big)$\;

$\theta^{(r)} \leftarrow \err_{\ell_p}\Big(\mathcal{P}_Q, \varepsilon^{(r)}, \delta^{(r)},\beta^{(r)}\Big), \forall 1 \leq r \leq 3$\;

\tcp{User-Level Detection}
\For{$i \leftarrow 1$ \KwTo $n$}{
    $\tilde{Q}^{(1)}_{i} \gets \mathcal{A}(Z^{(1)}_i; \varepsilon^{(1)},\delta^{(1)},\beta^{(1)},1)$ \;
    \If{$\mathrm{dis}_{\ell_p}(\tilde{Q}^{(1)}_{i}, \mathrm{Range}(Q, 1)) > \theta^{(1)}$}{
        $\tilde{Q}^{(1)}_{i} \gets \perp$\; 
    }
}

\tcp{Block-Level Detection}

\For{$b \leftarrow 1$ \KwTo $\sqrt{n}$}{
    $\tilde{Q}^{(2)}_b \gets \mathcal{A}(Z^{(2)}_{b}; \varepsilon^{(2)},\delta^{(2)},\beta^{(2)},\sqrt{n})$\;
    \If{ $\exists \;\tilde{Q}^{(1)}_i = \perp , \;i \in [(b-1)\cdot\sqrt{n} + 1,\, b\cdot\sqrt{n}]$ \KwOr $\left|\tilde{Q}^{(2)}_b \! - \sum_{i=(b-1) \cdot \sqrt{n} +1}^{b \cdot \sqrt{n}} \tilde{Q}^{(1)}_{i}\right|_{\ell_p} > \sqrt{n} \cdot \theta^{(1)}+\theta^{(2)}$}{
      $\tilde{Q}^{(2)}_b \gets \perp$\;
    }
}
\tcp{Output-Level Detection}
$\tilde{Q}^{(3)} \gets \mathcal{A}(Z^{(3)}; \varepsilon^{(3)},\delta^{(3)},\beta^{(3)},n)$\;

\If{$ \exists \; \tilde{Q}^{(2)}_b = \perp, \; b\in[1,\sqrt{n}]$ \KwOr $\left|\tilde{Q}^{(3)} - \sum_b \tilde{Q}^{(2)}_b\right|_{\ell_p} > \sqrt{n}\cdot\theta^{(2)}+\theta^{(3)}$}{
$\tilde{Q}^{(3)} \gets \perp$;
}

\tcp{ Recovery}
\For{$ i \gets 1 $ \KwTo $n$}{
    \If{$\tilde{Q}_i^{(1)}=\perp$}{
        $\tilde{Q}_i^{(1)}\gets 0$\;
    }
}

\For{$b \leftarrow 1$ \KwTo $\sqrt{n}$}{
    \If{$\tilde{Q}^{(2)}_b = \perp$}{
        $\tilde{Q}^{(2)}_b \gets \sum_{i=(b-1)\cdot\sqrt{n}+1}^{b \cdot \sqrt{n}} \tilde{Q}^{(1)}_i$\;
    }
}

\If{$\tilde{Q}^{(3)}= \perp$}{
$\tilde{Q}^{(3)}\gets \sum_b \tilde{Q}^{(2)}_b$\;
}

\Return $\tilde{Q}^{(3)}$

\end{algorithm}
\end{small}

\paragraph{Analyzer} After receiving the shuffled messages from the shufflers, denoted by $Z^{(1)}_i$, $Z^{(2)}_b$, and $Z^{(3)}$ for the user-, block-, and output-level outputs respectively, the analyzer invokes the given $\mathcal{A}$ on each message set to compute the estimated results, written as $\tilde{Q}^{(1)}_i$, $\tilde{Q}^{(2)}_b$, and $\tilde{Q}^{(3)}$, where $i$ and $b$ index users and blocks, respectively. After collecting all results, the analyzer performs detection and recovery:

\begin{itemize}
    \item \textit{Detection.} The detection is performed in a bottom-to-top manner. At the user-level, analyzer checks whether each $\tilde{Q}^{(1)}_{i}$ falls within a reasonable range and flags abnormal outputs by setting $\tilde{Q}^{(1)}_{i} \gets \perp$. At the block-level, analyzer invalidates $\tilde{Q}^{(2)}_b$ if any member in block $b$ has been flagged at user-level; otherwise, analyzer compares $\tilde{Q}^{(2)}_b$ with the sum of $\tilde{Q}^{(1)}_{i}$ in the block. Similarly, at the output-level, analyzer checks whether $\tilde{Q}^{(3)}$ is consistent with the sum of all valid block-level outputs.

    \item \textit{Recovery.} The recovery proceeds in a bottom-to-top manner as well. For each invalid block-level output $\tilde{Q}^{(2)}_b$, the analyzer recovers it by summing all unflagged user-level outputs $\tilde{Q}^{(1)}_i$ of users within the block. After reconstructing all block-level results, the analyzer attempts to recover the output-level result. If the aggregated output-level result $\tilde{Q}^{(3)}$ is valid, it is returned as the final result. Otherwise, the analyzer aggregates all block-level outputs $\tilde{Q}^{(2)}_b$ (including those recovered from user-level results) to produce the final result.
\end{itemize}
The detailed procedure is shown in Algorithm~\ref{alg:Analyzer of BSDP}.

\begin{theorem}
\label{the: BSDP theorem}
    Given any $\varepsilon > 0, \delta >0, n \in \mathbb{Z}_+ $ and $U \in \mathbb{Z}_{+}$, when there is only one corrupted user, for any union-preserving query $Q$, the BSDP protocol achieves that:
    \begin{itemize}
        \item The messages received by the analyzer preserve $(\varepsilon,\delta)$-DP;
        \item With probability at least $1-\beta$, the total error is bounded by:
        \[
        O\big(\sqrt{n}\cdot \err_{\ell_p}(\mathcal{P}_Q, \varepsilon,\delta, \beta/n)\big)+ \gamma_{\ell_p}(Q,1)
        \]
        \item In expectation, each user sends $O\big( \mathrm{Msg}(\mathcal{P}_Q,\varepsilon,\delta,1)\big)$ messages, each containing $O\big(\mathrm{Bit}(\mathcal{P}_Q,\varepsilon,\delta,U,1)+\log n\big)$ bits.
    \end{itemize}
\end{theorem}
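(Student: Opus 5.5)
Privacy, utility and communication are three independent claims, and I will handle them separately.

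\textbf{Privacy.} Each of the three levels runs with budget $(\varepsilon/3,\delta/3)$ after rescaling. At the user level, the singleton runs are on disjoint data, so parallel composition (Lemma~\ref{lem:Parallel Composition}) gives $(\varepsilon/3,\delta/3)$-DP. At the block level, the blocks are disjoint and each block runs the protocol with budget scaled by $(\sqrt{n}-1)/\sqrt{n}$. Following the robust shuffle-DP argument of \cite{balcer2021connecting}, the $\sqrt{n}-1$ honest users alone still supply enough noise even if the single attacker withholds its noise, so each block is $(\varepsilon/3,\delta/3)$-DP, and parallel composition again covers all blocks. The output level is handled identically with factor $(n-1)/n$. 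Basic composition (Lemma~\ref{lem:Sequential Composition}) over the three levels gives $(\varepsilon,\delta)$-DP. The analyzer's detection and recovery are post-processing (Lemma~\ref{lem:Post Processing}). Impersonation is ruled out by the shuffler-ID mechanism.

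\textbf{Utility.} The total failure probability is $n\beta^{(1)}+\sqrt{n}\beta^{(2)}+\beta^{(3)}\le\beta$. On the complementary event, every honest user-level output lies within $\theta^{(1)}$ of the truth, every block containing no attacker lies within $\theta^{(2)}$, and, if the attacker behaves, the output level lies within $\theta^{(3)}$. On this event I proceed in two steps.

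First, no honest-only quantity is ever flagged. For an honest block, the triangle inequality gives a discrepancy from the sum of its user outputs of at most $\sqrt{n}\theta^{(1)}+\theta^{(2)}$, which is exactly the threshold.

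Second, I do a case analysis on the highest level at which the attacker's contribution is accepted, bounding the final error in each case.
\begin{enumerate}
\item If $\tilde Q^{(3)}$ survives, its error is at most $\|\tilde Q^{(3)}-\sum_b\tilde Q^{(2)}_b\|+\|\sum_b\tilde Q^{(2)}_b-Q(D)\|\le \sqrt{n}\theta^{(2)}+\theta^{(3)}+\|\sum_b\tilde Q^{(2)}_b-Q(D)\|$.
\item The sum $\sum_b \tilde Q^{(2)}_b$ consists of $\sqrt{n}-1$ honest blocks, contributing at most $(\sqrt{n}-1)\theta^{(2)}$ error, plus the attacker's block $b^*$.
\item If $\tilde Q^{(2)}_{b^*}$ is accepted, its error is at most $\sqrt{n}\theta^{(1)}+\theta^{(2)}+\|\sum_{i\in b^*}\tilde Q^{(1)}_i-Q(D_{b^*})\|$.
\item If $\tilde Q^{(2)}_{b^*}$ is flagged, it is replaced by the sum of its user-level outputs, with the attacker's entry set to $0$ if flagged.
\item The attacker's user-level term, as in the proof of Theorem~\ref{theorem: Strawman}, contributes at most $\theta^{(1)}+\gamma_{\ell_p}(Q,1)$ whether or not it is flagged.
\end{enumerate}
Summing these cases gives $O(\sqrt{n}(\theta^{(1)}+\theta^{(2)})+\theta^{(3)})+\gamma_{\ell_p}(Q,1)$. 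Because $\err_{\ell_p}$ is polynomial in its parameters and $\beta^{(r)}=\Theta(\beta/n)$, each $\theta^{(r)}=O(\err_{\ell_p}(\mathcal{P}_Q,\varepsilon,\delta,\beta/n))$, which yields the claimed bound.

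\textbf{Main obstacle.} The delicate step is the ``just under threshold'' worst case. The attacker may manipulate several levels simultaneously, and the bound must be robust to any combination of flagged and unflagged outputs. I will handle this by always charging an accepted value to its threshold plus the error of the value it was compared against, inductively from the bottom level up. The recurrence stays linear, with at most $\sqrt{n}$ terms per level, so only constant factors accumulate.

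\textbf{Communication.} Each user runs three randomizers, so it sends $\mathrm{Msg}(\mathcal{P}_Q,\cdot,\cdot,1)+\mathrm{Msg}(\cdot,\sqrt{n})+\mathrm{Msg}(\cdot,n)$ messages. I will argue that per-user message counts are nonincreasing in group size for existing protocols, so the total is $O(\mathrm{Msg}(\mathcal{P}_Q,\varepsilon,\delta,1))$, with rescaled budgets costing only constants. Each message carries the protocol payload plus an $O(\log n)$-bit shuffler ID.
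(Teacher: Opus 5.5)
Your proposal is correct and follows essentially the same route as the paper. You use three-level composition for privacy and a union bound over the $n+\sqrt{n}+1$ failure events. You then charge each accepted estimate bottom-up to its consistency threshold plus its children's error, which gives the paper's worst case $2\sqrt{n}\,\theta^{(1)}+2\sqrt{n}\,\theta^{(2)}+\theta^{(3)}+\gamma_{\ell_p}(Q,1)$ when the attacker evades every check, and you finish by summing per-level message costs. One small slip, which the paper shares: $\beta^{(3)}=\beta/2$ is not $\Theta(\beta/n)$, so bound $\theta^{(3)}$ by monotonicity of the error in $\beta$ rather than by the constant-factor argument.
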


\begin{proof}
\label{sec:Proof of Theorem 4.1}
For privacy, each of the three levels in the BSDP protocol ensures $(\varepsilon/3, \delta/3)$-DP, where parallel composition (Lemma~\ref{lem:Parallel Composition}) applies within each level. 
By basic composition (Lemma~\ref{lem:Sequential Composition}) and post processing (Lemma~\ref{lem:Post Processing}), the whole BSDP protocol with three levels satisfies $(\varepsilon, \delta)$-DP.

    For utility, let $q$ denote the ID of the corrupted user, and 
    $b$ denotes the ID of the block with the corrupted user.
    {The overall error of the BSDP protocol consists of two parts: i) input bias, the result difference between the true input and the corrupted input, and ii) message noise, resulting from noisy messages including both legitimate randomness from the protocol and any excessive noise injected by corrupted users.
    By the union bound, with probability at least $1 - \beta$, the message noise within each level does not exceed its respective $\theta^{(r)}$ threshold; any group whose deviation surpasses $\theta^{(r)}$ is flagged as corrupted.
    }

    {If the corrupted user is not detected at any level, similar to the proof of Theorem~\ref{theorem: Strawman}, the total error of $\Sigma_i Q_i^{(1)}$ for block $b$ is bounded by $\sqrt{n}\cdot \theta^{(1)} + \gamma_{\ell_p}(Q,1)$. 
    In this case, since the corrupted user is also not detected at the block-level, we have 
    \begin{equation*}
        \left\|\tilde{Q}^{(2)}_b \! - \sum_{i=(b-1) \cdot \sqrt{n} +1}^{b \cdot \sqrt{n}} \tilde{Q}^{(1)}_{i}\right\|_{\ell_p} \leq \sqrt{n} \cdot \theta^{(1)}+\theta^{(2)}.
    \end{equation*}
    Thus, the total error for $\tilde{Q}_b^{(2)}$ is bounded by $\sqrt{n}\cdot \theta^{(1)} + \gamma_{\ell_p}(Q,1) + \sqrt{n} \cdot \theta^{(1)}+\theta^{(2)}  = 2\sqrt{n}\cdot \theta^{(1)} + \theta^{(2)} + \gamma_{\ell_p}(Q,1)$. 
    Similarly, we can conclude that the total error of $\tilde{Q}^{(3)}$ is bounded by
     \begin{equation*}
         2\sqrt{n}\cdot \theta^{(1)} +  2\sqrt{n}\cdot\theta^{(2)} + 
         \theta^{(3)} +
        \gamma_{\ell_p}(Q,1).
     \end{equation*}}

    {If the corrupted user sends excessive messages and is detected, we consider three cases based on the detection level. If the corrupted user is detected at the user-level, then both $\tilde{Q}_b^{(2)}$ and $\tilde{Q}^{(3)}$ will be marked as $\perp$. After recovery, the error for $\tilde{Q}_b^{(2)}$ is bounded by $(\sqrt{n} -1)\cdot \theta^{(1)} + \gamma_{\ell_p}(Q,1)$, 
    and the error for $\tilde{Q}^{(3)}$ is bounded by 
    $(\sqrt{n} -1)\cdot \theta^{(1)} + (\sqrt{n}-1) \cdot \theta^{(2)} + \gamma_{\ell_p}(Q,1)$.
    If the corrupted user is detected at the block-level, from the non-detection result, the total error of recovered $\tilde{Q}_b^{(2)}$ is bounded by $\sqrt{n}\cdot \theta^{(1)} + \gamma_{\ell_p}(Q,1)$. 
    The error for $\tilde{Q}^{(3)}$ is then bounded by $ \sqrt{n}\cdot \theta^{(1)} + (\sqrt{n} -1) \cdot \theta^{(2)} + \gamma_{\ell_p}(Q,1)$.
    If the corrupted user is detected at the output-level, then the error for $\tilde{Q}^{(3)}$ is bounded by $2\sqrt{n}\cdot \theta^{(1)} +  \sqrt{n}\cdot\theta^{(2)} + 
        \gamma_{\ell_p}(Q,1)$.}

    {In summary, by comparing the final error bounds for $\tilde{Q}^{(3)}$ across all cases, the worst-case error occurs when the corrupted user's malicious contribution is just large enough to evade detection and is not excluded at any level. This scenario yields the largest error bound of $2\sqrt{n}\cdot \theta^{(1)} + 2\sqrt{n}\cdot\theta^{(2)} + \theta^{(3)} + \gamma_{\ell_p}(Q,1)$, as all malicious contributions are fully included in the final output.}

    {To bound the final error in asymptotic form, recall that in  Algorithm~\ref{alg:Analyzer of BSDP}, each level $r \in \{1,2,3\}$ uses the following privacy parameters:
    \begin{align*}
    \big(\varepsilon^{(1)}, \delta^{(1)}, \beta^{(1)}\big) &= \left(\tfrac{\varepsilon}{3}, \tfrac{\delta}{3}, \tfrac{\beta}{2(\sqrt{n}+n)}\right), \\
    \big(\varepsilon^{(2)}, \delta^{(2)}, \beta^{(2)}\big) &= \left(\tfrac{\varepsilon}{3} \cdot \tfrac{\sqrt{n}-1}{\sqrt{n}}, \tfrac{\delta}{3} \cdot \tfrac{\sqrt{n}-1}{\sqrt{n}}, \tfrac{\beta}{2(\sqrt{n}+n)}\right), \\
    \big(\varepsilon^{(3)}, \delta^{(3)}, \beta^{(3)}\big) &= \left(\tfrac{\varepsilon}{3} \cdot \tfrac{n-1}{n}, \tfrac{\delta}{3} \cdot \tfrac{n-1}{n}, \tfrac{\beta}{2}\right).
    \end{align*}}   
    {{Additionally,} the error threshold at level $r$ is defined as: $$\theta^{(r)} \leftarrow \mathrm{Error}_{\ell_p}(\mathcal{P}_Q,\allowbreak \varepsilon^{(r)},\allowbreak \delta^{(r)}, \beta^{(r)}).$$ Since all privacy parameters $(\varepsilon^{(r)}, \delta^{(r)}, \beta^{(r)})$ are within constant factors of $(\varepsilon, \delta, \beta/n)$, we have:
    \[
    \theta^{(r)} = O\left( \mathrm{Error}_{\ell_p}(\mathcal{P}_Q, \varepsilon, \delta, \tfrac{\beta}{n} )\right ), \quad \text{for each } r = 1,2,3.
    \]}
    {Substituting this into the total error expression yields the final worst-case bound:
    \begin{equation*}
    \begin{aligned}
    2\sqrt{n}\cdot \theta^{(1)} + 2\sqrt{n}\cdot\theta^{(2)} + \theta^{(3)} + \gamma_{\ell_p}(Q,1) = O\left(\sqrt{n} \cdot \mathrm{Error}_{\ell_p}\left(\mathcal{P}_Q, \varepsilon, \delta, \tfrac{\beta}{n} \right)\right) + \gamma_{\ell_p}(Q,1),
    \end{aligned}
    \end{equation*}
    with probability at least $1 - \beta$.}

For the communication cost, the BSDP protocol consists of three levels, each with specific privacy budgets $(\varepsilon^{(r)}, \delta^{(r)})$ according to Algorithm~\ref{alg:Analyzer of BSDP}. 
Let $\mathrm{M}^{(r)}$ denote the number of messages sent by each user at level $r$, {we have} $\mathrm{Msg}(\mathcal{P}_Q, \varepsilon^{(r)}, \delta^{(r)}, m^{(r)})$
where $m^{(r)} = n^{(r-1)/2}$. 
By aggregating all three levels, the total number of messages per user is bounded by
\begin{equation*}
\begin{aligned}
    \sum_{r =1}^{3}\mathrm{M}^{(r)}= & \mathrm{Msg}\big(\mathcal{P}_Q, \tfrac{\varepsilon}{3}, \tfrac{\delta}{3}, 1\big) 
    +\mathrm{Msg}\big(\mathcal{P}_Q, \tfrac{\varepsilon}{3} \cdot \tfrac{\sqrt{n}-1}{\sqrt{n}}, \tfrac{\delta}{3} \cdot \tfrac{\sqrt{n}-1}{\sqrt{n}}, \sqrt{n}\big) +\mathrm{Msg}\big(\mathcal{P}_Q, \tfrac{\varepsilon}{3} \cdot \tfrac{n-1}{n}, \tfrac{\delta}{3} \cdot \tfrac{n-1}{n}, n\big) \\
    =&\; O\big(\mathrm{Msg}(\mathcal{P}_Q, \varepsilon, \delta, 1)\big).
\end{aligned}
\end{equation*}

    Similar to Theorem~\ref{theorem: Strawman}, each message contains both user data and a shuffler identifier. 
    The user data requires $\mathrm{Bit}(\mathcal{P}_Q, \varepsilon^{(r)}, \delta^{(r)}, U, m^{(r)})$ bits at level $r$, which is bounded by $O(\mathrm{Bit}(\mathcal{P}_Q, \varepsilon, \delta, U, 1))$. 
    Since the BSDP protocol employs $n + \sqrt{n} + 1$ shufflers in total, each message should include a $\log(n + \sqrt{n} + 1)$-bit identifier to indicate the intended shuffler.

\end{proof}

\begin{example}[BSDP for $Q_\text{count}$]
    We initialize the BSDP protocol using $\mathcal{P}_{Q_\text{count}}$ from~\cite{ghazi2021differentially}.
    When there are no attackers, BSDP achieves an error of $O\big(\frac{1}{\varepsilon}\log\frac{1}{\beta}\big)$, which is asymptotically the same as the original $\mathcal{P}_{Q_\text{count}}$.
    In the presence of a single corrupted user, the error changes to $O\big(\frac{n^{1/4}}{\varepsilon}\log\frac{n}{\beta}\big)$\footnote{
    We achieve this error bound by applying Theorem~\ref{the: BSDP theorem} and the concentration bound.
    Furthermore, we replace the terms $\sqrt{n}$ with $n^{1/4}$ in lines 11 and 14 of Algorithm~\ref{alg:Analyzer of BSDP}.
    }.
    Each user sends $O\big(\frac{\log(1/\delta)}{\varepsilon}\big)$ messages in expectation, each message containing $O(\log n)$ bits.
\end{example}

\subsection{Hierarchical Shuffle-DP Protocol}
\label{sec:HSDP}

Recall that in the block shuffle-DP protocol, recovering the block-level or output-level result requires aggregating $O(\sqrt{n})$ results from the lower level, which contributes an error term \( O(\sqrt{n}) \).
Our idea is to introduce additional levels, allowing us to iteratively combine and verify smaller blocks, thereby limiting the attacker's ability to disrupt the protocol.
Building on this insight, we propose an advanced multi-level block protocol, referred to as the \emph{Hierarchical Shuffle-DP} (HSDP) protocol, which is structured as a binary tree. This hierarchical design reduces the number of lower-level results required to reconstruct an upper-level output, thereby enabling more precise control over error accumulation across levels.

Specifically, at the bottom level, each user individually applies the single-user shuffle-DP protocol. Then, in each upper level, two groups from the previous level are merged to form a new group, continuing recursively until all users are consolidated into a single group at the top level. For simplicity, we assume that \( n \) is a power of 2. 
There are \( \log n + 1 \) levels, where at level \( r \in \{1, 2, \dots, \log n + 1\} \), each group has size \( 2^{r-1} \). Each group performs a shuffle-DP protocol. 
On the analyzer's side, detection and recovery are carried out in a manner similar to BSDP, where we check the reasonability of each group's result and perform recovery using the results of its subgroups from the lower level if needed.
This hierarchical process is illustrated in Fig.~\ref{fig:HSDP_analyzer}.
Precisely, the protocol works as follows: 

\begin{figure*}[t]
  \centering

  \begin{subfigure}[b]{0.48\textwidth}
    \centering
\includegraphics[page=1,width=\textwidth]{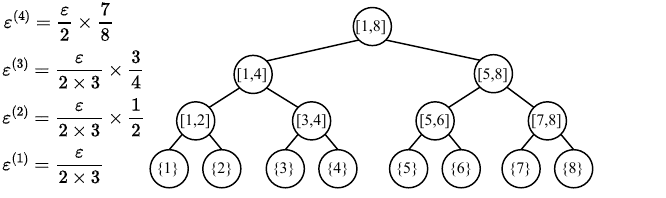}
    \caption{Binary tree structure of user grouping.}
    \label{fig:sub1}
  \end{subfigure}
  \hspace{0.01\textwidth}
  \begin{subfigure}[b]{0.48\textwidth}
    \centering
    \includegraphics[page=2,width=\textwidth]{figures/HSDP/HSDP.drawio.pdf}
    \caption{Estimated results from each group.}
    \label{fig:sub2}
  \end{subfigure}

  \begin{subfigure}[b]{0.48\textwidth}
    \centering
    \includegraphics[page=3,width=\textwidth]{figures/HSDP/HSDP.drawio.pdf}
    \caption{Bottom-to-top detection.}
    \label{fig:sub3}
  \end{subfigure}
  \hspace{0.01\textwidth}
  \begin{subfigure}[b]{0.48\textwidth}
    \centering
    \includegraphics[page=4,width=\textwidth]{figures/HSDP/HSDP.drawio.pdf}
    \caption{Bottom-to-top recovery.}
    \label{fig:sub4}
  \end{subfigure}

  \caption{Illustration of our hierarchical shuffle-DP protocol for bit counting, with $n=8$ users. A corrupted user (ID = 1) sends excessive messages at different levels.}
  \label{fig:HSDP_analyzer}
\end{figure*}

\paragraph{Randomizer} Each user \( i \) privatizes their data \( x_i \) using \( \log n + 1 \) local randomizers, each corresponding to one level of the hierarchical protocol.
We allocate half of the privacy budget, i.e., \( \varepsilon/2 \) and \( \delta/2 \), to the randomizer at the top level, and distribute the remaining half equally among the lower \( \log n \) levels. 
For $\beta$, we allocate $\beta/2$ to the query at the top level, and split the remaining half equally across the $2n-2$ queries at lower levels.
This allocation ensures that, in the absence of any attacker, the overall error increases by at most a constant factor compared to the given shuffle-DP protocol.
The first randomizer is used to perform user-level randomization with its output sent to the shuffler $\mathcal{S}^{(1)}_i$ assigned to user $i$.
The $r$-th randomizer with $r>1$ performs a shuffle-DP over a group $g$, which consists of users with ID in $\{(g-1)\cdot 2^{r-1}+1, \dots, g \cdot 2^{r-1}\}$. The output is sent to shuffler $\mathcal{S}^{(r)}_g$. 
To defend against the case where the attacker contributes no noise, 
similar to the block approach, we rescale the privacy budget by a factor of $(2^{r-1}-1)/2^{r-1}$ at each level $r>1$.
The detailed procedure is shown in Algorithm~\ref{alg:Randomizer of HSDP}.

\begin{small}
\begin{algorithm}[t]
\caption{Randomizer of HSDP}
\label{alg:Randomizer of HSDP}
\KwParam{{$\varepsilon, \delta ,n$}}
\KwIn{$x_i$, $\mathcal{P}_Q = (\mathcal{R}, \mathcal{S}, \mathcal{A})$}

{$Y^{(1)}_i \gets \mathcal{R}\bigl(x_i;\frac{\varepsilon}{2\log n} ,\frac{\delta}{2\log n}, 1 \bigr)$\;}
Send $Y^{(1)}_i$ to shuffler $\mathcal{S}^{(1)}_i$\;

\For{$r \gets 2$ \KwTo $\log n$}{

    {$Y^{(r)}_i \gets \mathcal{R}\bigl(x_i;\frac{\varepsilon}{2\log n} \! \cdot \! \frac{2^{r-1}-1}{2^{r-1}},\frac{\delta}{2\log n} \! \cdot \! \frac{2^{r-1}-1}{2^{r-1}}, 2^{r-1} \bigr);$}

    Send $Y^{(r)}_i$ to shuffler $\mathcal{S}^{(r)}_g$, where $g = \lceil i / 2^{r-1} \rceil$\;
}

{$Y^{(\log n+1)}_i \gets \mathcal{R}\bigl(x_i;\;\frac{\varepsilon}{2} \cdot\frac{n-1}{n},\frac{\delta}{2} \cdot \frac{n-1}{n},n \bigr) $\;}

Send $Y^{(\log n+1)}_i$ to shuffler $\mathcal{S}^{(\log n+1)}_1$\;

\end{algorithm}
\end{small}

\paragraph{Analyzer}  After receiving shuffled messages $Z^{(r)}_g$ from the shuffler $\mathcal{S}^{(r)}_g$, the analyzer invokes the given $\mathcal{A}$ to compute the estimated results $\tilde{Q}_g^{(r)}$. After collecting all results for every $r\in \{1, \dots , \log n+1\} $ and $g\in \{1, \dots, n/2^{r-1}\}$, the analyzer performs detection and recovery:
\begin{itemize}
    \item \textit{Detection.} Similar to the block shuffle-DP protocol, the detection is split into two primary stages. At the user-level ($r=1$), analyzer checks if each $\tilde{Q}^{(1)}_{i}$ falls within a reasonable range and flags abnormal outputs by setting $\tilde{Q}^{(1)}_{i} \gets \perp$. At higher levels, the analyzer proceeds from bottom to top: for each group at level \(r\), check its result with the sum of results from its two subgroups at level \(r-1\). If any subgroup was previously flagged as invalid \((\tilde{Q}^{(r-1)}_{2g-1} \text{ or } \tilde{Q}^{(r-1)}_{2g} = \perp)\), the current group \(\tilde{Q}^{(r)}_{g}\) is also marked invalid. Otherwise, the analyzer compares \(\tilde{Q}^{(r)}_{g}\) against the sum of its two subgroups $\tilde{Q}^{(r-1)}_{2g-1} + \tilde{Q}^{(r-1)}_{2g}$.
    The group \( \tilde{Q}^{(r)}_{g} \) is marked invalid if the difference exceeds the cumulative error bound, that is, the sum of the error bound from the current group and its subgroups.
    \item \textit{Recovery.} 
    The recovery process proceeds in a bottom-up manner. At the user-level, any invalid result is set to $0$. From the second level onward, if a result is marked as invalid (i.e., equal to $\perp$), it is recovered by aggregating the outputs of its two subgroups from the lower level: $\tilde{Q}^{(r)}_g \gets \tilde{Q}^{(r-1)}_{2g-1} + \tilde{Q}^{(r-1)}_{2g}$.
    After completing the recovery, the final output is $\tilde{Q}^{(\log n + 1)}_1$.
    
\end{itemize}
The detailed procedure is shown in Algorithm~\ref{alg:Analyzer of HSDP}.

\begin{small}
\begin{algorithm}[t]
\caption{Analyzer of HSDP}
\label{alg:Analyzer of HSDP}
\KwParam{$\varepsilon, \delta, \beta, n$}
\KwIn{
{$ \{Z_g^{(r)} = \cup_{i=(g-1)\cdot2^{r-1}+1}^{g\cdot{2^{r-1}}} Y^{(r)}_i \}_{r,g}$}, $ \mathcal{P}_Q = (\mathcal{R}, \mathcal{S}, \mathcal{A})$
}

$\left(\varepsilon^{(r)}, \delta^{(r)}, \beta^{(r)}\right) \gets \begin{cases}
    \left(\frac{\varepsilon}{2\log n}, \frac{\delta}{2\log n}, \frac{\beta}{2(2n-2)}\right) & \text{if } r = 1 \\
    \left(\frac{\varepsilon}{2\log n}  \! \cdot \! \frac{2^{r-1}-1}{2^{r-1}}, \frac{\delta}{2\log n}  \! \cdot \! \frac{2^{r-1}-1}{2^{r-1}}, \frac{\beta}{2(2n-2)} \right) & \text{if } 2 \leq r \leq \log n \\
    \left( \frac{\varepsilon}{2}, \frac{\delta}{2}, \frac{\beta}{2} \right) & \text{if } r = \log n + 1
\end{cases}$\\
$\theta^{(r)} \leftarrow \err_{\ell_p}\Big(\mathcal{P}_Q, \varepsilon^{(r)}, \delta^{(r)},\beta^{(r)}\Big), \forall 1 \leq r \leq \log n + 1$\;

\BlankLine
\tcp{Detection in the Bottom Level}
 \For{$i \leftarrow 1$ \KwTo $n$}{
    $\tilde{Q}^{(1)}_{i} \gets \mathcal{A}\bigl(Z^{(1)}_i;\;\varepsilon^{(1)}  ,\, \delta^{(1)},\, \beta^{(1)},\,1\bigr);$
    
    \If{$\mathrm{dis}_{\ell_p}\bigl(\tilde{Q}^{(1)}_{i}, \mathrm{Range}(Q,1)\bigr) 
        > \theta^{(1)} $}{
        $\tilde{Q}^{(1)}_{i} \gets \perp$\;
    }
}

\BlankLine
\tcp{Detection in Upper Levels}
\nl \For{$r \leftarrow 2$ \KwTo $\log n+1$}{
    \For{$g \leftarrow 1$ \KwTo $n/2^{r-1}$}{
        $\tilde{Q}^{(r)}_{g} \gets \mathcal{A}\bigl(Z^{(r)}_g;\;\varepsilon^{(r)}, \delta^{(r)},\beta^{(r)},2^{r-1}\bigr)$\;
        \If{$\bigl(\tilde{Q}^{(r-1)}_{2g-1} = \perp\bigr) \;\lor\;\bigl(\tilde{Q}^{(r-1)}_{2g} = \perp\bigr)$ \KwOr $\left| \tilde{Q}^{(r)}_g - \left( \tilde{Q}^{(r-1)}_{2g-1} + \tilde{Q}^{(r-1)}_{2g} \right) \right|_{\ell_p} > 2 \cdot \theta^{(r-1)}+\theta^{(r)}$}{
            $\tilde{Q}^{(r)}_{g} \gets \perp$\;
        }

    }
}

\BlankLine
\tcp{Recovery}
\For{$i \leftarrow 1$ \KwTo $n$}{
        \If{$\tilde{Q}^{(1)}_{i} = \perp$}{
            $\tilde{Q}^{(1)}_{i} \gets 0$\;
        }
    }

\For{$r \leftarrow 2$ \KwTo $\log n+1$}{
    \For{$g \leftarrow 1$ \KwTo $n/2^{r-1}$}{
        \If{$\tilde{Q}^{(r)}_{g} = \perp$}{
            $\tilde{Q}^{(r)}_{g} \gets \tilde{Q}^{(r-1)}_{2g}+\tilde{Q}^{(r-1)}_{2g-1}$\;
        }
    }
}

\KwRet{$\tilde{Q}^{(\log n+1)}_{1}$}

\end{algorithm}
\end{small}

\begin{theorem}
\label{the:HSDP theorem}
Given any $\varepsilon > 0, \delta >0, n \in \mathbb{Z}_+, $ $U \in \mathbb{Z}_{+}$, when there is only one corrupted user, for any union-preserving query $Q$, the HSDP protocol achieves that:

\begin{itemize}
    \item The messages received by the analyzer preserve $(\varepsilon,\delta)$-DP;
    \item 
    With probability at least $1-\beta$, the total error is bounded by:
    \[
        O\Big(\log n\cdot \err_{\ell_p}\big(\mathcal{P}_Q,\frac{\varepsilon}{\log n},\frac{\delta}{\log n}, \frac{\beta}{n}\big)\Big)+\gamma_{\ell_p}(Q,1);
    \]

    \item In expectation, each user sends $O\left(\log n \cdot \mathrm{Msg}(\mathcal{P}_Q, \frac{\varepsilon}{\log n }, \frac{\delta}{\log n}, 1)\right)$ messages, each containing $O\big(\mathrm{Bit}(\mathcal{P}_Q,\frac{\varepsilon}{\log n}, \frac{\delta}{\log n},U,1)+\log n\big)$ bits.

\end{itemize}

\end{theorem}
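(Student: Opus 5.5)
The proof follows the template of Theorem~\ref{the: BSDP theorem}, handling privacy, utility and communication in turn.

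\emph{Privacy.} Within each level $r$ the groups partition the users, so Lemma~\ref{lem:Parallel Composition} shows that level $r$ is $(\varepsilon^{(r)},\delta^{(r)})$-DP. If the single attacker contributes no noise, the remaining $2^{r-1}-1$ honest users run at the rescaled budget, as in robust shuffle-DP, and still supply the required noise. Summing over levels, the lower $\log n$ levels use at most $\log n\cdot\frac{\varepsilon}{2\log n}=\varepsilon/2$ (and likewise for $\delta$), and the top level uses $\varepsilon/2$. Lemma~\ref{lem:Sequential Composition} then gives $(\varepsilon,\delta)$-DP, and detection and recovery are covered by Lemma~\ref{lem:Post Processing}.

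\emph{Utility.} We condition on the event that every honest-run query satisfies its threshold $\theta^{(r)}$. There are $2n-1$ queries with $\beta$ allocated as in Algorithm~\ref{alg:Analyzer of HSDP}, so by a union bound this event has probability at least $1-\beta$.

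Only groups on the root-to-leaf path of the corrupted user $q$ can be corrupted. Every group off this path is honest and, being consistent with its subgroups, is never flagged. Its error is therefore at most $\theta^{(r)}$.

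Let $E_r$ be the worst-case error of the final (possibly recovered) value of the path group at level $r$. We prove by induction the recursion
\[
E_r \le \max\Bigl(E_{r-1}+\theta^{(r-1)},\; E_{r-1}+2\theta^{(r-1)}+\theta^{(r)}\Bigr)+\theta^{(r-1)} .
\]
There are two cases.
\begin{itemize}
\item If the path group is flagged, it is recovered as the sum of the path child (error $E_{r-1}$) and the sibling (error $\theta^{(r-1)}$).
\item If it is not flagged, its output is within $2\theta^{(r-1)}+\theta^{(r)}$ of that same children sum.
\end{itemize}
One subtle point is that the threshold compares against the children's \emph{pre-recovery} values. This is harmless: an unflagged group has unflagged children, so its children's values are the raw ones, and these coincide with their final values.

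The base case is $E_1\le\theta^{(1)}+\gamma_{\ell_p}(Q,1)$, exactly as in the proof of Theorem~\ref{theorem: Strawman}. Unrolling the recursion gives
\[
E_{\log n+1}\le \gamma_{\ell_p}(Q,1)+O\Bigl(\sum_r \theta^{(r)}\Bigr).
\]
For $r\le\log n$, $\theta^{(r)}$ is within constant factors of $\err_{\ell_p}(\mathcal{P}_Q,\varepsilon/\log n,\delta/\log n,\beta/n)$. Here we use the polynomial-dependence convention and the fact that $(2^{r-1}-1)/2^{r-1}\ge 1/2$ for $r\ge2$. The top-level threshold $\theta^{(\log n+1)}$ is no larger than this quantity. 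Summing $\log n+1$ such terms yields the stated $O(\log n\cdot \err_{\ell_p}(\cdots))+\gamma_{\ell_p}(Q,1)$.

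The main obstacle is this induction. The bound must add only one $O(\theta)$ term per level, rather than being multiplied by a constant per level, which would blow up to $\mathrm{poly}(n)$. This works because an undetected group can deviate from its children sum by at most an additive threshold, and the sibling subtree is always clean.

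\emph{Communication.} At each level $r\le\log n$ a user sends $\mathrm{Msg}(\mathcal{P}_Q,\varepsilon^{(r)},\delta^{(r)},2^{r-1})$ messages. Polynomial dependence bounds this by $O(\mathrm{Msg}(\mathcal{P}_Q,\varepsilon/\log n,\delta/\log n,1))$, the message counts being nonincreasing in group size as for all protocols considered. The top level costs no more. Summing over $\log n+1$ levels gives the claimed count.

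For the message size, each message carries $O(\mathrm{Bit}(\mathcal{P}_Q,\varepsilon/\log n,\delta/\log n,U,1))$ data bits. It also carries a shuffler identifier of $\log(2n-1)=O(\log n)$ bits, since there are $2n-1$ shufflers.
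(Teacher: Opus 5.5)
Your proposal is correct and follows essentially the same route as the paper. It uses composition across levels for privacy and a per-level message count for communication. For utility it gives a level-by-level bound along the corrupted user's root-to-leaf path: your single recursion with a max over the flagged and unflagged cases merges the paper's separate never-detected bound $B_r$ and first-detected-at-level-$p$ bound $R^p_r$, and unrolls to the same $4\sum_{r=1}^{\log n}\theta^{(r)}+\theta^{(\log n+1)}+\gamma_{\ell_p}(Q,1)$. Your explicit observations that off-path groups are never flagged, and that an unflagged group's children are themselves unflagged (so pre-recovery and final values coincide), fill in details the paper leaves implicit.
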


\begin{proof}
For privacy, among the $\log n + 1$ levels in the HSDP protocol, the first $\log n$ levels each ensure $(\varepsilon / 2\log n, \delta / 2\log n)$-DP, while the final level ensures $(\varepsilon / 2, \delta / 2)$-DP. 
By basic composition (Lemma~\ref{lem:Sequential Composition}) and post processing (Lemma~\ref{lem:Post Processing}), the HSDP protocol as a whole satisfies $(\varepsilon, \delta)$-DP.

For utility analysis, let $q$ denote the ID of the corrupted user. By the union bound, with probability at least $1 - \beta$, the {message noise} at level $r$ prior to recovery is bounded by $\theta^{(r)}$.

{We show that the worst-case error of the HSDP protocol occurs when the corrupted user evades all detection. 
Similar to the proof of Theorem~\ref{the: BSDP theorem},
if the corrupted user is not detected at any level, the error upper bound $B_r$ for level $r=1$ is given by $\theta^{(1)} + \gamma_{\ell_p}(Q,1)$.
For the group at level $r \in \{2, 3, \dots, \log n + 1\}$ containing the corrupted user, the error bound is
\begin{equation*}
    B_r = 4 \cdot \theta^{(1)} + 4 \cdot \theta^{(2)} + \dots +4 \cdot \theta^{(r-1)} + \theta^{(r)} + \gamma_{\ell_p}(Q,1).
\end{equation*}
The error for other groups composed entirely of honest users is bounded by $\theta^{(r)}$.}

{
If the HSDP protocol detects the attack, let $p$ denote the lowest level at which the corrupted user is identified. We define $R^p_r$ as the error upper bound for the group containing the corrupted user at level $r$ after recovery.
For level $r =p$, if $p =1$, $R^p_p = \gamma_{\ell_p}(Q,1)$. 
For cases where $r=p = \{2, 3,\dots, \log n+1\}$, 
\begin{equation*}
   R^p_p= B_{p -1} + \theta^{(p-1)}.
\end{equation*}
According to Algorithm~\ref{alg:Analyzer of HSDP}, all higher-level groups that contain the corrupted user also require recovery. As a result, for $i \geq 0$,
\begin{equation*}
 R^p_{p + i +1} = R^p_{p + i}+ \theta^{(p+i)},
\end{equation*}
Thus, the error bound for the final result satisfies:
\begin{equation*}
R^p_{\log n + 1} = 
\begin{cases}
\gamma_{\ell_p} + \sum_{i=i}^{\log n} \theta^{(i)}  & \text{if } p = 1, \\
B_{p - 1} + \sum_{i=p-1}^{\log n} \theta^{(i)} & \text{if } p > 1. 
\end{cases}
\end{equation*}
}

{Since $B_{\log n + 1} \geq R^p_{\log n + 1}$ for any $p \in \{1, 2, \dots, \log n + 1\}$, we can conclude the worst error happens when the corrupted user evades all detection.}
{
Therefore, with probability at least $1- \beta$, the error of the HSDP protocol is bounded by
\begin{equation*}
\begin{aligned}
    B_{\log n + 1} &= 4\cdot \theta^{(1)} +  4\cdot\theta^{(2)} + \dots +4 \cdot \theta^{(\log n)}+ 
         \theta^{(\log n + 1)} +
        \gamma_{\ell_p}(Q,1).
\end{aligned}
\end{equation*}}

{We recall that $\theta^{(r)}$ is defined as $\mathrm{Error}_{\ell_p}\left(\mathcal{P}_Q, \varepsilon^{(r)}, \delta^{(r)}, \beta^{(r)}\right)$, where the parameters at each level are allocated as:
\begin{align*}
&(\varepsilon^{(r)}, \delta^{(r)}, \beta^{(r)}) \gets 
\begin{cases}
\left( \frac{\varepsilon}{2 \log n}, \frac{\delta}{2 \log n}, \frac{\beta}{2(2n - 2)} \right), & \text{if } r = 1; \\
\left( \frac{\varepsilon}{2 \log n} \cdot \frac{2^{r - 1} - 1}{2^{r - 1}}, 
        \frac{\delta}{2 \log n} \cdot \frac{2^{r - 1} - 1}{2^{r - 1}}, 
        \frac{\beta}{2(2n - 2)} \right), & \text{if } 2 \le r \le \log n; \\
\left( \frac{\varepsilon}{2}, \frac{\delta}{2}, \frac{\beta}{2} \right), & \text{if } r = \log n + 1.
\end{cases}
\end{align*}}

{{As illustrated,} all $(\varepsilon^{(r)}, \delta^{(r)}, \beta^{(r)})$ are either within constant factors of, or strictly larger than $(\frac{\varepsilon}{\log n}, \frac{\delta}{\log n}, \frac{\beta}{n})$.
% Since $\theta^{(r)}$ decreases with larger privacy budgets, 
Therefore, we can uniformly upper bound all $\theta^{(r)}$ by:
\[
\theta^{(r)} = O\left( \mathrm{Error}_{\ell_p}(\mathcal{P}_Q, \frac{\varepsilon}{\log n},\frac{\delta}{\log n}, \frac{\beta}{n} ) \right), \quad \forall r {\in \{1,2, \dots, \log n + 1\}}.
\]
Therefore, the worst-case total error is bounded by:
\begin{equation*}
\begin{aligned}
    B_{\log n + 1} &= 4\cdot \theta^{(1)} +  4\cdot\theta^{(2)} + \dots +4 \cdot \theta^{(\log n)}+ 
         \theta^{(\log n + 1)} +
        \gamma_{\ell_p}(Q,1) \\
    &= O\Big(\log n\cdot \err_{\ell_p}\big(\mathcal{P}_Q,\frac{\varepsilon}{\log n},\frac{\delta}{\log n}, \frac{\beta}{n}\big)\Big)+\gamma_{\ell_p}(Q,1).
\end{aligned}
\end{equation*}}

For communication cost, similar to 
% the analysis 
{the proof of}
% for 
Theorem~\ref{the: BSDP theorem}, 
% the total messages consist of messages sent by user at all levels.
% Each level is allocated specific privacy budgets according to Algorithm~\ref{alg:Analyzer of HSDP}. 
% At each level, 
each user sends $\mathrm{M}^{(r)} = \mathrm{Msg}(\mathcal{P}_Q,\varepsilon^{(r)},\allowbreak \delta^{(r)},m^{(r)})$ messages {at level~$r$}, where $\varepsilon^{(r)}$ and $\delta^{(r)}$ follows Algorithm~\ref{alg:Analyzer of HSDP}, and $m^{(r)}=2^{r-1}$.
Summing up all {messages}
% messaged amount 
in each level, we get
\begin{equation*}
\begin{aligned}
\sum_{r=1}^{\log n +1} \mathrm{M}^{(r)} & = \sum_{r=1}^{\log n} O(\mathrm{Msg}(\mathcal{P}_Q,\frac{\varepsilon}{\log n},\frac{\delta}{\log n},2^{r-1})) +O(\mathrm{Msg}(\mathcal{P}_Q,{\varepsilon},{\delta},n)) \\
& = O(\log n \cdot \mathrm{Msg}(\mathcal{P}_Q,\frac{\varepsilon}{\log n},\frac{\delta}{\log n},1)).
\end{aligned}
\end{equation*}
As previously discussed for both SUSDP and BSDP protocols, each message consists of user data and a shuffler identifier. 
For the HSDP protocol, the data payload at level $r$ requires $\mathrm{Bit}(\mathcal{P}_Q, \varepsilon^{(r)}, \delta^{(r)}, U, m^{(r)})$ bits, 
which is bounded by $O(\mathrm{Bit}(\mathcal{P}_Q, \tfrac{\varepsilon}{\log n}, \tfrac{\delta}{\log n}, U, 1))$. 
The number of shufflers used in HSDP is $2n - 1$. Thus, each message includes an additional $\log(2n - 1)$ bits to specify the intended shuffler.

\end{proof}

{\paragraph{Intuition of our enhancement over SUSDP and BSDP}
The key is to strike a balance between utility and robustness. 
Shuffle-DP achieves high utility by amortizing noise across all users, but its strong anonymity makes recovery infeasible under poisoning attacks.
SUSDP goes to the opposite extreme: each user uses a dedicated shuffler, allowing the analyzer to observe and filter individual outputs. However, this degenerates to local-DP, where aggregating $n$ independently noised reports incurs $O(n)$ error. BSDP takes a step toward balance with a three-level structure,  reducing noise accumulation to $O(\sqrt{n})$ by grouping users into blocks of size $\sqrt{n}$.
}
{HSDP further improves this trade-off through a hierarchical structure, offering two key benefits. (i) Utility: with only O($\log n$) levels, the final output aggregates a logarithmic number of independently noised outputs, leading to O($\log n$) noise accumulation. (ii) Robustness: detection begins at lower-level groups where adversarial impact is limited, and recursively validates higher-level outputs against their subgroups, thereby bounding the influence of corrupted users throughout the hierarchy.}

\begin{example}[HSDP for $Q_\text{count}$]
\label{exm: HSDP for Bit Counting}
We instantiate the HSDP protocol using $\mathcal{P}_{Q_\text{count}}$~\cite{ghazi2021differentially}.
When there are no attackers, HSDP achieves an error of $O\big( \frac{1}{\varepsilon} \log \frac{1}{\beta} \big)$, which is asymptotically the same as the original $\mathcal{P}_{Q_\text{count}}$.
In the presence of a single corrupted user, the error increases to $O\big(\frac{\log^{2} n}{\varepsilon}\log \frac{n}{\beta}\big)$.
In expectation, each user sends $O\big( \frac{\log n \cdot \log (1/\delta)}{\varepsilon}\big)$ messages, each containing $O(\log n)$ bits.
\end{example}

\subsection{Optimization and Extension}
\label{sec:Optimization}

So far, the hierarchical shuffle-DP protocol has successfully reduced the error overhead of defending against poisoning attacks to a polylogarithmic level. In this section, we further propose a solution to reduce communication costs and extend the protocol to handle multiple corrupted users.

\subsubsection{Reducing communication cost}
\label{sec:lambda}
Using bit counting as an example: each user sends $O\big( \frac{\log n \cdot \log\left(1/\delta\right)}{\varepsilon }\big)$
messages. The $\log n$ factor arises because each user participates in $\log n$ levels of the shuffle-DP protocols. At lower levels, where group sizes are small, each user must contribute a larger number of noise messages to ensure sufficient privacy.
Therefore, a natural idea is to increase the group size at the lower levels and reduce the number of levels.

Specifically, if we change the first-level group size from 1 to \(\lambda\), then each user’s overall communication cost changes to $$\sum_{i=\log \lambda}^{\log n} 1 + O\left(\frac{\log n \cdot \log(1/\delta)}{\varepsilon \cdot 2^i}\right)= \log \frac{n}{\lambda} + O\left(\frac{\log n \cdot \log (1/\delta)}{\varepsilon\lambda}\right),$$
indicating a reduction in the total number of messages as \(\lambda\) grows.

However, increasing $\lambda$ leads to a significant increase in error, because the corrupted user can destroy the result of an entire group of size $\lambda$. In the extreme case where $\lambda = n$, it will degrade into traditional shuffle-DP protocol.
Therefore, the choice of $\lambda$ involves a trade-off between communication cost and accuracy.

In our optimization, we set $\lambda = \log n \cdot \log (1/\delta)$. Similar to the hierarchical shuffle-DP protocol, we allocate half of the privacy budget to the top level, and distribute the remaining half equally among the lower $\log \frac{n}{\lambda}$ levels. 
The detection and recovery processes are the same, except that we treat every $\lambda$ users as a group and need to check whether $\text{dis}_{\ell_p}(\tilde{Q}_i^{(1)}, \text{Range}(Q, \lambda)) > \theta^{(1)}$ at the first level.

\begin{theorem}
\label{the:OHSDP}
    Given any $\varepsilon>0,\; \delta >0,\; n \in \mathbb{Z}_+, \;U \in \mathbb{Z}_{+}$, $\lambda = \log n \cdot \log (1/\delta)$, when there is only a corrupted user, for any union-preserving query $Q$, the \textit{Optimized HSDP} (OHSDP) protocol achieves the following:
    \begin{itemize}
        \item The messages received by the analyzer preserve $(\varepsilon,\delta)$-DP;
        \item With probability at least $1-\beta$, the total error is bounded by:\[O\Big(\log n \cdot \err_{\ell_p}\big(\mathcal{P}_Q, \frac{\varepsilon}{\log n}, \frac{\delta}{\log n},  \frac{\beta}{n}\big)\Big) +  \gamma_{\ell_p}(Q, \log n \cdot \log (1/\delta));\]
        \item In expectation, each user sends 
        \[
        O\Big(\log n \cdot \mathrm{Msg}(\mathcal{P}_Q, \frac{\varepsilon}{\log n}, \frac{\delta}{\log n},\log n \cdot \log (1/\delta))\Big)
        \]
        messages, each containing $O\big(\mathrm{Bit}(\mathcal{P}_Q,\frac{\varepsilon}{\log n}, \frac{\delta}{\log n},U,\log n \cdot \log (1/\delta))+\log n\big)$ bits. 
    \end{itemize}
\end{theorem}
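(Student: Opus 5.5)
The plan is to mirror the proof of Theorem~\ref{the:HSDP theorem}, with the tree truncated so that its leaves are groups of $\lambda=\log n\cdot\log(1/\delta)$ users. Assume for simplicity that $n/\lambda$ is a power of two. The tree then has $\log(n/\lambda)+1$ levels, and level $r$ has groups of size $\lambda 2^{r-1}$.

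\textbf{Privacy.} Within each level the groups are disjoint, so parallel composition (Lemma~\ref{lem:Parallel Composition}) gives each level the budget assigned to it. The bottom $\log(n/\lambda)$ levels each receive $\frac{\varepsilon}{2\log(n/\lambda)}\le\frac{\varepsilon}{\log n}\cdot O(1)$, rescaled by $(m-1)/m$ for group size $m$ so that the attacker withholding noise does no harm. The top level receives $(\varepsilon/2,\delta/2)$. Basic composition (Lemma~\ref{lem:Sequential Composition}) and post processing (Lemma~\ref{lem:Post Processing}) then give $(\varepsilon,\delta)$-DP overall.

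\textbf{Utility.} Set $\beta^{(r)}=\beta/(2\cdot\#\text{groups})$ at the lower levels and $\beta/2$ at the top. A union bound shows that, with probability at least $1-\beta$, every honest group's message noise is at most $\theta^{(r)}$. Next I redo the case analysis of Theorem~\ref{the:HSDP theorem}, changing only the leaf case.

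\emph{Leaf, undetected.} The attacker's leaf group passes the check $\mathrm{dis}_{\ell_p}(\tilde Q^{(1)},\mathrm{Range}(Q,\lambda))\le\theta^{(1)}$. So there is some $c\in\mathrm{Range}(Q,\lambda)$ within $\theta^{(1)}$ of $\tilde Q^{(1)}$. The true value of the group also lies in $\mathrm{Range}(Q,\lambda)$, so the triangle inequality bounds the leaf error by
\[
B_1=\theta^{(1)}+\gamma_{\ell_p}(Q,\lambda).
\]

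\emph{Leaf, detected.} The leaf is zeroed, so its error is at most $\gamma_{\ell_p}(Q,\lambda)$.

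\emph{Higher levels.} The recursion is unchanged:
\[
B_r=4\sum_{j<r}\theta^{(j)}+\theta^{(r)}+\gamma_{\ell_p}(Q,\lambda)\quad\text{if never detected,}
\]
and $R^p_{r+1}=R^p_r+\theta^{(r)}$ after a detection at level $p$. As in Theorem~\ref{the:HSDP theorem}, the no-detection bound dominates. Every $(\varepsilon^{(r)},\delta^{(r)},\beta^{(r)})$ is within a constant factor of, or larger than, $(\varepsilon/\log n,\delta/\log n,\beta/n)$. By the polynomial-dependence observation, each $\theta^{(r)}=O(\err_{\ell_p}(\mathcal{P}_Q,\varepsilon/\log n,\delta/\log n,\beta/n))$. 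Summing over $O(\log n)$ levels gives the stated bound, with $\gamma_{\ell_p}(Q,\lambda)$ replacing $\gamma_{\ell_p}(Q,1)$.

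One subtlety needs checking: the honest groups in the attacker's subtrees must contribute only their own $\theta$ terms. This follows as before, because the bound applies to honest groups whether or not they are recovered.

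\textbf{Communication.} Level $r$ uses group size $m^{(r)}=\lambda 2^{r-1}\ge\lambda$. The budget at each lower level is $\Theta(\varepsilon/\log n)$ or larger. $\mathrm{Msg}$ is nonincreasing in the group size and polynomial in its privacy arguments, so each level costs $O(\mathrm{Msg}(\mathcal{P}_Q,\varepsilon/\log n,\delta/\log n,\lambda))$. Summing over at most $\log n+1$ levels gives the claim.

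Bits per message are likewise bounded by $\mathrm{Bit}$ at group size $\lambda$. The shuffler identifier adds $\log(2n/\lambda)=O(\log n)$ bits.

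\textbf{Main obstacle.} I expect the hardest step to be stating the monotonicity and polynomial-scaling properties of $\err$, $\mathrm{Msg}$ and $\mathrm{Bit}$ precisely enough that the per-level parameters can be replaced uniformly. Mostly this means justifying that a larger group never increases the per-user message count. I would invoke the paper's standing observation about existing protocols, as Theorem~\ref{the:HSDP theorem} does.
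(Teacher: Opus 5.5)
Your proposal is correct and follows essentially the same route as the paper. Privacy comes from parallel composition within levels and basic composition plus post-processing across levels; utility changes only the leaf bounds, to $B_1=\theta^{(1)}+\gamma_{\ell_p}(Q,\lambda)$ when undetected and $R^1_1=\gamma_{\ell_p}(Q,\lambda)$ when detected, inside the unchanged HSDP recursion where the no-detection case dominates; and communication bounds each of the $O(\log n)$ levels by $\mathrm{Msg}$ at group size $\lambda$ plus $O(\log n)$ identifier bits.
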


\begin{proof}
Similar to the proof of Theorem~\ref{the:HSDP theorem}, the OHSDP protocol also satisfies $(\varepsilon, \delta)$-DP.

{For utility, with probability at least $1 - \beta$, the {message noise} at level $r$ is bounded by $\theta^{(r)}$, for all $r \in \{1, 2, \dots, \log(n/\lambda) + 1\}$.
The main difference between OHSDP and HSDP protocols is that OHSDP aggregates level $1$ users into groups of size $\lambda$, which are then handled identically to higher-level groups.
Unlike the proof of Theorem~\ref{the:HSDP theorem}, in the OHSDP protocol,  {if the corrupted user is not detected at the bottom level, the error upper bound for the bottom-level group containing the corrupted user is
\begin{equation*}
    B_1 = \theta^{(1)} + \gamma_{\ell_p}(Q, \lambda).
\end{equation*}
If the corrupted user is detected, then the error for the same group after recovery becomes
\begin{equation*}
    R^1_1 = \gamma_{\ell_p}(Q, \lambda).
\end{equation*}
Nevertheless, this difference does not affect the conclusion that, with probability at least $1 - \beta$, the worst-case error in the OHSDP protocol occurs when the corrupted user evades all levels of detection.}
} 

{
Therefore, substituting $\lambda = \log n \cdot \log (1/\delta)$, the error of the OHSDP protocol is bounded by 
\begin{equation*}
\begin{aligned}
    B_{\log (n/\lambda) + 1} = 4\cdot \theta^{(1)} + \dots 
       + 4\cdot\theta^{(\log (n/\lambda))} +  \theta^{(\log (n/\lambda) + 1)} +
        \gamma_{\ell_p}(Q,\lambda). 
\end{aligned}
\end{equation*}
with probability at least $1- \beta$.
Similar to the proof of Theorem~\ref{the:HSDP theorem}, $\theta ^{(r)}$ can be safely bounded by:
\begin{equation*}
\begin{aligned}
    \theta ^{(r)} &= O\Big( \err_{\ell_p}\big(\mathcal{P}_Q,\frac{\varepsilon}{\log \frac{n}{\lambda}},\frac{\delta}{\log \frac{n}{\lambda}}, \frac{\lambda \beta}{n}\big)\Big)= \\
    &O\Big(\err_{\ell_p}\big(\mathcal{P}_Q,\frac{\varepsilon}{\log n},\frac{\delta}{\log n}, \frac{\beta}{n}\big)\Big), \quad \forall r {\in \{1,2, \dots, \log \frac{n}{\lambda} + 1\}}.
\end{aligned}
\end{equation*}
Therefore, the worst-case total error of the OHSDP protocol is bounded by: 
\begin{equation*}
\begin{aligned}
    B_{\log (n/\lambda) + 1} &= 4\cdot \theta^{(1)} + \dots 
       + 4\cdot\theta^{(\log (n/\lambda))} +  \theta^{(\log (n/\lambda) + 1)} +
        \gamma_{\ell_p}(Q,\lambda) \\
    &= O\Big(\log {n} \cdot \err_{\ell_p}\big(\mathcal{P}_Q,\frac{\varepsilon}{\log n},\frac{\delta}{\log n}, \frac{\beta}{n}\big)\Big)  +\gamma_{\ell_p}(Q,\log n \cdot \log (1/\delta)), 
\end{aligned}
\end{equation*}
with probability at least $1- \beta$.
}

For communication cost, each user sends messages across $L = \log \frac{n}{\lambda} + 1$ levels in the OHSDP protocol.
At level $r \in \{1, \ldots, L\}$, the number of messages sent by each user is given by
\(
\mathrm{M}^{(r)} = \mathrm{Msg}(\mathcal{P}_Q, \varepsilon^{(r)}, \delta^{(r)}, m^{(r)}),
\)
where the parameters are defined as:
\[
(\varepsilon^{(r)}, \delta^{(r)}, m^{(r)}) =
\begin{cases}
\left(\tfrac{\varepsilon}{2L}, \tfrac{\delta}{2L}, \lambda\right) & \text{if } r = 1, \\[6pt]
\left(\tfrac{\varepsilon(2^{r-1}-1)}{2^{r}L}, \tfrac{\delta(2^{r-1}-1)}{2^{r}L}, 2^{r-1}\lambda\right) & \text{if } 1 < r < L, \\[6pt]
\left(\tfrac{\varepsilon(n-1)}{2n}, \tfrac{\delta(n-1)}{2n}, n\right) & \text{if } r = L.
\end{cases}
\]
By summing over all levels, the total number of messages sent by each user is bounded by:
\begin{equation*}
\begin{aligned}
    \sum_{r=1}^{L} \mathrm{M}_r &= O(\log\frac{n}{\lambda}\cdot \mathrm{Msg}(\mathcal{P}_Q, \frac{\varepsilon}{\log \frac{n}{\lambda}}, \frac{\delta}{ \log \frac{n}{\lambda}},\lambda)+\mathrm{Msg}(\mathcal{P}_Q, \frac{\varepsilon}{2}, \frac{\delta}{ 2},n)) \\
    &= O(\log n \cdot \mathrm{Msg}(\mathcal{P}_Q, \frac{\varepsilon}{\log n}, \frac{\delta}{ \log n},\lambda))
\end{aligned}
\end{equation*}

At level $r$, the user data requires $\mathrm{Bit}(\mathcal{P}_Q, \varepsilon^{(r)}, \delta^{(r)},\allowbreak U, m^{(r)})$ bits, 
which is asymptotically bounded by $O(\mathrm{Bit}(\mathcal{P}_Q, \tfrac{\varepsilon}{\log n}, \allowbreak\tfrac{\delta}{\log n}, U, \lambda))$ across all levels. 
The total number of shufflers used in OHSDP is $2n/\lambda - 1$. 
Therefore, each message additionally includes $O(\log n)$ bits to specify the target shuffler.

\end{proof}

\begin{example}[OHSDP for $Q_\text{count}$]
We instantiate the OHSDP protocol using $\mathcal{P}_{Q_\text{count}}$~\cite{ghazi2021differentially}.
When there are no attackers, the protocol achieves an error of $O\big(\frac{1}{\varepsilon}\log\frac{1}{\beta}\big)$, which is asymptotically the same as $\mathcal{P}_{Q_\text{count}}$.
In the presence of a single corrupted user, the error increases to $O\big(\frac{\log^{2} n}{\varepsilon} \log \frac{n}{\beta} \big)$.
In expectation, each user sends $O\big(\log n\big)$ messages, with each message containing $O(\log n)$ bits.
\end{example}

\subsubsection{Handling multiple corrupted users}
\label{sec: Multiple Attackers}

Consider the case where the poisoning attacker can corrupt multiple users. 
To ensure privacy, since in one group there are at most $\hat{k}$ corrupted users, the privacy budget must be scaled by a factor of $(c - \hat{k})/{c}$, where $c$ denotes the group size. However, if $c$ is too small, the privacy budget will be split by at most $k$, thus significantly degrading utility.

To address this issue, we require each group to have strictly more honest members than attackers. Recall that $\hat{k}$ is polylogarithmic in $n$, we choose the first-level size \(\lambda > 2\hat{k}\) to ensure that the lack of noise from attackers cannot compromise the privacy of honest users. For simplicity, we still set $\lambda=\log n \cdot \log (1/\delta)$ with the assumption that $\hat{k}<\log n \cdot \log (1/\delta) / 2$.
It is trivial to see that our detection and recovery mechanisms naturally support this scenario to address detection and recovery in the multiple corrupted users setting. The primary difference lies in the utility guarantee, since multiple corrupted users can collectively affect a larger number of results.

\begin{theorem}
\label{the:k_OHSDP}
    Given any $\varepsilon>0,\; \delta >0,\; n \in \mathbb{Z}_+, \;U \in \mathbb{Z}_{+}$, $\lambda = \log n \cdot \log (1/\delta)$, when there are $k$ corrupted users ($k \leq \hat{k} < \log n \cdot \log (1/\delta) / 2$), for any union-preserving query $Q$, the OHSDP protocol achieves the following:
        \begin{itemize}
             \item The messages received by the analyzer satisfy $(\varepsilon,\delta)$-DP;
             \item With probability at least $1-\beta$, the total error is bounded by 
            \[O\Big(k\log n \cdot \err_{\ell_p}\big(\mathcal{P}_Q, \frac{\varepsilon}{\log n}, \frac{\delta}{\log n},  \frac{\beta}{n}\big)\Big) +  \gamma_{\ell_p}\big(Q, k\cdot\log n \cdot \log (1/\delta)\big);\]
            \item In expectation, each user sends 
            \[
            O\left(\log n \cdot \mathrm{Msg}\big(\mathcal{P}_Q, \frac{\varepsilon}{\log n}, \frac{\delta}{\log n},\log n \cdot \log (1/\delta)\big)\right)
            \]
            messages, each containing $O\Big(\mathrm{Bit}\big(\mathcal{P}_Q,\frac{\varepsilon}{\log n}, \frac{\delta}{\log n},U,\log n \cdot \log (1/\delta)\big)+\log n\Big)$ bits. 
         \end{itemize}  
\end{theorem}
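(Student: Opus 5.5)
The proof follows the template of Theorems~\ref{the:HSDP theorem} and~\ref{the:OHSDP}, in three parts: privacy, utility and communication.

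\textbf{Privacy.} Every group at level $r\ge 1$ has size $c\ge\lambda>2\hat{k}$ and contains at most $k\le\hat{k}$ corrupted users. I will rescale each level's budget by $(c-\hat{k})/c\ge 1/2$. Then the at least $c-\hat{k}$ honest members alone generate enough randomness for $(\varepsilon^{(r)},\delta^{(r)})$-DP, as in the robust shuffle-DP argument~\cite{balcer2021connecting}; the rescaling changes the parameters only by a constant factor. Within a level I apply parallel composition (Lemma~\ref{lem:Parallel Composition}). Across the $\log(n/\lambda)+1$ levels I apply basic composition (Lemma~\ref{lem:Sequential Composition}), and detection and recovery are covered by post processing (Lemma~\ref{lem:Post Processing}). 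The budget allocation is the same as in OHSDP, so the total is $(\varepsilon,\delta)$-DP. The shuffler-ID mechanism restricts each corrupted user to its own shufflers, so there are no cross-group impersonation attacks.

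\textbf{Utility.} Condition on the event, of probability at least $1-\beta$ by the union bound, that every level-$r$ output satisfies $\|\tilde{Q}^{(r)}_g-Q(\text{honest-message part})\|\le\theta^{(r)}$. Call a tree node \emph{tainted} if its group contains a corrupted user. Since each corrupted user lies on a single root-to-leaf path, at most $k$ nodes per level are tainted, and there are at most $k$ tainted bottom-level groups in total.

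I will show by induction on $r$ that the recovered output of any node $v$ at level $r$ has error at most
\[
4\sum_{s<r} t_s(v)\,\theta^{(s)}+\theta^{(r)}+\gamma_{\ell_p}(Q,\lambda\, t_1(v)),
\]
plus the honest noise of untainted children, where $t_s(v)$ counts the tainted level-$s$ descendants of $v$.
\begin{itemize}
\item \emph{Untainted node:} its error is just $\theta^{(r)}$.
\item \emph{Tainted node, not flagged:} its error is at most the sum of its children's errors plus the slack $2\theta^{(r-1)}+\theta^{(r)}$. Every tainted child contributes its own inductive bound, which yields the factor $4$ exactly as in the single-attacker $B_r$ recursion.
\item \emph{Tainted node, flagged:} it is replaced by the sum of its children. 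Its error is then at most the children's error sum, which is no larger than the unflagged case, mirroring the comparison $B\ge R$ in Theorem~\ref{the:HSDP theorem}.
\end{itemize}
The input-bias terms simply add: the tainted bottom groups have total size at most $k\lambda$, and union-preservation gives $\sum\gamma_{\ell_p}(Q,\lambda)$ over at most $k$ groups, which is bounded by $\gamma_{\ell_p}(Q,k\lambda)$ (or I can bound it directly as $k\gamma_{\ell_p}(Q,\lambda)$).

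At the root, $t_s\le k$ for every $s$, so the error is
\[
O\Big(k\sum_s\theta^{(s)}\Big)+\gamma_{\ell_p}(Q,k\lambda).
\]
Bounding each $\theta^{(s)}$ exactly as in Theorem~\ref{the:OHSDP}, the $\le 1/2$ rescaling costs only a constant, and this gives the claimed bound.

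\textbf{Main obstacle.} The key step is the induction in the presence of several tainted children. I must verify that untainted siblings contribute only their own $\theta$, and that this is already absorbed in the parent's slack term. I also must verify that the errors of the $k$ paths add linearly rather than multiplying when paths merge: they add because each tainted child's bound appears once in the parent's sum.

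\textbf{Communication.} This is identical to Theorem~\ref{the:OHSDP}. The factor-$1/2$ budget rescaling changes $\mathrm{Msg}$ and $\mathrm{Bit}$ only by constants, by the polynomial-dependence observation.
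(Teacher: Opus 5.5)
Your proof is correct. Privacy and communication match the paper's argument: rescale by $(c-\hat{k})/c\ge 1/2$, then apply parallel, basic and post-processing composition, with only constant-factor changes to $\mathrm{Msg}$ and $\mathrm{Bit}$.

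For utility you take a different and cleaner route. The paper fixes a presumed worst-case configuration: the $k$ corrupted users sit in distinct groups until the top $\lceil\log k\rceil$ levels, and every check is evaded. For that configuration it computes $k B_l + 3k\theta^{(l)} + \sum_{i>l}2^{l-i}k\,\theta^{(i)} + \theta^{(\log(n/\lambda)+1)}$. It then argues separately, and informally, that flagging and recovery can only lower the bound.

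Your node-wise induction over tainted counts $t_s(v)$ covers every placement of the corrupted users and every flag pattern at once. It uses only three local facts:
\begin{itemize}
\item an untainted node is never flagged and has error at most $\theta^{(r)}$;
\item an unflagged tainted node has error at most its children's errors plus $2\theta^{(r-1)}+\theta^{(r)}$;
\item a flagged node has error at most its children's errors.
\end{itemize}
So you never need the unproved claim about which configuration is worst. You lose the geometric thinning of tainted nodes near the root, because you use $t_s\le k$ at every level. The paper discards that refinement anyway when it passes to $4k\sum_i\theta^{(i)}$.

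Two small tightenings. First, the clause ``plus the honest noise of untainted children'' can be dropped from the hypothesis. With one tainted child $a$ and one untainted child $b$, the coefficient of $\theta^{(r-1)}$ is $1$ (from $a$'s bound) $+1$ (from $b$) $+2$ (slack), which is exactly $4=4t_{r-1}(v)$. With two tainted children it is $4\le 8$. Second, the high-probability event is only needed for untainted nodes, and the union bound over those suffices. For tainted nodes, ``$\|\tilde Q-Q(\text{honest part})\|\le\theta^{(r)}$'' is neither meaningful nor used, since you only invoke the detection check there.

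Finally, your observation that $\sum_{j\le k}\gamma_{\ell_p}(Q,\lambda)\le\gamma_{\ell_p}(Q,k\lambda)$ for union-preserving $Q$ holds, in fact with equality. It closes the small gap between the paper's final term $k\cdot\gamma_{\ell_p}(Q,\lambda)$ and the $\gamma_{\ell_p}(Q,k\log n\log(1/\delta))$ in the statement.
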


\begin{proof}
\label{sec:Proof of Theorem 4.4}

For privacy, an increase in the number of corrupted users does not affect the privacy guarantees of the OHSDP protocol with privacy budgets scaled by $(c - \hat{k})/{c}$. Therefore, the protocol continues to satisfy $(\varepsilon, \delta)$-DP.

For utility, with probability at least $1- \beta$, the {message noise} for level $r$ falls within $\theta^{(r)}$ for $r \in \{1, 2, \dots, \log (n/\lambda) + 1\}$.
The upper bounds $B_1$ and $R^1_1$ for the level $1$ groups in the proof of Theorem~\ref{the:OHSDP} still hold in the presence of multiple corrupted users.

{If all corrupted users evade detection, the worst-case error occurs when the $k$ corrupted users are distributed across distinct groups at the same level. In this scenario, all $k$ corrupted users remain in separate groups until the final $\lceil \log k \rceil$ levels.
Moreover, since each group contains only one corrupted user, 
\begin{equation*}
     B_r = 4\cdot \theta^{(1)} +  4\cdot\theta^{(2)} + \dots 
         +4 \cdot \theta^{(r-1)} +  \theta^{(r)} +
        \gamma_{\ell_p}(Q,\lambda)
\end{equation*}
holds for level $r \in \{2, 3, \cdots, \log(n/\lambda) + 1 - \lceil \log k\rceil\}$. 
As a result, at level $l = \log(n/\lambda) + 1 - \lceil \log k\rceil$, there are $k$ groups each with an error upper bound of $B_l$, while the remaining groups have errors bounded by $\theta^{(l)}$.
Since each merge of two groups from level $r$ to level $r +1$ adds up the error of at most $2\theta^{(r)} + \theta^{(r+1)}$, with probability at least $1 -\beta$, the final error bound satisfies 
\begin{equation*}
\begin{aligned}
    B_{\log(n/\lambda) + 1} &\leq k \cdot B_l + 3k\cdot \theta^{(l)} + 
    \sum_{i = l+1}^{\log(n/\lambda)} 2^{l-i}k\cdot\theta^{(i)} + 
    \theta^{(\log(n/\lambda) + 1)} \\
    &\leq 4k \cdot \sum_{i = 1}^{\log (n/\lambda)} \theta^{(i)} +  \theta^{(\log(n/\lambda) + 1)} + k\cdot \gamma_{\ell_p}(Q,\lambda)
\end{aligned}
\end{equation*}}
{where each \(\theta^{(r)}\) denotes \(\mathrm{Error}_{\ell_p}(\mathcal{P}_Q, \allowbreak \varepsilon^{(r)}, \delta^{(r)}, \beta^{(r)})\), with privacy parameters allocated identically to those in the single-attacker setting (see the proof of Theorem~\ref{the:OHSDP}).}

{Additionally, we can show that the worst-case error bound, with probability at least $1 - \beta$, arises when all detection fails.
For any group at level $r > 1$, if detection and recovery occur, the post-recovery error bound is equal to the sum of the error bounds of its subgroups at level $r - 1$. In contrast, if detection fails, the error may increase by at most $2\theta^{(r)} + \theta^{(r+1)}$ beyond this base.
At {the bottom level}, we have $B_1 \geq R^1_1$, ensuring that the worst-case scenario remains valid at the lowest level.}

In summary, the error bound for the OHSDP protocol with $k$ corrupted users is
\begin{equation*}
\begin{aligned}
    B_{\log(n/\lambda) + 1} 
    &\leq 4k \cdot \sum_{i = 1}^{\log (n/\lambda)} \theta^{(i)} +  \theta^{(\log(n/\lambda) + 1)} + k\cdot \gamma_{\ell_p}(Q,\lambda) \\
    &= O\left(k\log n \cdot \err_{\ell_p}(\mathcal{P}_Q, \frac{\varepsilon}{\log n}, \frac{\delta}{\log n},  \frac{\beta}{n})\right) +  k \cdot \gamma_{\ell_p}\big(Q, \lambda\big) 
\end{aligned}
\end{equation*}
with probability at least $1 - \beta$. 

As adversarial behavior has no impact on the communication process of honest users, their cost remains the same as in the single-attacker scenario. 
Accordingly, the analysis proceeds as in the proof of Theorem~\ref{the:OHSDP}.

\end{proof}

\begin{remark}
    Our protocol can be easily extended to the case where $\hat{k}$ exceeds $\log n \cdot \log (1/\delta) / 2$ by setting $\lambda = 2\hat{k}$ feasibly. 
    The error bound becomes $O\big(k\log n \cdot \err_{\ell_p}(\mathcal{P}_Q, \frac{\varepsilon}{\log n}, \frac{\delta}{\log n}, \frac{\beta}{n}) \big)+ \gamma_{\ell_p}(Q, 2k\hat{k})$ with probability at least $1-\beta$, and in expectation, each user sends $O\big(\log n\cdot\text{Msg}(\mathcal{P}_Q, \frac{\varepsilon}{ \log n}, \frac{\delta}{\log n}, \hat{k})\big)$ messages, each containing $O\big(\mathrm{Bit}(\mathcal{P}_Q,\frac{\varepsilon}{\log n}, \frac{\delta}{\log n},U,\hat{k})+\log n\big)$ bits.
\end{remark}

\begin{example}[OHSDP for $Q_\text{count}$ in Multiple Corrupted Users Setting]
    We instantiate the OHSDP protocol using $\mathcal{P}_{Q_\text{count}}$~\cite{ghazi2021differentially}, which achieves an error of $ O\big( \frac{k\cdot\log^{2} n } {\varepsilon}\cdot \log \frac{n}{\beta} + k\cdot\log n \cdot \log (1/\delta) \big) $. 
    In expectation, each user sends $O(\log n) $ messages, with each message containing $O(\log n)$ bits.
\end{example}

\section{Application}
\label{sec:Application}

In this section, we apply our OHSDP framework in three union-preserving queries: summation, frequency estimation, and range counting, as defined in Section~\ref{sec:notation}, and further explore its extension to complex queries such as $k$-selection and OLAP.

\paragraph{Summation}
For the summation problem, we introduce and integrate two SOTA solutions respectively.
The first protocol, IKOS-based BBGN \cite{balle2020privatesum}, improved from the IKOS exact summation protocol \cite{ishai06cryptography}, achieves central-DP error of $O\big(\tfrac{U}{\varepsilon}\log\tfrac{1}{\beta}\big)$. Each of $n$ users is expected to send $O\big(1+\tfrac{\log U}{\log n}\big)$ messages of $O(\log U + \log n)$ bits.
By integrating it into the OHSDP framework, we have:
\begin{itemize}
\item Without attackers, the error remains $O\big(\tfrac{U}{\varepsilon}\log\tfrac{1}{\beta}\big)$;
\item With one corrupted user, the error becomes $O\big(\tfrac{U \log^{2}n}{\varepsilon}\log\tfrac{n}{\beta}\big)$;
\item The expected number of messages is $O(\log U \log \log n)$, each of $O(\log U + \log n)$ bits.
\end{itemize}

The second protocol, GKMPS \cite{ghazi2021differentially}, also achieves central-DP error $O\big(\tfrac{U}{\varepsilon}\log\tfrac{1}{\beta}\big)$, with each user sending $1 + O\big(\tfrac{U \log^2 U \log (U/\delta)}{\varepsilon n}\big)$ messages of $O(\log U)$ bits.
By integrating it into OHSDP, we have:
\begin{itemize}
\item The error remains $O\big(\tfrac{U}{\varepsilon}\log\tfrac{1}{\beta}\big)$ without attackers;
\item With one corrupted user, the error becomes $O\big(\tfrac{U \log^{2}n}{\varepsilon}\log\tfrac{n}{\beta}\big)$;
\item The expected number of messages is $O\big(\log n + \tfrac{U \log^2 U \log (U/\delta)}{\varepsilon \log (1/\delta)}\big)$, each of $O(\log U + \log n)$ bits.
\end{itemize}

\paragraph{Frequency estimation}

For the frequency estimation (a.k.a. histogram) problem, the SOTA solution LWY \cite{luo2022frequency} sends $O(1)$ messages of $O(\log U)$ bits and achieves an $\ell_{\infty}$-error of
$O\big(\tfrac{1}{\varepsilon}\sqrt{\log \tfrac{U}{\beta} \log \tfrac{1}{\delta}}\big)$.
By integrating it into the OHSDP framework, we have:
\begin{itemize}
\item Without attackers, the error remains $O\big(\tfrac{1}{\varepsilon}\sqrt{\log \tfrac{U}{\beta} \log \tfrac{1}{\delta}}\big)$;
\item With one corrupted user, the error becomes $O\big(\tfrac{\log^{2}n}{\varepsilon}\sqrt{\log \tfrac{U}{\beta} \log \tfrac{1}{\delta}}\big);$
\item The expected number of messages is $O(\log n)$, each of $O(\log U + \log n)$ bits.
\end{itemize}

\paragraph{Range counting}
Range counting can be reduced to frequency estimation by building a binary tree over the domain $[0,U]$, where each node represents a range and each level forms a frequency estimation problem.
The SOTA shuffle-DP protocol LYD+ \cite{rm2} achieves an $\ell_{\infty}$-error of
$O\big(\tfrac{\log^{1.5} U}{\varepsilon} \sqrt{\log \tfrac{U}{\beta} \log \tfrac{1}{\delta}}\big)$, with $O(\log U)$ messages per user, each of $O(\log U)$ bits.
By integrating it into the OHSDP framework, we have:
\begin{itemize}
\item Without attackers, the error remains $O\big(\tfrac{\log^{1.5} U}{\varepsilon} \sqrt{\log \tfrac{U}{\beta} \log \tfrac{1}{\delta}}\big)$;
\item With one corrupted user, it becomes $O\big(\tfrac{\log^{2} n \cdot \log^{1.5} U}{\varepsilon} \sqrt{\log \tfrac{U}{\beta} \log \tfrac{1}{\delta}}\big)$;
\item The expected number of messages is $O(\log U \log n)$, each of $O(\log U + \log n)$ bits.
\end{itemize}

Furthermore, $k$-selection queries (which return the $k$-th largest item in the dataset) can be answered via range counting~\cite{dong2023continual}. Our framework supports these queries under poisoning attacks with $\tilde{O}(1)$ rank error, but only of theoretical interest due to its large polylogarithmic terms.

\paragraph{OLAP queries} 
{OLAP queries typically involve five core operations: join, select, project, aggregate, and group-by. Among these, join and projection (distinct count) are not union-preserving and are therefore outside the scope of our framework. Queries that combine selection with sum or count aggregation can be reduced to standard sum or count queries, which our framework directly supports. Group-by queries can be interpreted as answering multiple queries, which can be handled using DP basic composition theory (see Lemma~\ref{lem:Sequential Composition}).}

\section{Experiments}

We conduct experiments on three union-preserving queries: bit counting ($Q_\text{count}$), summation ($Q_\text{sum}$), and frequency estimation ($Q_\text{hist}$), comparing our optimized hierarchical shuffle-DP protocols with the SOTA methods on both synthetic and real-world datasets.
Specifically, for $Q_\text{count}$, we compare against CSUZZ~\cite{cheu2019distributed}, GKMPS~\cite{ghazi2021differentially}, and BBGN~\cite{balle2020privatesum}\footnote{We use the IKOS-based solution only in our experiments.} with our defense framework built on top of GKMPS and BBGN, referred to as Ours+GKMPS and Ours+BBGN.
For $Q_\text{sum}$, we compare against GKMPS and BBGN with Ours+BBGN.
For $Q_\text{hist}$, we compare against LWY~\cite{luo2022frequency} and CZ~\cite{cheu2022differentially} with Ours+LWY.

\subsection{Setup}
\paragraph{Datasets} 
We incorporate a total of seven datasets in our experiments.  
The real-world datasets include two salary datasets from Kaggle, namely San Francisco Salary (SF-Sal) \cite{kaggle_sf_salaries_2014}, and Brazil Salary (BR-Sal) \cite{kaggle_brazil_salary_2020}. 
The remaining two are the Adult dataset (Adult)~\cite{kohavi1996scaling} from the U.S. Census, and the AOL dataset (AOL)~\cite{pass2006picture} containing real-world web search accesses.
The synthetic datasets are generated using Uniform distribution (Unif), Zipfian distribution\footnote{We sample $n$ values from the Zipfian distribution (probability mass function $f(x) \propto x^{-a}$) and take modulo $U$.}(Zipf) with $a=1.5$, and Gaussian distribution\footnote{We sample $n$ values from the Gaussian distribution (probability mass function $f(x) \propto \text{exp}(-\frac{(x-\mu)^2}{2\sigma^2})$) and round to integers from $0$ to $U$.}(Gauss) with $\mu=\sigma=U/5$.

\begin{itemize}
    \item $Q_\text{count}$: we use there real-world datasets: Adult with $n=2^{15}$, SF-Sal with $n=2^{17}$, and BR-Sal with $n=2^{20}$. We count the number of females in the Adult dataset, and the number of individuals earning above the average salary in the salary datasets.
    Synthetic datasets are generated from three distributions with $n=2^{24}$.
    \item $Q_\text{sum}$: we use the same there real-world datasets. On Adult, we sum the ages of individuals, with age values capped at $130$. On the salary datasets, we compute the total salary, with salary values capped at $U = 2.5 \times 10^5$. 
    Synthetic datasets are generated from three distributions with $n = 2^{24}$ and $U = 10^5$.
    \item $Q_\text{hist}$: three real-world datasets are used: AOL for website visit frequencies, SF-Sal and BR-Sal for salary frequencies. 
    Synthetic datasets are generated from three distributions, with both real and synthetic datasets set to $n = U = 2^{17}$.
\end{itemize}

\paragraph{Poisoning attacks}
For all queries, we randomly select $k$ users from the user population as corrupted users. By default, we set $k = 1$.  
For $Q_\text{count}$, each corrupted user sends $n$ messages with value $1$ to the shufflers.  
For $Q_\text{sum}$, each corrupted user sends $n$ messages with value $U$ to the shufflers.  
For $Q_\text{hist}$, each corrupted user sends $n$ messages with value $1$ for every bin in the range $[0, U]$.

\paragraph{{Experimental environment and parameters}}
\label{sec:parameters and environment}
{All experiments are conducted on a single Linux server with an Intel Xeon CPU @ 2.50GHz and $178$ GB memory, where we simulate the outputs of all users (with fresh randomness) and aggregate their results, following the convention of previous distributed DP settings~\cite{rm2,cormode2018marginal,li2024local,he2024common,kulkarni2019answering,li2020estimating,ren2022ldp,he2025robust,yu2025privrm,10.14778/3424573.3424576}.}
{This single-node simulation exactly reproduces the algorithmic process of a real distributed setting in terms of both accuracy and communication, as both depend solely on the generated messages rather than the physical distribution of computation. Moreover, as achieving high utility requires a large $n$, implementing true $n$-node distributed environment would be prohibitively expensive.}
We set the privacy budgets as $\varepsilon = 1$, $\delta = n^{-2}$, and confidence level $\beta = 0.1$ by default. While we set $\varepsilon = 4$ for $Q_\text{hist}$. We choose the optimal $\lambda$ for each setting in our protocols.
We use $\ell_1$-error for $Q_\text{count}$ and $Q_\text{sum}$, and $\ell_\infty$-error for $Q_\text{hist}$.
Each experiment is repeated $100$ times, and we report the average after discarding the top $10\%$ and bottom $10\%$ of the results.

\subsection{Experimental Results}

\subsubsection{Utility and communication}
\paragraph{Performance comparison with SOTA protocols}

We evaluate the performances, including the relative error\footnote{For $Q_\text{hist}$, we define the relative error with respect to the data size $n$, i.e., \(\ell_\infty\) error divided by $n$.}, the average number of messages per user, and the number of bits per message, under with and without attack settings to the three queries on both synthetic and real-world datasets, shown in Table~\ref{tab:synthetic_all} and Table~\ref{tab:realdata_all}, respectively.

The results show a clear superiority of our defense frameworks in terms of utility under poisoning attacks (see Relative Error(w/ atk) column).
For $Q_\text{count}$ and $Q_\text{sum}$, all SOTA methods fail to return a reasonable result, i.e., the relative error is larger than 100\%. It shows the damage of one corrupted user conducting poisoning attacks.
But luckily, our frameworks successfully detect and mitigate the attacks, recovering results with relative errors well below 1\%, thereby preserving high utility.
The only exception is CZ for $Q_\text{hist}$, which is the only related protocol considering poisoning attacks. 
However, it sends more than 16KB\footnote{We do not include the extra cost of blind signature~\cite{chaum1983blind}. The 16KB message cannot defend against poisoning attacks.} per message, making it only practical for small domain settings.
When there is no attack, our frameworks consistently achieve only $2 \times$ larger error than the base protocols.
Comparing the error with and without attack for our protocols, the error with attack is no more than $225 \times$ larger than that without attack, which is roughly a $(\log n)^2 \times$ increase.
The gap is much smaller (only $10 \times$) for $Q_{\text{hist}}$ because the users in the drop out group split their contribution to different counters and reduce the error.
The distribution of dataset does not affect the error without attack. But when there is an attacker, our defense frameworks will drop out the entire group of size $\lambda$ containing corrupted user, and this leads to different errors in different distributions.

\begin{table}[t]
\centering
\renewcommand{\arraystretch}{1.2}
\resizebox{\textwidth}{!}{
\begin{tabular}{c|c||c|c|c||c|c|c||c|c|c||c}
\hline
\multirow{3}{*}{Query} & \multirow{3}{*}{Protocol}
  & \multicolumn{3}{c||}{Unif}
  & \multicolumn{3}{c||}{Zipf}
  & \multicolumn{3}{c||}{Gauss}
  & \multirow{3}{*}{\#Bits} \\
\cline{3-11}
& &
  \multicolumn{2}{c|}{Relative Error (\%)} & \multirow{2}{*}{\#Msgs}
  & \multicolumn{2}{c|}{Relative Error (\%)} & \multirow{2}{*}{\#Msgs}
  & \multicolumn{2}{c|}{Relative Error (\%)} & \multirow{2}{*}{\#Msgs}
  & \\
  \cline{3-4} \cline{6-7} \cline{9-10}
& &
  (w/o atk) & (w/atk) 
  & 
  & (w/o atk) & (w/atk)
  & 
  & (w/o atk) & (w/atk)
  & 
  & \\
\hline
\multirow{5}{*}{$Q_{\text{count}}$}
& CSUZZ      & $3.80 \times 10^{-5}$ & 200.05 & 1   
             & $3.01 \times 10^{-5}$ & 154.66 & 1   
             & $3.00 \times 10^{-4}$ & 1495.45 & 1   & 1 \\
& GKMPS      & $1.12 \times 10^{-5}$ & 200.05 & 0.50  
             & $1.09 \times 10^{-5}$ & 154.67 & 0.65  
             & $8.80 \times 10^{-5}$ & 1495.45 & 0.07 & 2 \\
& BBGN       & $1.04 \times 10^{-5}$ & 200.05 & 6    
             & $6.91 \times 10^{-6}$ & 154.67 & 6    
             & $6.69 \times 10^{-5}$ & 1495.45 & 6    & 28 \\
& Ours+GKMPS & $2.82 \times 10^{-5}$ & $3.25 \times 10^{-3}$ & 2704.97 
             & $2.27 \times 10^{-5}$ & $2.88 \times 10^{-3}$ & 2707.23 
             & $2.17 \times 10^{-4}$ & $1.53 \times 10^{-2}$ & 2697.92  & 18 \\
& Ours+BBGN  & $1.95 \times 10^{-5}$ & $3.38 \times 10^{-3}$ & 140  
             & $1.41 \times 10^{-5}$ & $3.03 \times 10^{-3}$ & 140  
             & $1.28 \times 10^{-4}$ & $1.13 \times 10^{-2}$ & 140   & 45 \\
\hline
\multirow{3}{*}{$Q_{\text{sum}}$}
& GKMPS     & $1.49 \times 10^{-5}$ & 199.98 & 2583.63 
            & $2.31 \times 10^{-3}$ & $2.84 \times 10^{4}$ & 2587.28 
            & $2.97 \times 10^{-5}$ & 461.51 & 2586.98 & 15 \\
& BBGN      & $9.70 \times 10^{-6}$ & 199.98 & 7 
            & $1.56 \times 10^{-3}$ & $2.84 \times 10^{4}$ & 7 
            & $2.08 \times 10^{-5}$ & 461.51 & 7 & 44 \\
& Ours+BBGN & $2.26 \times 10^{-5}$ & $3.16 \times 10^{-3}$ & 159 
            & $2.89 \times 10^{-3}$ & 0.23 & 159 
            & $4.00 \times 10^{-5}$ & $4.25 \times 10^{-3}$ & 159 & 61 \\
\hline
\multirow{3}{*}{$Q_\text{hist}$}
& CZ         & $5.84 \times 10^{-2}$ & $5.75 \times 10^{-2}$ & 3    
             & 0.12                  & 0.11                  & 3    
             & $5.50 \times 10^{-2}$ & $5.83 \times 10^{-2}$ & 3    & $2^{17^*}$ \\
& LWY        & $2.53 \times 10^{-2}$ & 100.05 & 66.88   
             & $2.58 \times 10^{-2}$ & 100.23 & 66.88   
             & $2.59 \times 10^{-2}$ & 99.87 & 66.88   & 54 \\
& Ours+LWY   & $5.02 \times 10^{-2}$ & 0.57 & 514.83  
             & $5.03 \times 10^{-2}$ & 0.64 & 514.83  
             & $5.02 \times 10^{-2}$ & 0.58 & 514.83  & 69 \\
\hline
\end{tabular}
}
\caption{Comparison of protocols on synthetic datasets. }
\label{tab:synthetic_all}
\end{table}

\begin{table}[t]
\centering
\renewcommand{\arraystretch}{1.2}
\resizebox{\textwidth}{!}{
\begin{tabular}{c|c||c|c|c|c||c|c|c|c||c|c|c|c}
\hline
\multirow{2}{*}{} & \multirow{2}{*}{} 
  & \multicolumn{2}{c|}{\makecell{Relative Error (\%)}} & \multirow{2}{*}{\#Msgs} & \multirow{2}{*}{\#Bits}
  & \multicolumn{2}{c|}{\makecell{Relative Error (\%)}} & \multirow{2}{*}{\#Msgs} & \multirow{2}{*}{\#Bits}
  & \multicolumn{2}{c|}{\makecell{Relative Error (\%)}} & \multirow{2}{*}{\#Msgs} & \multirow{2}{*}{\#Bits} \\
\cline{3-4} \cline{7-8} \cline{11-12}
& & (w/o atk) & (w/ atk) & & 
  & (w/o atk) & (w/ atk) & & 
  & (w/o atk) & (w/ atk) & & \\
\hline
\multirow{6}{*}{$Q_\text{count}$}
& Dataset
  & \multicolumn{4}{c||}{Adult}
  & \multicolumn{4}{c||}{SF\_Sal}
  & \multicolumn{4}{c}{BR\_Sal} \\
\cline{2-14}
& CSUZZ
  & $2.31\times10^{-2}$ & $302.55$   & $1$   & $1$
  & $3.71\times10^{-3}$ & $206.99$   & $1$   & $1$
  & $7.70\times10^{-4}$ & $279.83$   & $1$   & $1$ \\
& GKMPS
  & $1.13\times10^{-2}$ & $302.29$   & $0.74$   & $2$
  & $1.38\times10^{-3}$ & $206.96$   & $0.60$   & $2$
  & $2.96\times10^{-4}$ & $279.82$   & $0.36$   & $2$ \\
& BBGN
  & $7.61\times10^{-3}$ & $302.34$   & $9$   & $19$
  & $1.07\times10^{-3}$ & $206.96$   & $8$   & $21$
  & $2.05\times10^{-4}$ & $291.99$   & $7$   & $24$ \\
& Ours+GKMPS
  & $2.61\times10^{-2}$ & $0.84$     & $1638.72$ & $10$
  & $4.09\times10^{-3}$ & $0.11$     & $1051.23$ & $11$
  & $7.21\times10^{-4}$ & $0.04$     & $1679.21$ & $14$ \\
& Ours+BBGN
  & $1.36\times10^{-2}$ & $0.55$     & $109$ & $29$
  & $1.91\times10^{-3}$ & $0.15$     & $146$ & $34$
  & $4.91\times10^{-4}$ & $0.03$     & $147$ & $39$ \\
\hline
\multirow{4}{*}{$Q_\text{sum}$}
& Dataset
  & \multicolumn{4}{c||}{Adult}
  & \multicolumn{4}{c||}{SF\_Sal}
  & \multicolumn{4}{c}{BR\_Sal} \\
\cline{2-14}
& GKMPS
  & $1.27\times10^{-2}$ & $336.93$    & $3909.41$ & $9$
  & $3.36\times10^{-3}$ & $352.21$    & $11367.11$ & $11$
  & $8.22\times10^{-3}$  & $7369.82$   & $6053.09$  & $13$ \\
& BBGN
  & $9.36\times10^{-3}$ & $336.94$    & $9$   & $26$
  & $2.08\times10^{-3}$ & $352.21$    & $9$   & $39$
  & $6.55\times10^{-3}$  & $7369.82$   & $8$        & $42$ \\
& Ours+BBGN
  & $1.80\times10^{-2}$ & $0.51$      & $115$     & $36$
  & $4.48\times10^{-3}$ & $0.13$      & $125$     & $50$
  & $1.32\times10^{-2}$ & $0.41$      & $101$     & $53$ \\
\hline
\multirow{4}{*}{$Q_\text{hist}$}
& Dataset
  & \multicolumn{4}{c||}{AOL}
  & \multicolumn{4}{c||}{SF\_Sal}
  & \multicolumn{4}{c}{BR\_Sal} \\
\cline{2-14}
& CZ
  & $7.12\times10^{-2}$  & $7.10\times10^{-2}$   & $3$ & ${2^{17}}^{*}$
  & $5.81\times10^{-2}$  & $5.90\times10^{-2}$   & $3$ & ${2^{17}}^{*}$
  & $5.85\times10^{-2}$  & $5.80\times10^{-2}$   & $3$ & ${2^{17}}^{*}$ \\
& LWY
  & $2.57\times10^{-2}$  & $100.03$ & $66.88$ & $54$
  & $2.57\times10^{-2}$  & $100.02$ & $66.88$ & $54$
  & $2.57\times10^{-2}$  & $100.00$ & $66.88$ & $54$ \\
& Ours+LWY
  & $5.09\times10^{-2}$  & $0.58$  & $514.83$ & $69$
  & $5.08\times10^{-2}$  & $0.57$  & $514.83$ & $69$
  & $4.95\times10^{-2}$  & $0.58$  & $514.83$ & $69$ \\
\hline

\end{tabular}
}
\caption{Comparison of protocols on real-world datasets.}
\label{tab:realdata_all}
\end{table}

In terms of communication cost, our defense frameworks achieve approximately $(\log n) \times$ larger number of messages than the base protocols BBGN and LWY, which send $O(1)$ messages. 
Since GKMPS has a poor performance when $n$ is not larger than $U$ too much, the communication costs are very large for Ours+GKMPS in $Q_\text{count}$ and GKMPS in $Q_\text{sum}$, which matches our analysis in Table~\ref{tab:contribution}.
Our framework sends $O(\log n)$ more bits in each message.
The distribution of dataset does not affect the performance.

\begin{table}[t]
    \centering
    \renewcommand{\arraystretch}{1.2} 
    \resizebox{\textwidth}{!}{ 
    {\begin{tabular}{c||cccc|cccc||cccc} 
        \toprule
        & \multicolumn{4}{c|}{Error (w/o attacker)}
        & \multicolumn{4}{c||}{Error (w/ attacker)}
        & \multicolumn{4}{c}{\#Msgs per user} \\
        \midrule
        $n$
        & $2^{12}$ & $2^{16}$ & $2^{20}$ & $2^{24}$
        & $2^{12}$ & $2^{16}$ & $2^{20}$ & $2^{24}$
        & $2^{12}$ & $2^{16}$ & $2^{20}$ & $2^{24}$ \\
        \midrule \midrule
        SUSDP+GKMPS
        & 89.09 & 391.73 & 1605.68 & 6415.31
        & 94.34 & 341.90 & 1461.25 & 6826.24
        & 10895.32 & 14184.59 & 17477.93 & 21256.80 \\
        BSDP+GKMPS
        & 3.41 & 3.63 & 3.79 & 2.85
        & 43.71 & 87.43 & 172.41 & 402.44
        & 46646.16 & 59943.92 & 73251.49 & 86559.85 \\
        OHSDP+GKMPS
        & 2.38 & 2.34 & 2.56 & 2.40
        & 52.00 & 116.7 & 160.67 & 197.03
        & 6134.05 & 13274.97 & 10484.93 & 7581.61 \\
        SUSDP+BBGN
        & 65.81 & 233.60 & 1122.93 & 4823.43
        & 68.18 & 225.80 & 963.40 & 4915.28
        & 3 & 3 & 3 & 3 \\
        BSDP+BBGN
        & 2.24 & 2.33 & 2.34 & 2.46
        & 30.36 & 72.56 & 143.95 & 261.18
        & 35 & 26 & 22 & 20 \\
        OHSDP+BBGN
        & 1.61 & 1.50 & 1.75 & 1.64
        & 41.40 & 77.56 & 104.43 & 125.60
        & 102 & 117 & 147 & 173\\
        \bottomrule
    \end{tabular}}
    }
    \caption{{Comparison of SUSDP, BSDP, and OHSDP protocols on uniform synthetic datasets varying data size $n$.}}
    \label{tab:baseline_experiments}
\end{table}

\paragraph{Performance comparison with SUSDP and BSDP protocols.}
{We evaluate the performance of the strawman single-user shuffle-DP (SUSDP) and the block-based shuffle-DP (BSDP) protocols, each instantiated with the BBGN and GKMPS protocols, on the bit-counting query $Q_{\text{count}}$. 
The comparison is conducted on Unif under varying data sizes $n \in \{2^{12}, 2^{16}, 2^{20}, 2^{24}\}$, alongside our proposed hierarchical protocol (OHSDP). 
The results are consistent with our theoretical analysis and validate the predicted asymptotic behaviors: 
the error of SUSDP grows at a rate of $O(\sqrt{n})$ (Theorem~\ref{theorem: Strawman}), 
that of BSDP scales approximately as $O(n^\frac{1}{4})$ (Theorem~\ref{the: BSDP theorem}), 
while the error of OHSDP increases only logarithmically, following $O(\log^2 n)$ (Theorem~\ref{the:HSDP theorem}). 
These results demonstrate that, OHSDP achieves better utility compared with both baselines at large data size $n$. }

\begin{figure}[t]
    \centering
        \centering
    \includegraphics[width=\linewidth]{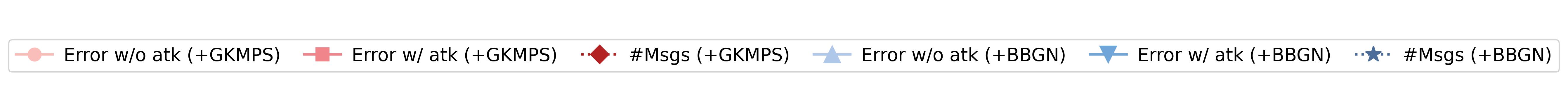} 
    \begin{subfigure}[t]{0.247\textwidth}
        \centering
        \includegraphics[width=\linewidth]{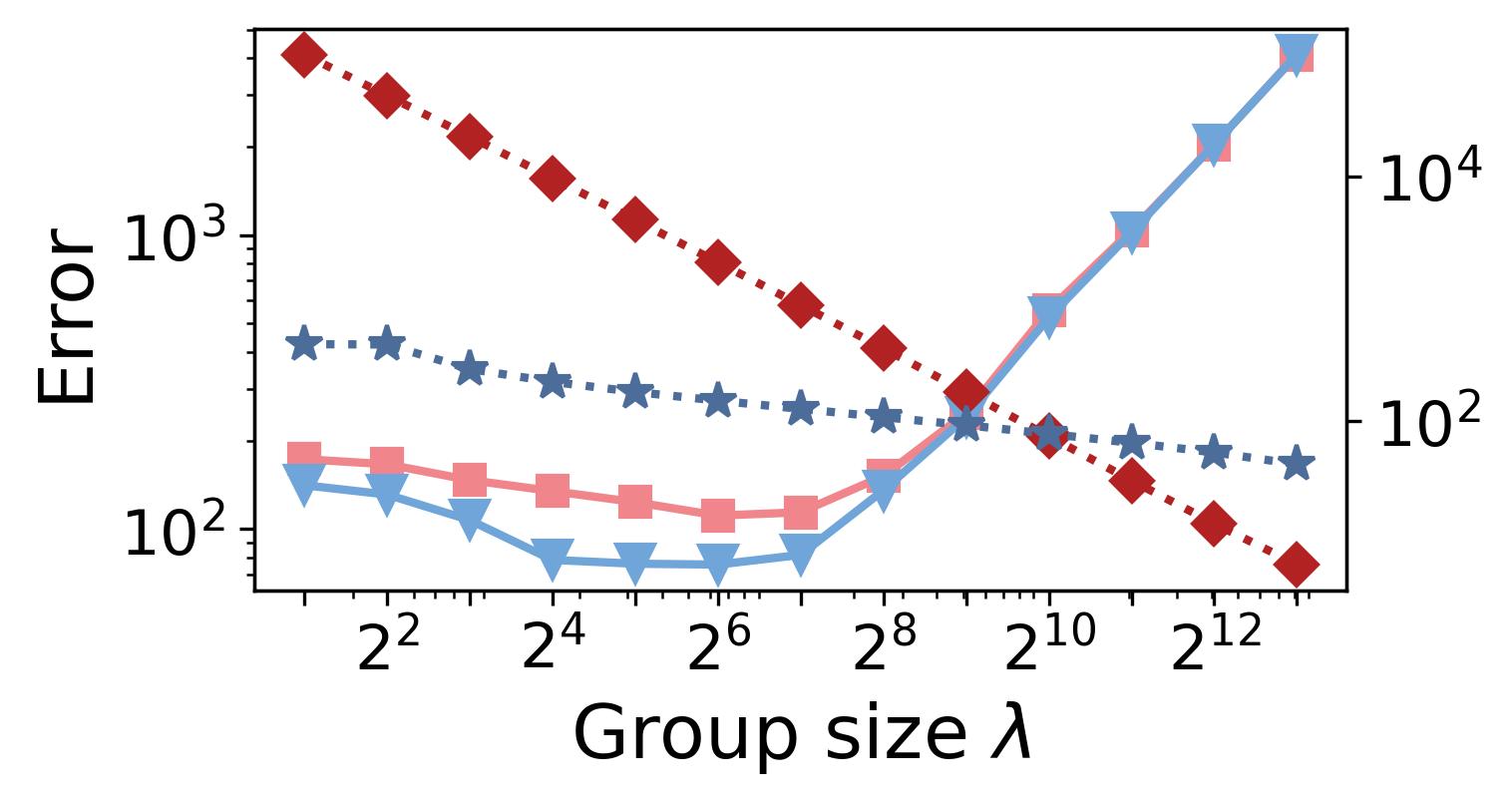}
        \caption{Varying group size $\lambda$}\label{fig:paras-a}
    \end{subfigure}
    \hfill
    \begin{subfigure}[t]{0.239\textwidth}
        \centering
        \includegraphics[width=\linewidth]{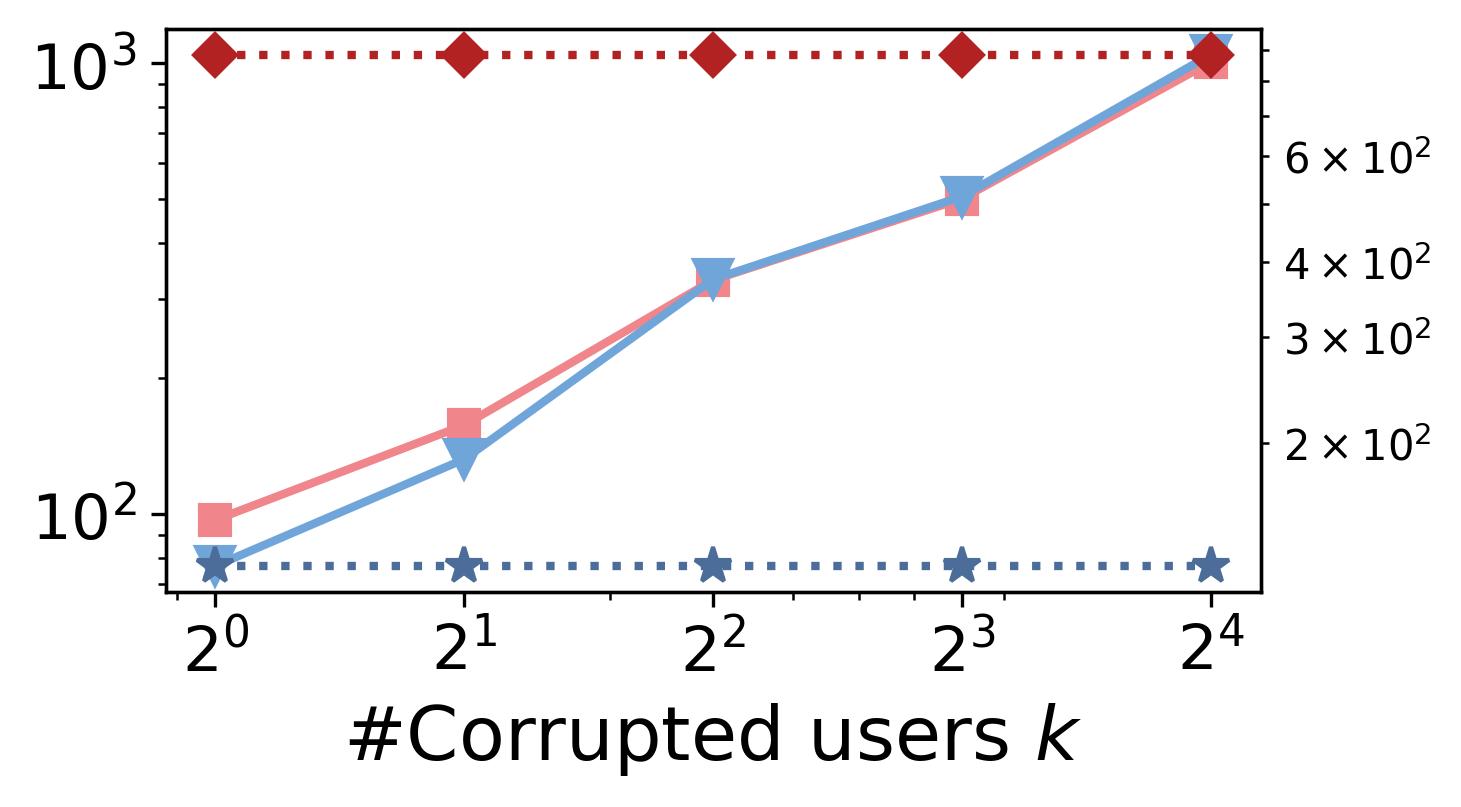}
        \caption{Varying \#corrupted users $k$}\label{fig:paras-b}
    \end{subfigure}
    \hfill
    \begin{subfigure}[t]{0.235\textwidth}
        \centering
        \includegraphics[width=\linewidth]{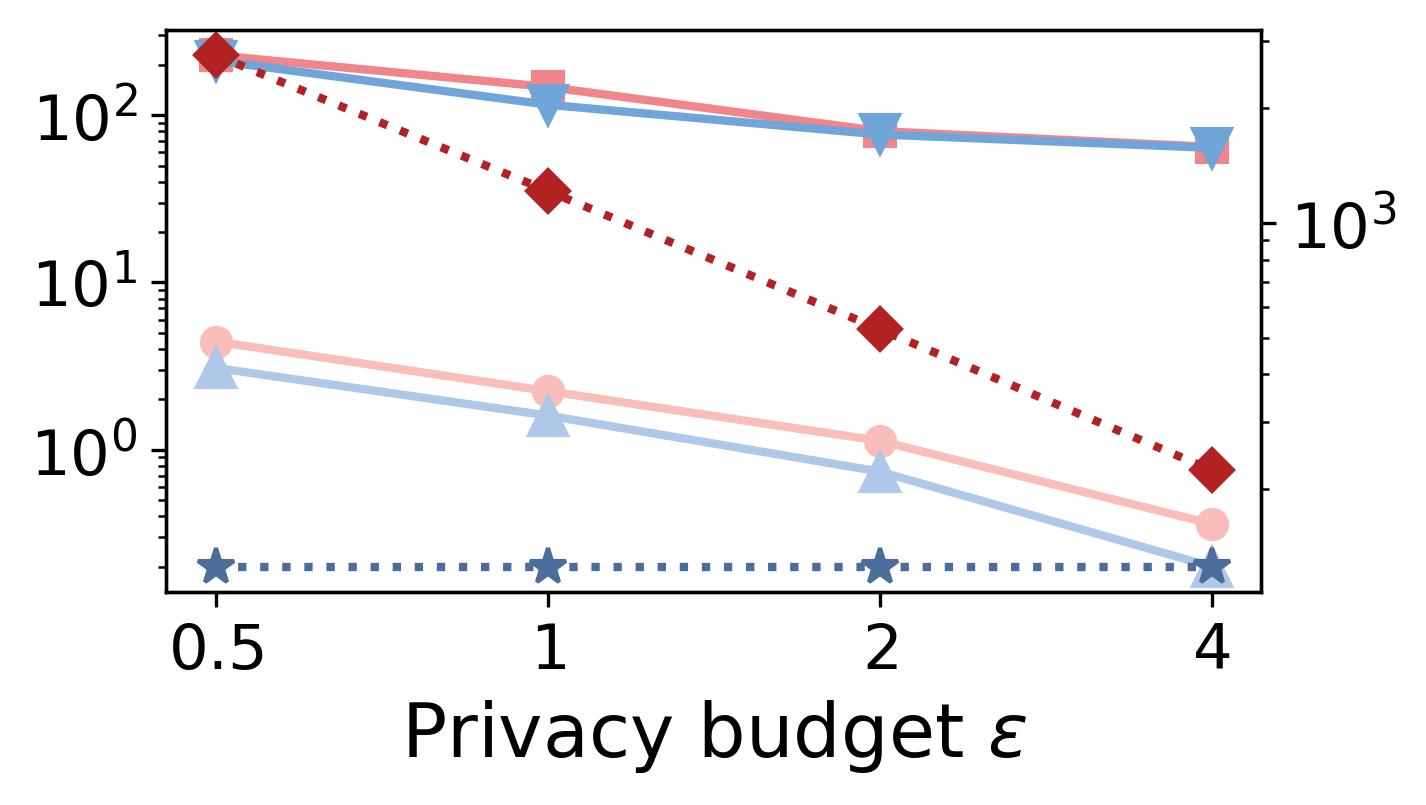}
        \caption{Varying privacy budget $\varepsilon$}\label{fig:paras-c}
    \end{subfigure}
    \hfill
    \begin{subfigure}[t]{0.25\textwidth}
        \centering
        \includegraphics[width=\linewidth]{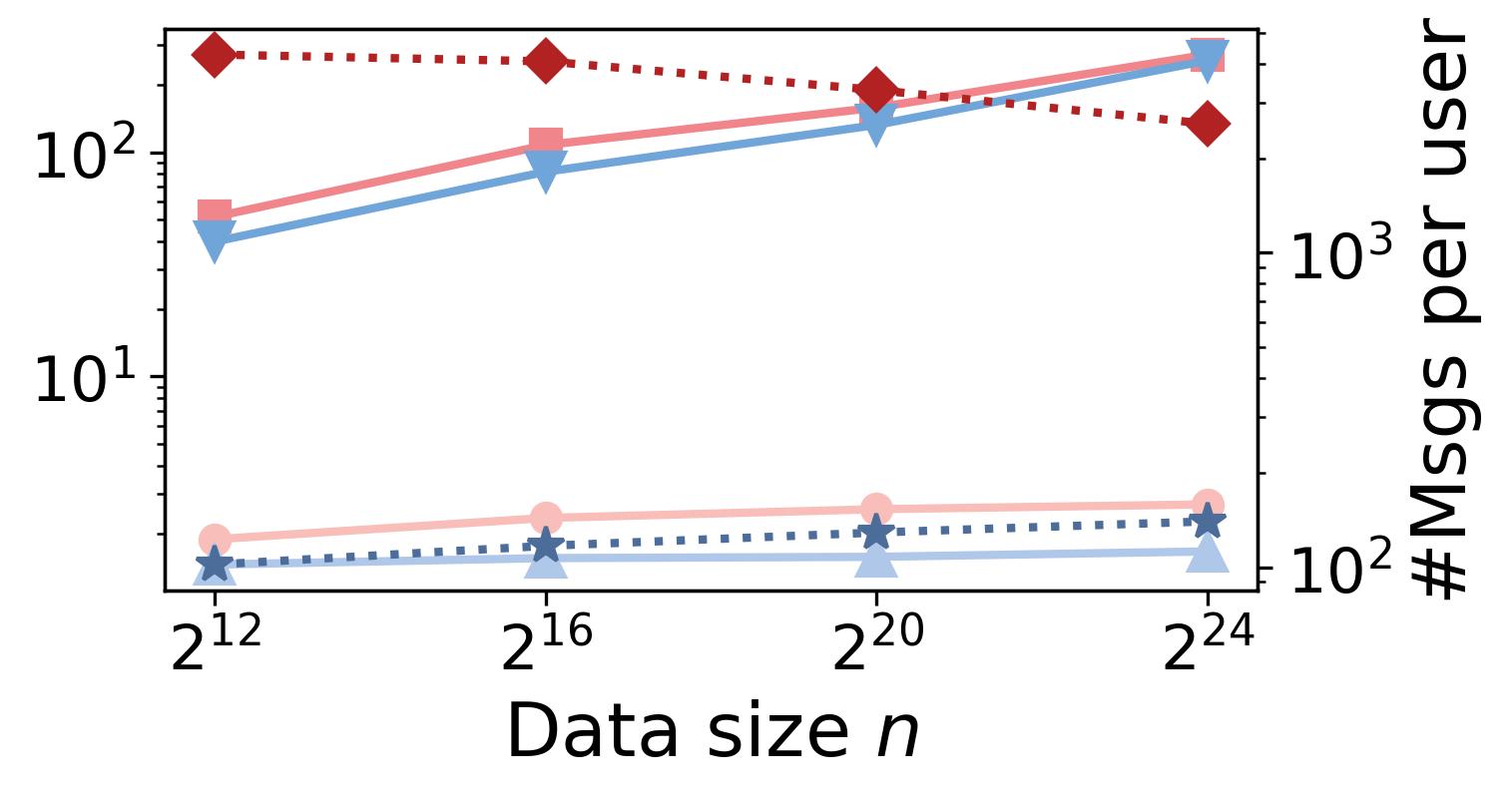}
        \caption{Varying data size $n$}\label{fig:paras-d}
    \end{subfigure}
    \caption{Comparison of Ours+GKMPS and Ours+BBGN for $Q_\mathrm{count}$ on uniform distribution varying different.}
    \label{fig:paras}
\end{figure}

\begin{figure*}[t]
    \raggedright
    \includegraphics[width=0.6\linewidth]{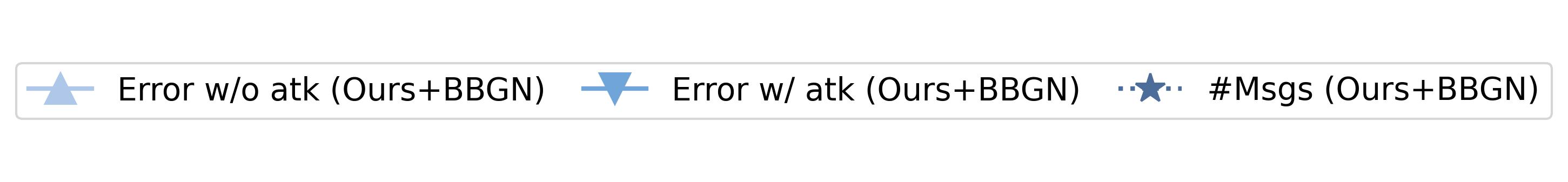} 
    
    \begin{subfigure}[t]{0.247\textwidth}
        \centering
        \includegraphics[width=\linewidth]{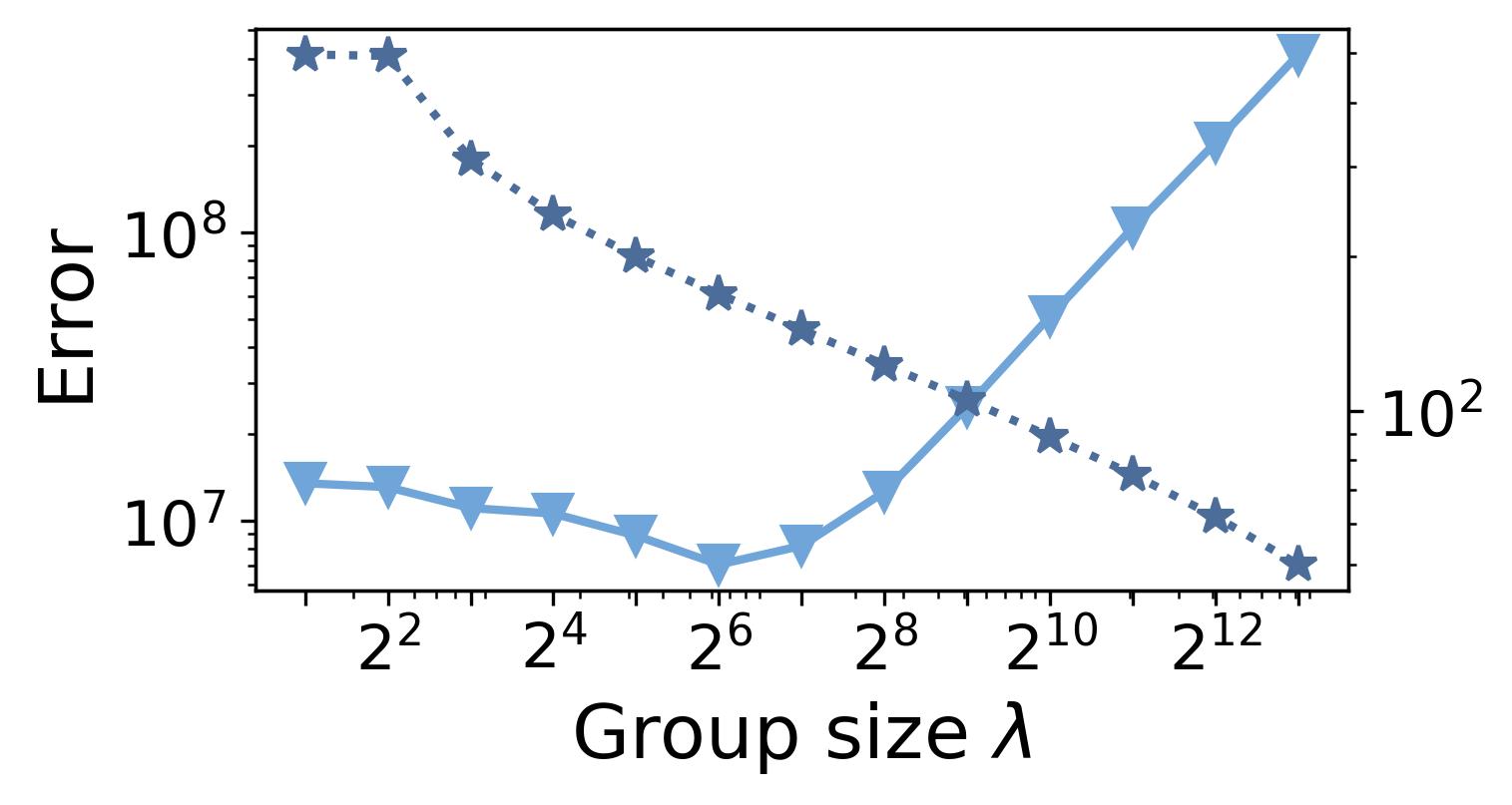}
        \caption{Varying group size $\lambda$}
    \end{subfigure}
    \hfill
    \begin{subfigure}[t]{0.233\textwidth}
        \centering
        \includegraphics[width=\linewidth]{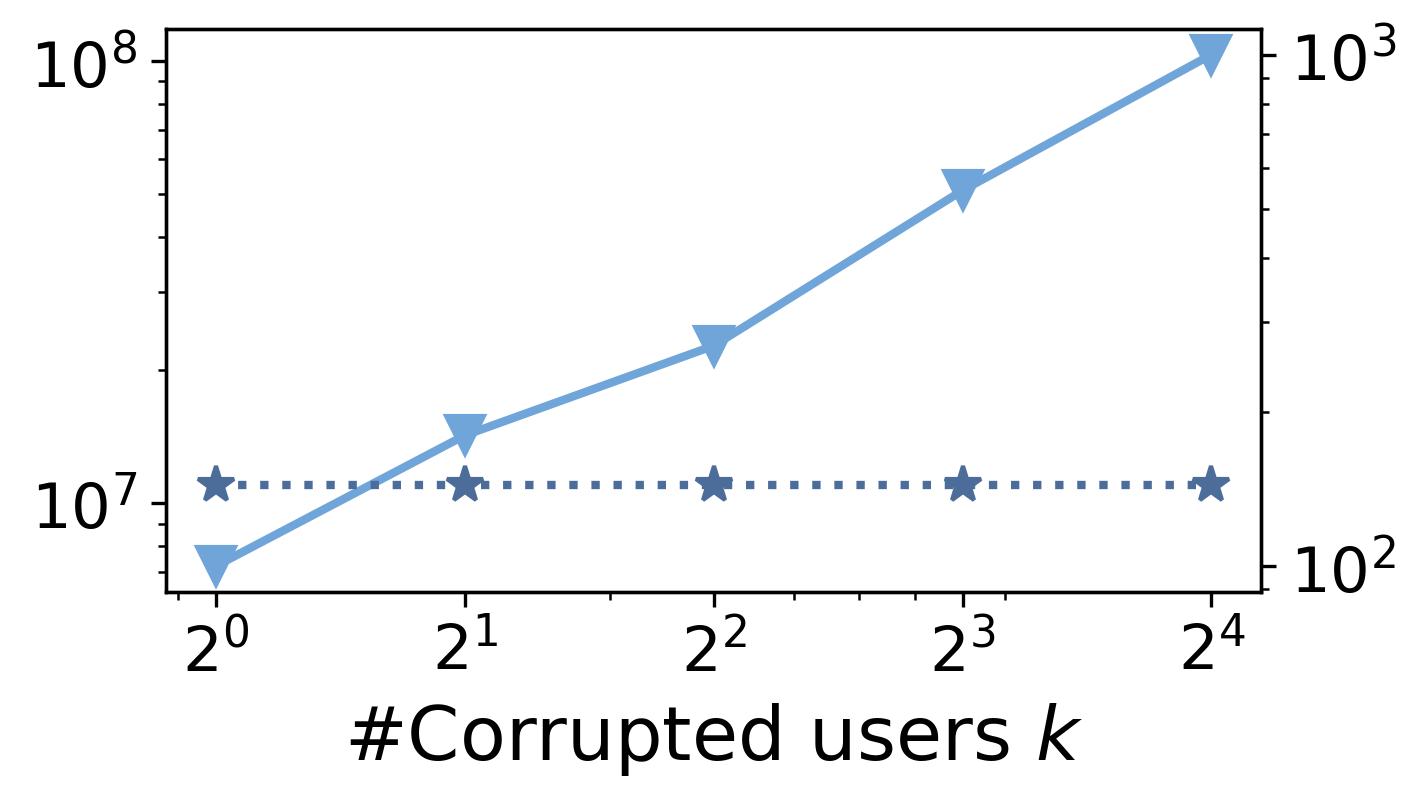}
        \caption{Varying \#corrupted users $k$}
    \end{subfigure}
    \hfill
    \begin{subfigure}[t]{0.233\textwidth}
        \centering
        \includegraphics[width=\linewidth]{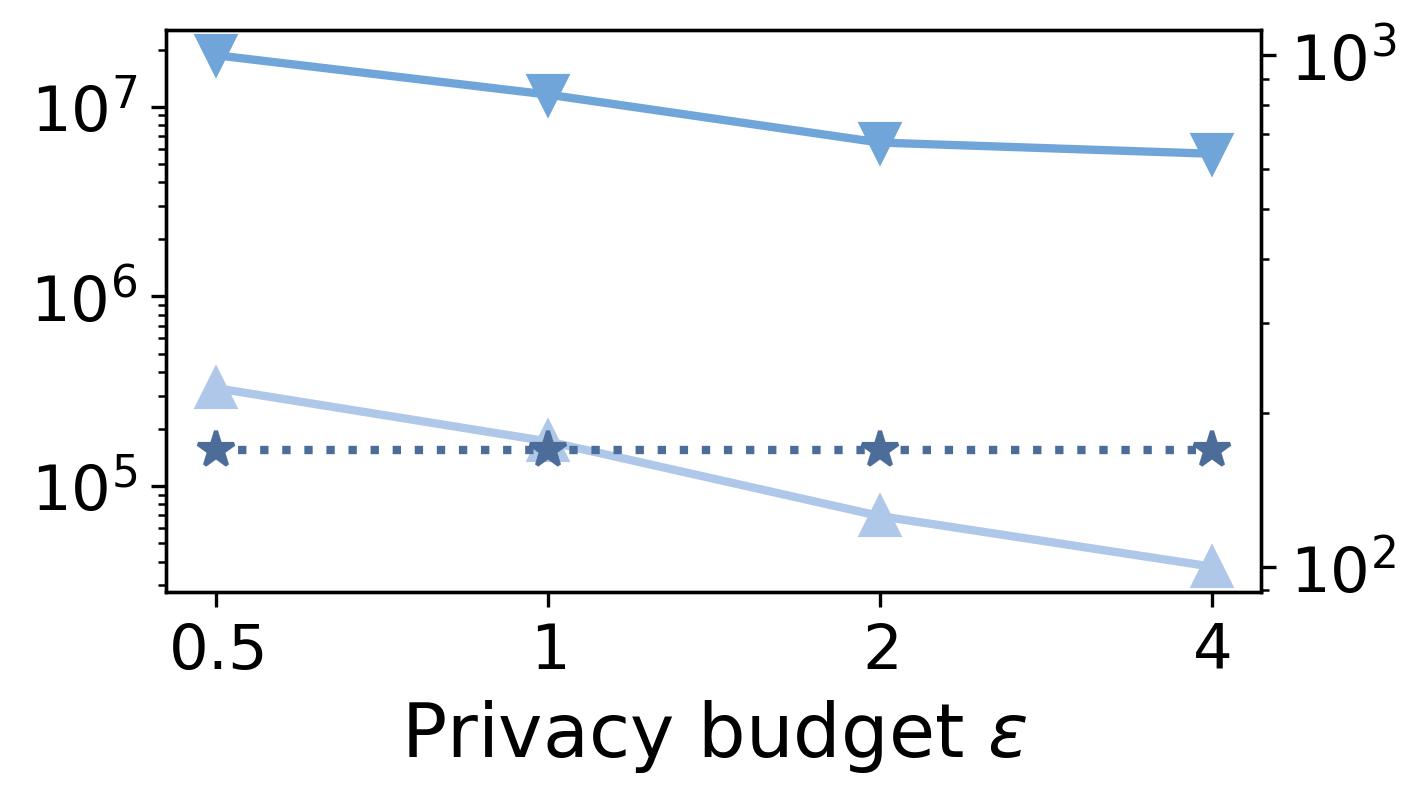}
        \caption{Varying privacy budget $\varepsilon$}
    \end{subfigure}
    \hfill
    \begin{subfigure}[t]{0.261\textwidth}
        \centering
        \includegraphics[width=\linewidth]{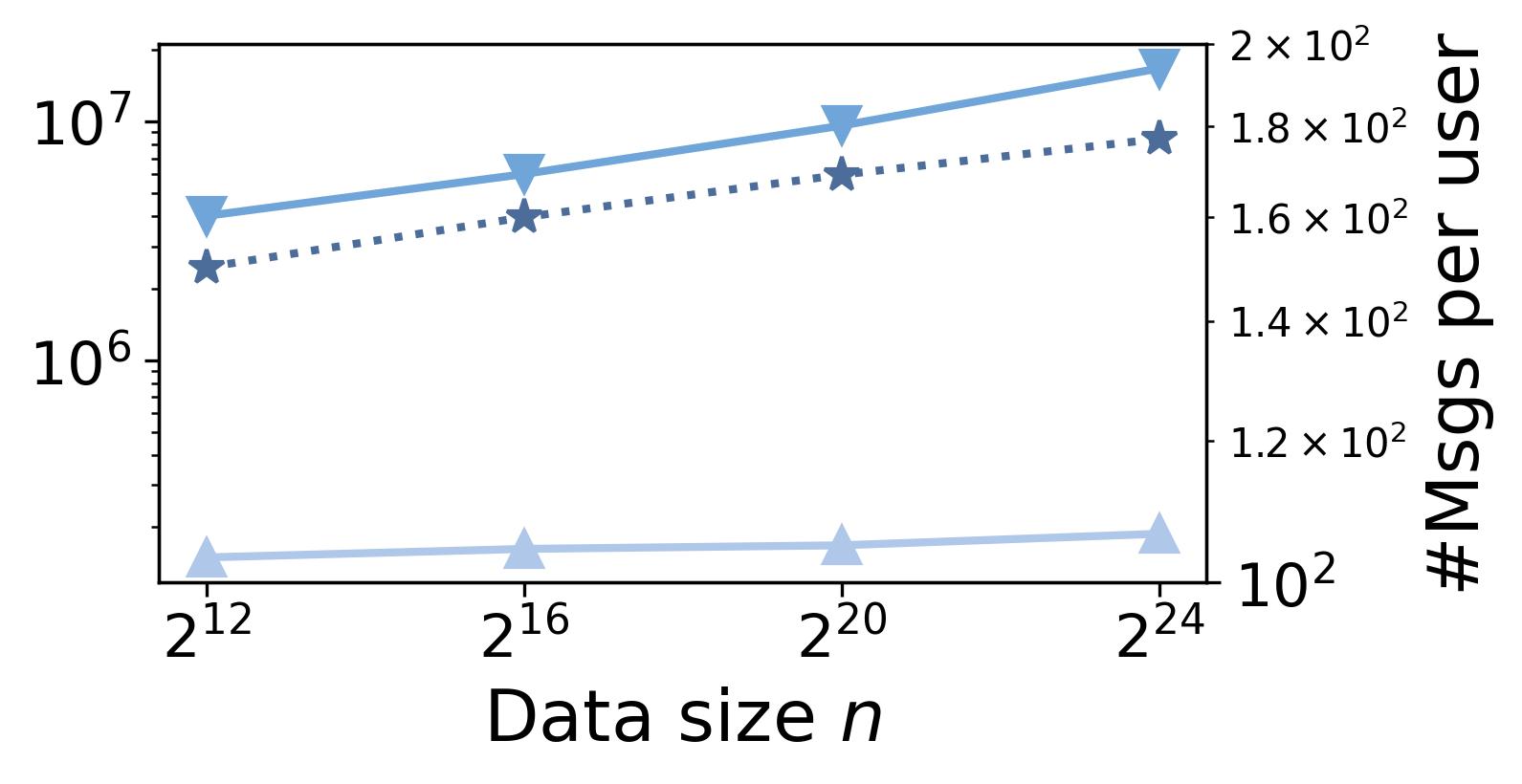}
        \caption{Varying data size $n$}
    \end{subfigure}
    \caption{Comparison of Ours+BBGN for $Q_\mathrm{sum}$ on uniform distribution varying different parameters.}
    \label{fig:paras_sum}
\end{figure*}

\begin{figure*}[t]
    
        \raggedright
    \includegraphics[width=0.6\linewidth]{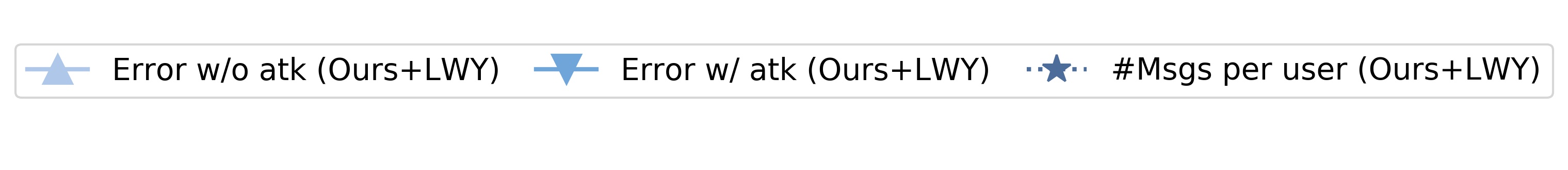} 
    
    \begin{subfigure}[t]{0.247\textwidth}
        \centering
        \includegraphics[width=\linewidth]{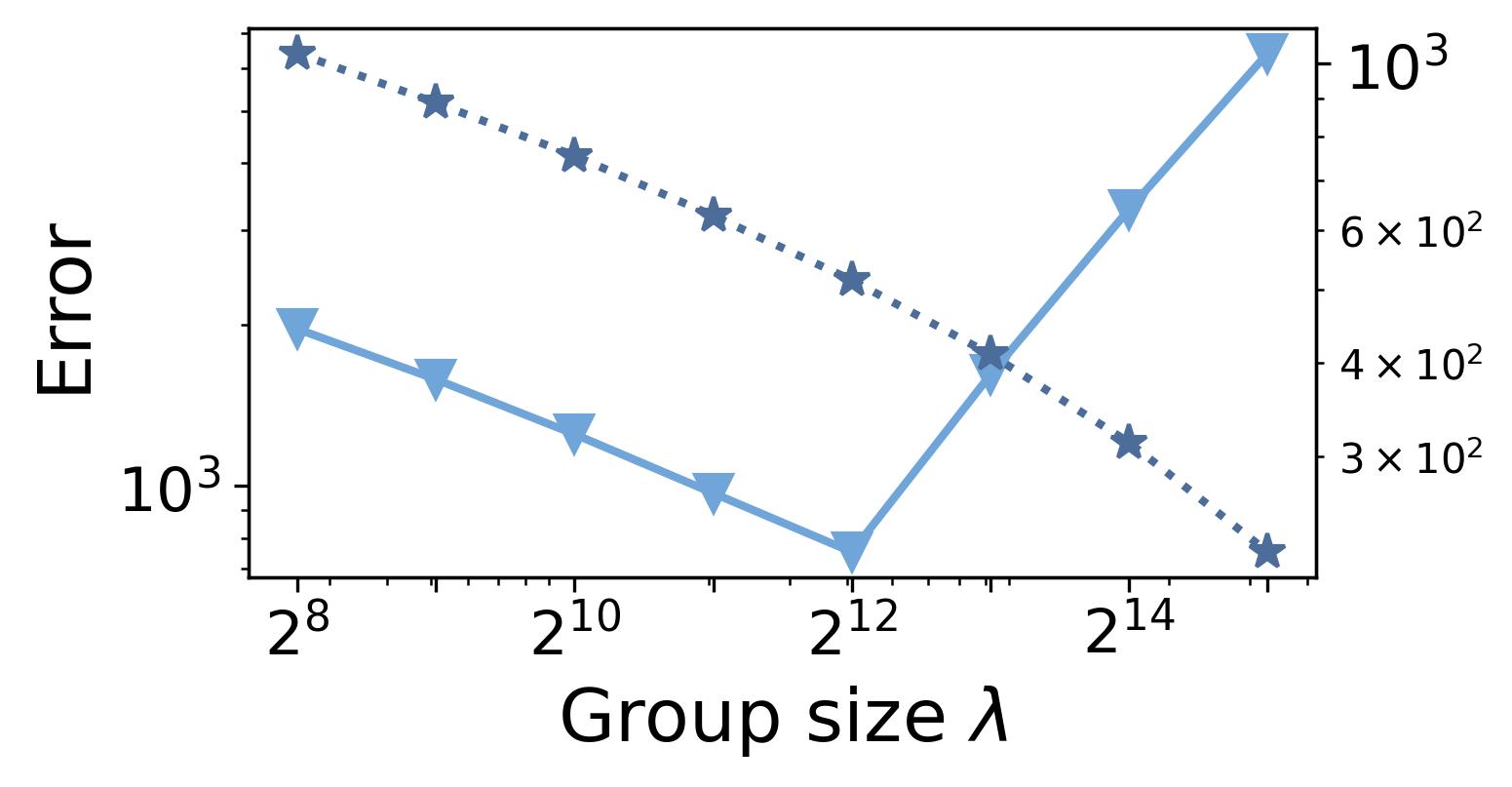}
        \caption{Varying group size $\lambda$}
    \end{subfigure}
    \hfill
    \begin{subfigure}[t]{0.233\textwidth}
        \centering
        \includegraphics[width=\linewidth]{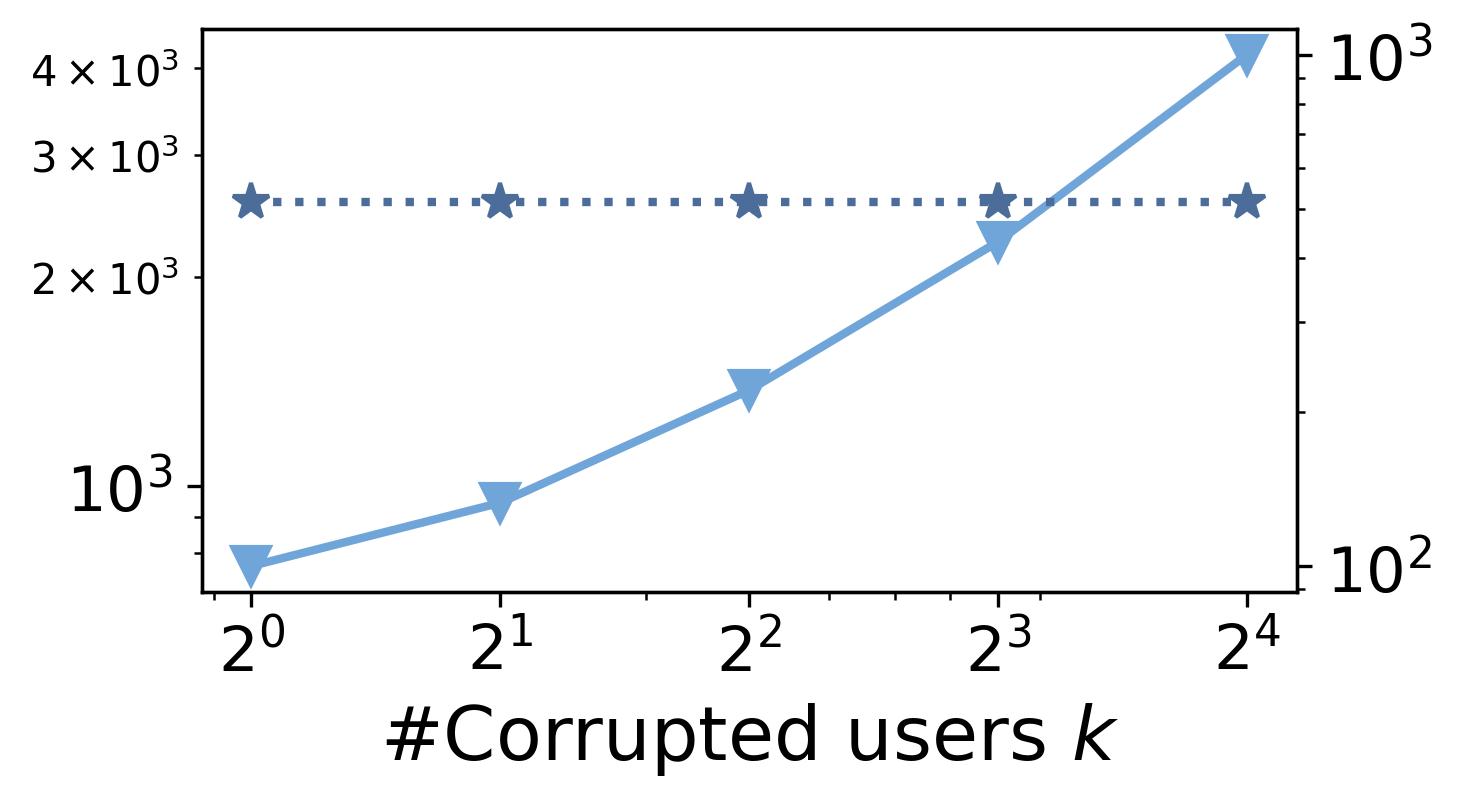}
        \caption{Varying \#corrupted users $k$}
    \end{subfigure}
    \hfill
    \begin{subfigure}[t]{0.244\textwidth}
        \centering
        \includegraphics[width=\linewidth]{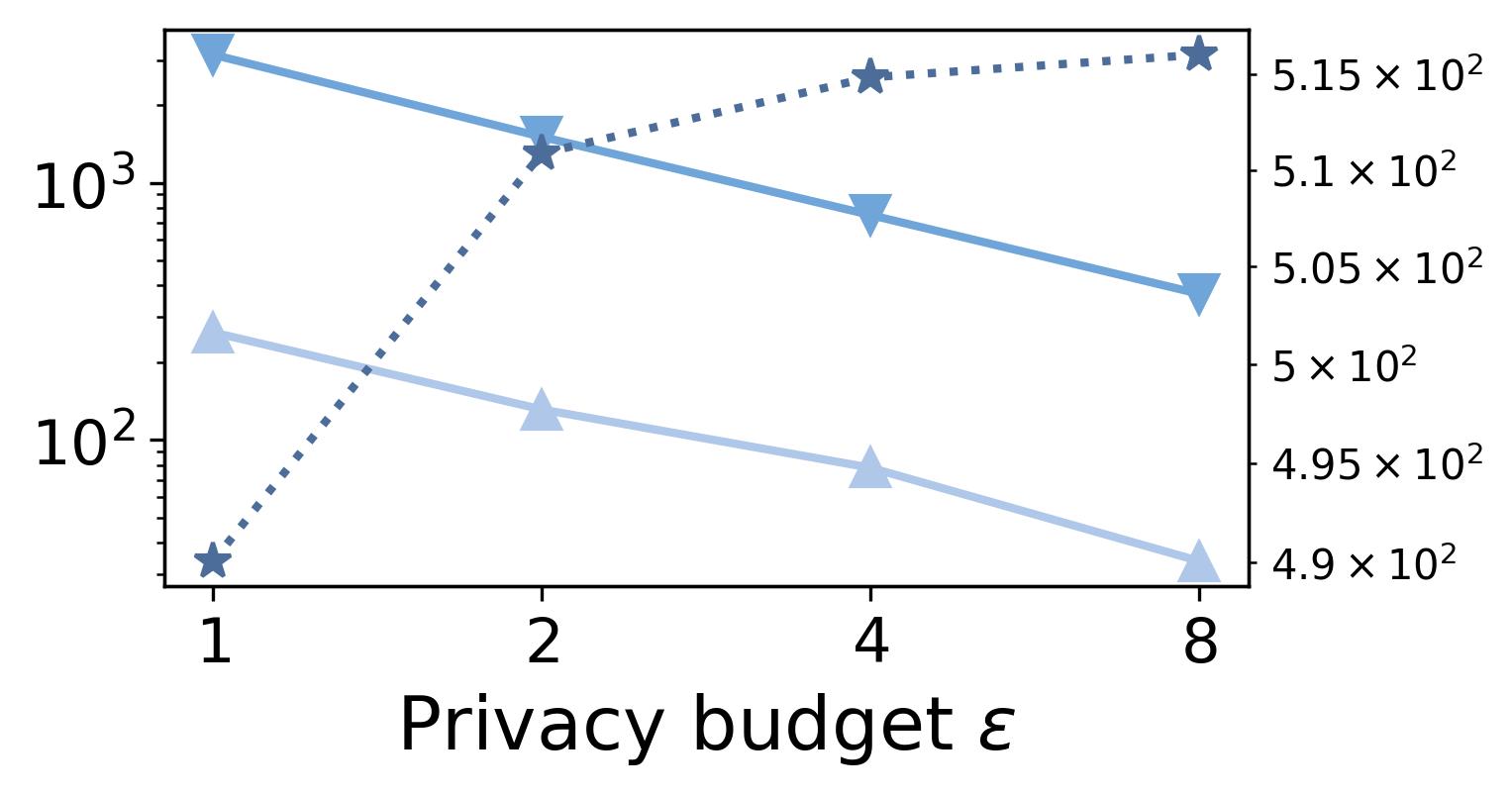}
        \caption{Varying privacy budget $\varepsilon$}
    \end{subfigure}
    \hfill
    \begin{subfigure}[t]{0.249\textwidth}
        \centering
        \includegraphics[width=\linewidth]{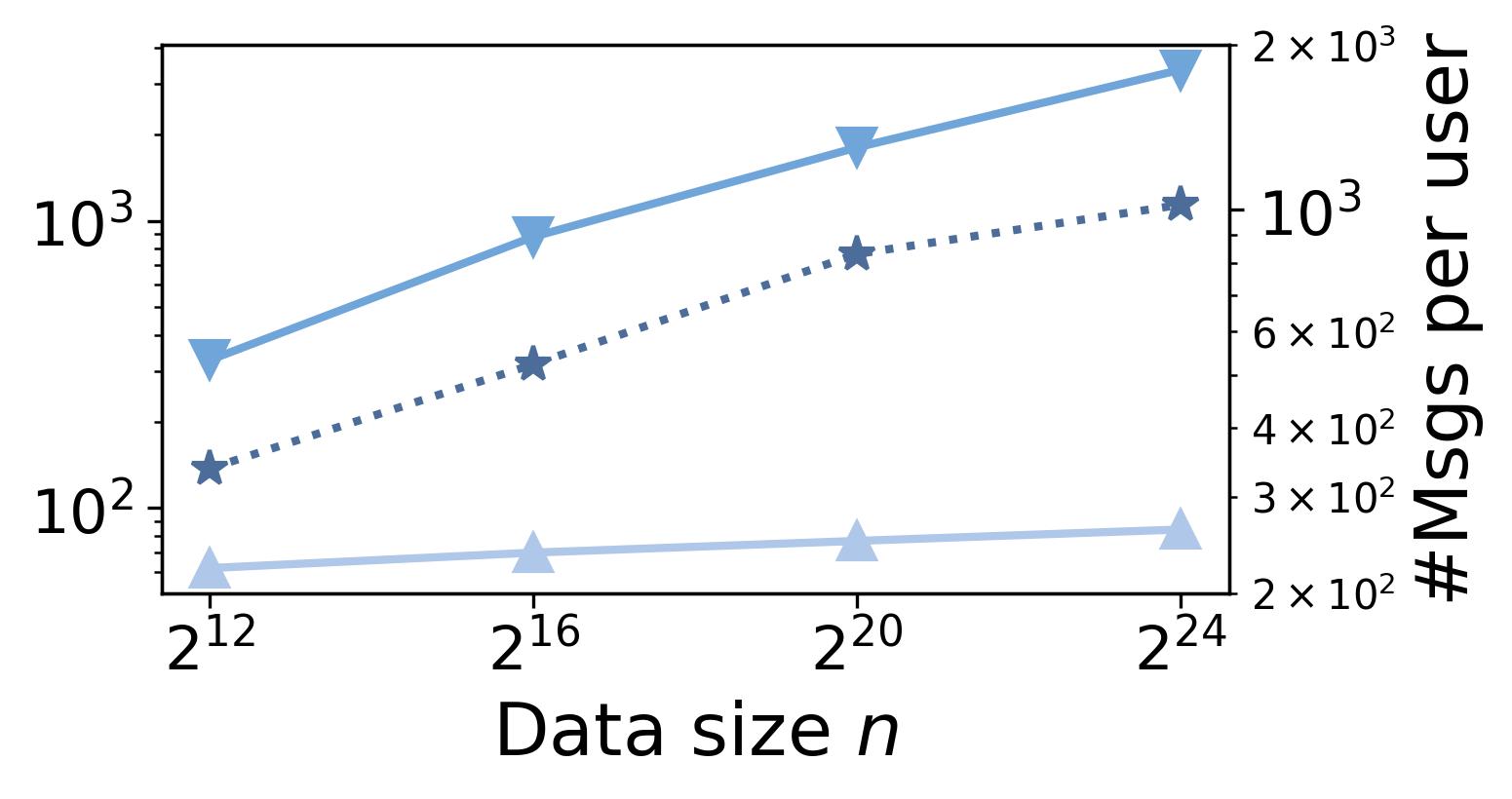}
        \caption{Varying data size $n$}
    \end{subfigure}
    \caption{Comparison of Ours+LWY for $Q_\mathrm{hist}$ on zipfian distribution varying different parameters.}
    \label{fig:paras_hist}
\end{figure*}

\subsubsection{Parameter experiments}
We further evaluate the performance of our defense frameworks Ours+GKMPS and Ours+BBGN for the three queries varying parameters, including the group size $\lambda$, the number of corrupted users $k$, the privacy budget $\varepsilon$, and the data size $n$. The results for $Q_\text{count}$, $Q_\text{sum}$, and $Q_\text{hist}$ are presented in Fig.~\ref{fig:paras}, Fig.~\ref{fig:paras_sum}, and Fig.~\ref{fig:paras_hist}, respectively. Given that the performance trends are consistent across all queries, we focus our detailed analysis on $Q_\text{count}$ as a representative example in the following discussions.

\paragraph{Group size $\lambda$.}
Fig.~\ref{fig:paras-a} shows the error with attack and messages per user. We do not include the error without attack because it uses separate privacy budgets and keeps the same.
The results match our expectations:
The error with attack first decreases and then increases with $\lambda$ growth.
This is because the error consists of two parts: the accumulated error from the $O(\log n/\lambda)$ counters and the ignored group of size $\lambda$.
The first part decreases with $\lambda$ growth because the number of counters decreases and the privacy budgets per counter increase.
The second part increases with $\lambda$ growth.
The communication cost decreases for both protocols as $\lambda$ increases. However, Ours+GKMPS drops faster, since its cost is proportional to $1/\lambda$, whereas Ours+BBGN scales with $1/\log \lambda$.

The results demonstrate that an optimal $\lambda$ will help to improve both the utility and efficiency, i.e., the lowest point of the line of error. 
And further increasing the $\lambda$ leads to a trade-off between utility and efficiency.

\paragraph{Number of corrupted users $k$.} 
Fig.~\ref{fig:paras-b} shows that the error with attack inevitably grows linearly with the number of corrupted users $k$. However, the number of messages per user remains constant. This observation aligns with our theoretical analysis of multiple corrupted users (Theorem~\ref{the:k_OHSDP}).

\paragraph{privacy budget $\varepsilon$.}
As shown in Fig.~\ref{fig:paras-c}, both protocols have lower error, whether with or without attack, when $\varepsilon$ grows. The communication cost of Ours+GKMPS also decreases, but Ours+BBGN keeps the same. Because the latter one has a fixed number of messages, whatever $\varepsilon$ is.

\paragraph{Data size $n$.}
The results are in Fig.~\ref{fig:paras-d}. Both protocols have larger errors with $n$ growth, as it is proportional to polylogarithmic of $n$. The communication cost changes differently and shows an interesting phenomenon of different protocols: Ours+BBGN increases as larger $n$ leads to more levels in the structure, while the number of messages in each level is stable. Its communication cost is dominated by the number of levels. Ours+GKMPS chooses a larger $\lambda$ for larger $n$, which reduces the number of messages in the bottom level, which dominates the total number of messages.

\begin{figure}[t]
\raggedright
    \includegraphics[width=0.35\linewidth]{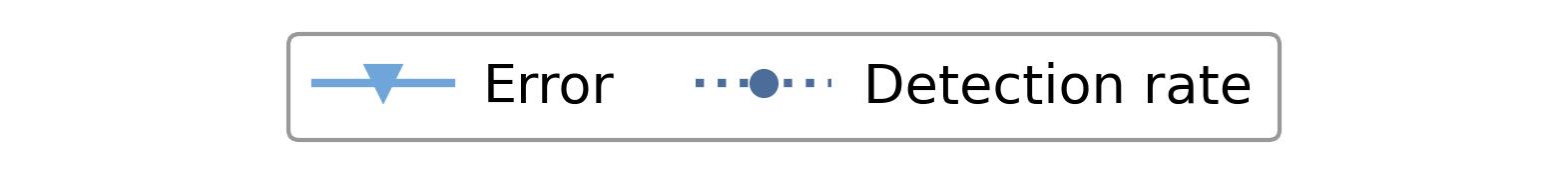} 

    \begin{subfigure}{.4\textwidth} 
        \centering
        \includegraphics[width=\textwidth]{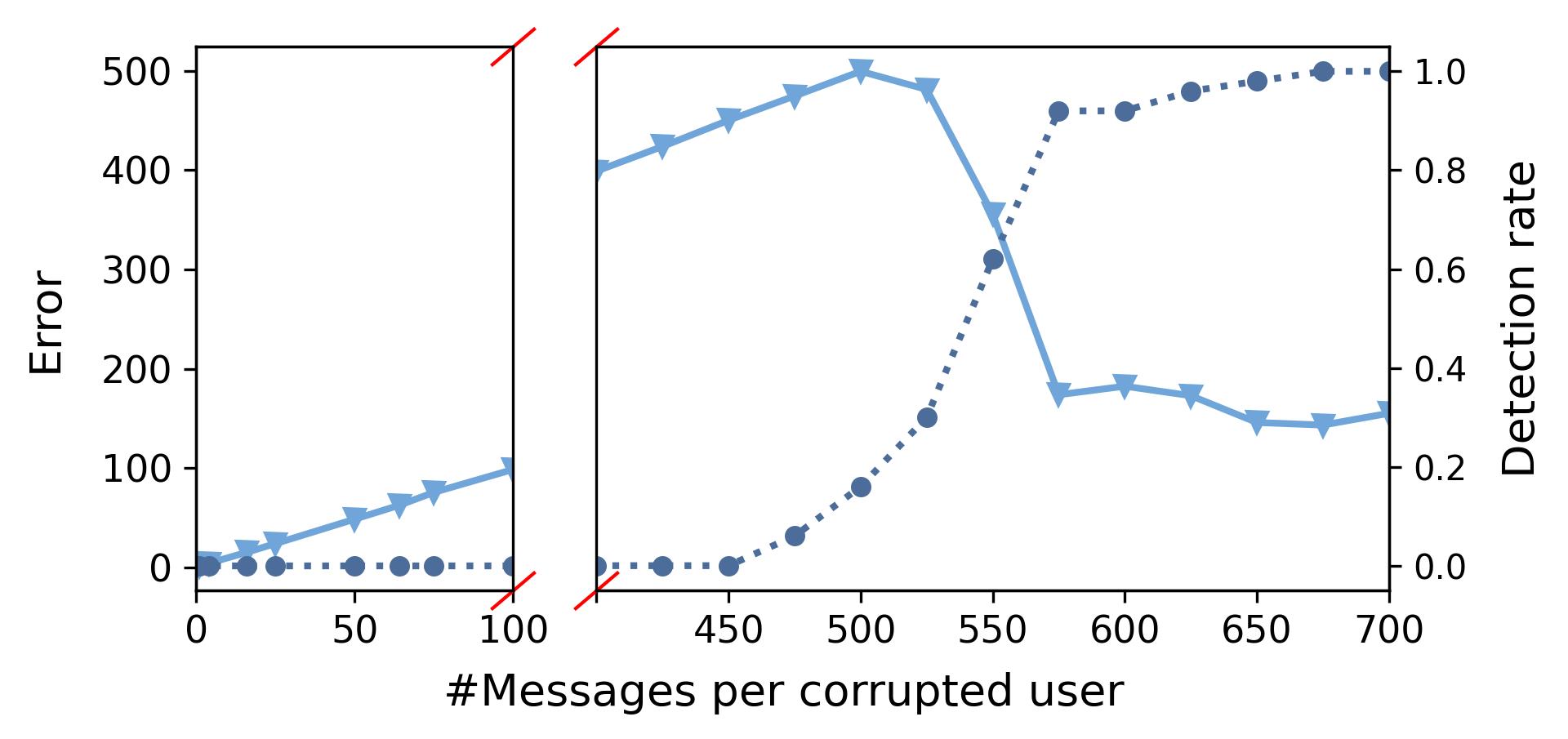} 
        \caption{Ours + GKMPS} 
    \end{subfigure}
    % \hfill
    \begin{subfigure}{.4\textwidth} 
        \centering
        \includegraphics[width=\textwidth]{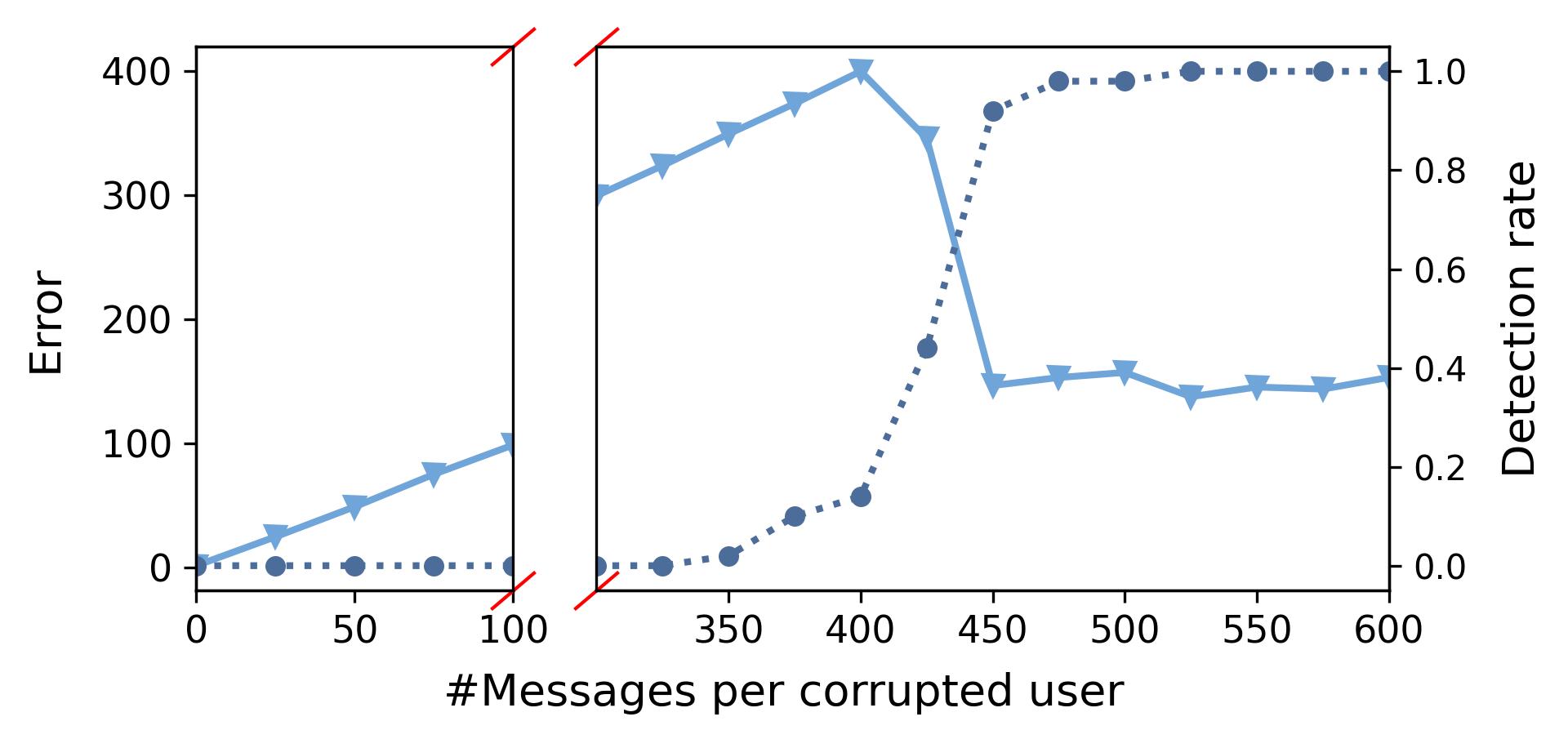}
        \caption{Ours + BBGN} 
    \end{subfigure}
    
    \caption{Comparison of Ours+GKMPS and Ours+BBGN for $Q_\text{count}$ on uniform distribution varying \#messages the corrupted user send. }
    \label{fig:atk_msg}
\end{figure}

\paragraph{Effect of the number of malicious messages sent by corrupted users.}
To empirically validate our theoretical analysis that the worst-case error occurs when corrupted users evade detection, we conduct an experiment by varying the number of malicious messages each corrupted user sends.
The experiment is performed on the bit counting query $Q_{\text{count}}$ over Unif dataset, 
using {Ours+GKMPS} and {Ours+BBGN} as representative protocols. 
We fix the total number of users to $n = 2^{20}$ and set the number of corrupted users to $k = 1$. 
The corrupted user sends a fixed number of messages per level, and we vary this number across experiments. 
The bottom-group size is set to $\lambda = 256$, the fixed parameters are $\varepsilon = 1, \delta = 1/n^2, \beta = 0.1$, 
the corresponding bottom-level error thresholds are $\theta^{(0)} = 288.65$ for {Ours+BBGN} 
and $\theta^{(0)} = 412.14$ for {Ours+GKMPS}. 
For each configuration, we run $50$ independent trials and report 
(i) the absolute error and 
(ii) the detection rate (i.e., the fraction of runs in which the corrupted user is successfully detected). 
As shown in Fig.~\ref{fig:atk_msg}, 
the detection rate remains nearly zero when the number of malicious messages is small, 
indicating that the corrupted user successfully evades detection. 
In this regime, the overall error grows rapidly, approximately linearly, with the number of malicious messages 
and reaches its maximum near the detection boundary 
(around $400$ messages for Ours+BBGN and $500$ messages for Ours+GKMPS),
which is roughly $(\lambda + \theta_0) - \lambda/2$ as our theoretical expectation.
Once the number of malicious messages exceeds the detection threshold, 
the corrupted user is effectively identified and filtered out, 
leading to a sharp decrease in the overall error.

\section{Conclusion}
\label{sec:Conclusion}
In this paper, we studied the poisoning attacks under the shuffle-DP model, including breaking the privacy and destroying the utility. We have proposed a general defense framework for all union-preserving queries that can convert any shuffle-DP protocol to a version defensible against such attacks with a limited communication cost, and have demonstrated its versatility through a number of common query workloads.
There are some interesting open questions for future research:
    (i) Can our defense framework support more general queries beyond the union-preserving class, such as join and projection queries?
    (ii) The communication overhead is still large compared to its base protocol. \cite{rm2} reduces the communication costs of hierarchical structure queries. Can we adopt the same idea to improve our communication efficiency?

\begin{acks}
    This work has been supported by the NTU–NAP Startup Grant (024584-00001), the Singapore Ministry of Education Tier 1 Grant (RG19/25), the Ant Group Research Intern Program, NSFC (62441238), the National Research Foundation, Singapore, and the Cyber Security Agency of Singapore under the National Cybersecurity R\&D Programme and the CyberSG R\&D Programme Office (Award CRPO-GC3-NTU-001). Any opinions, findings, conclusions, or recommendations expressed in these materials are those of the author(s) and do not reflect the views of the National Research Foundation, Singapore, the Cyber Security Agency of Singapore, or the CyberSG R\&D Programme Office.
    We also thank the anonymous reviewers for valuable suggestions on improving the paper.
\end{acks}

\clearpage

\bibliographystyle{ACM-Reference-Format}
\bibliography{ref}

@inproceedings{balcer2021connecting,
    author = {Victor Balcer and Albert Cheu and Matthew Joseph and Jieming Mao},
    title = {Connecting Robust Shuffle Privacy and Pan-Privacy},
    booktitle = {Proceedings of the ACM-SIAM Symposium on Discrete Algorithms},
    year = {2021},
    pages = {2384-2403},
}

@inproceedings{imola2022differentially,
  title={Differentially private triangle and 4-cycle counting in the shuffle model},
  author={Imola, Jacob and Murakami, Takao and Chaudhuri, Kamalika},
  booktitle={Proceedings of the 2022 ACM SIGSAC Conference on Computer and Communications Security},
  pages={1505--1519},
  year={2022}
}

@inproceedings{balle2020privatesum,
    author = {Balle, Borja and Bell, James and Gasc\'{o}n, Adri\`{a} and Nissim, Kobbi},
    title = {Private Summation in the Multi-Message Shuffle Model},
    year = {2020},
    booktitle = {Proceedings of the ACM SIGSAC Conference on Computer and Communications Security},
    pages = {657–676},
    numpages = {20},
}

@INPROCEEDINGS{cheu2021manipulation,
  author={Cheu, Albert and Smith, Adam and Ullman, Jonathan},
  booktitle={2021 IEEE Symposium on Security and Privacy (SP)}, 
  title={Manipulation Attacks in Local Differential Privacy}, 
  year={2021},
  volume={},
  number={},
  pages={883-900},
  doi={10.1109/SP40001.2021.00001}
}

@inproceedings{dwork2006calibrating,
  title={Calibrating noise to sensitivity in private data analysis},
  author={Dwork, Cynthia and McSherry, Frank and Nissim, Kobbi and Smith, Adam},
  booktitle={Theory of Cryptography: Third Theory of Cryptography Conference, TCC 2006, New York, NY, USA, March 4-7, 2006. Proceedings 3},
  pages={265--284},
  year={2006},
  organization={Springer}
}

@article{kasiviswanathan2011can,
  title={What can we learn privately?},
  author={Kasiviswanathan, Shiva Prasad and Lee, Homin K and Nissim, Kobbi and Raskhodnikova, Sofya and Smith, Adam},
  journal={SIAM Journal on Computing},
  volume={40},
  number={3},
  pages={793--826},
  year={2011},
  publisher={SIAM}
}

@inproceedings{beimel2008distributed,
  title={Distributed private data analysis: Simultaneously solving how and what},
  author={Beimel, Amos and Nissim, Kobbi and Omri, Eran},
  booktitle={Advances in Cryptology--CRYPTO 2008: 28th Annual International Cryptology Conference, Santa Barbara, CA, USA, August 17-21, 2008. Proceedings 28},
  pages={451--468},
  year={2008},
  organization={Springer}
}

@inproceedings{chan2012optimal,
  title={Optimal lower bound for differentially private multi-party aggregation},
  author={Chan, TH Hubert and Shi, Elaine and Song, Dawn},
  booktitle={European Symposium on Algorithms},
  pages={277--288},
  year={2012},
  organization={Springer}
}

@inproceedings{bittau2017prochlo,
  title={Prochlo: Strong privacy for analytics in the crowd},
  author={Bittau, Andrea and Erlingsson, {\'U}lfar and Maniatis, Petros and Mironov, Ilya and Raghunathan, Ananth and Lie, David and Rudominer, Mitch and Kode, Ushasree and Tinnes, Julien and Seefeld, Bernhard},
  booktitle={Proceedings of the 26th symposium on operating systems principles},
  pages={441--459},
  year={2017}
}

@inproceedings{cheu2019distributed,
  title={Distributed differential privacy via shuffling},
  author={Cheu, Albert and Smith, Adam and Ullman, Jonathan and Zeber, David and Zhilyaev, Maxim},
  booktitle={Advances in Cryptology--EUROCRYPT 2019: 38th Annual International Conference on the Theory and Applications of Cryptographic Techniques, Darmstadt, Germany, May 19--23, 2019, Proceedings, Part I 38},
  pages={375--403},
  year={2019},
  organization={Springer}
}

@inproceedings{erlingsson2019amplification,
  title={Amplification by shuffling: From local to central differential privacy via anonymity},
  author={Erlingsson, {\'U}lfar and Feldman, Vitaly and Mironov, Ilya and Raghunathan, Ananth and Talwar, Kunal and Thakurta, Abhradeep},
  booktitle={Proceedings of the Thirtieth Annual ACM-SIAM Symposium on Discrete Algorithms},
  pages={2468--2479},
  year={2019},
  organization={SIAM}
}

@inproceedings{ghazi2020private,
  title={Private counting from anonymous messages: Near-optimal accuracy with vanishing communication overhead},
  author={Ghazi, Badih and Kumar, Ravi and Manurangsi, Pasin and Pagh, Rasmus},
  booktitle={International Conference on Machine Learning},
  pages={3505--3514},
  year={2020},
  organization={PMLR}
}

@inproceedings{ghazi2021differentially,
  title={Differentially private aggregation in the shuffle model: Almost central accuracy in almost a single message},
  author={Ghazi, Badih and Kumar, Ravi and Manurangsi, Pasin and Pagh, Rasmus and Sinha, Amer},
  booktitle={International Conference on Machine Learning},
  pages={3692--3701},
  year={2021},
  organization={PMLR}
}

@inproceedings{luo2022frequency,
  title={Frequency Estimation in the Shuffle Model with Almost a Single Message},
  author={Luo, Qiyao and Wang, Yilei and Yi, Ke},
  booktitle={Proceedings of the 2022 ACM SIGSAC Conference on Computer and Communications Security},
  pages={2219--2232},
  year={2022}
}

@inproceedings{dong2023continual,
  title={Continual observation under user-level differential privacy},
  author={Dong, Wei and Luo, Qiyao and Yi, Ke},
  booktitle={2023 IEEE Symposium on Security and Privacy (SP)},
  pages={2190--2207},
  year={2023},
  organization={IEEE}
}

@article{chaum1982untraceable,
author = {Chaum, David L.},
title = {Untraceable Electronic Mail, Return Addresses, and Digital Pseudonyms},
year = {1981},
volume = {24},
journal = {Commun. ACM},
pages = {84–90},
}

@INPROCEEDINGS{danezis2003Mixminion,  
author={Danezis, G. and Dingledine, R. and Mathewson, N.},  
booktitle={Symposium on Security and Privacy.},   
title={Mixminion: design of a type III anonymous remailer protocol},   
year={2003},  
pages={2-15}
}

@ARTICLE{reed1998onion,  
author={Reed, M.G. and Syverson, P.F. and Goldschlag, D.M.}, 
journal={IEEE Journal on Selected Areas in Communications},   
title={Anonymous connections and onion routing},   
year={1998},  
pages={482-494}}

@inproceedings {roger2004tor,
author = {Roger Dingledine and Nick Mathewson and Paul Syverson},
title = {Tor: The {Second-Generation} Onion Router},
booktitle = {13th USENIX Security Symposium (USENIX Security 04)},
year = {2004}
}

@article{reiter1998crowds,
author = {Reiter, Michael K. and Rubin, Aviel D.},
title = {Crowds: Anonymity for Web Transactions},
year = {1998},
journal = {ACM Transactions on Information and System Security},
pages = {66–92}
}

@InProceedings{badih2021on,
author="Ghazi, Badih and Golowich, Noah and Kumar, Ravi and Pagh, Rasmus and Velingker, Ameya",
title="On the Power of Multiple Anonymous Messages: Frequency Estimation and Selection in the Shuffle Model of Differential Privacy",
booktitle="Advances in Cryptology -- EUROCRYPT 2021",
year="2021",
pages="463--488",
}

@inproceedings{chen21distinct,
title	= {On Distributed Differential Privacy and Counting Distinct Elements},
author	= {Lijie Chen and Badih Ghazi and Ravi Kumar and Pasin Manurangsi},
booktitle = {ITCS},
year	= {2021},
pages	= {56:1-56:18}
}

@inproceedings{cheu2022differentially,
  title={Differentially private histograms in the shuffle model from fake users},
  author={Cheu, Albert and Zhilyaev, Maxim},
  booktitle={2022 IEEE Symposium on Security and Privacy (SP)},
  pages={440--457},
  year={2022},
  organization={IEEE}
}

@INPROCEEDINGS{ishai06cryptography,
  author={Ishai, Yuval and Kushilevitz, Eyal and Ostrovsky, Rafail and Sahai, Amit},
  booktitle={2006 47th Annual IEEE Symposium on Foundations of Computer Science (FOCS'06)}, 
  title={Cryptography from Anonymity}, 
  year={2006},
  volume={},
  number={},
  pages={239-248},
  doi={10.1109/FOCS.2006.25}}

@inproceedings{kohavi1996scaling,
  title={Scaling up the accuracy of naive-bayes classifiers: A decision-tree hybrid.},
  author={Kohavi, Ron and others},
  booktitle={Kdd},
  volume={96},
  pages={202--207},
  year={1996}
}

@inproceedings{pass2006picture,
  title={A picture of search},
  author={Pass, Greg and Chowdhury, Abdur and Torgeson, Cayley},
  booktitle={Proceedings of the 1st international conference on Scalable information systems},
  pages={1--es},
  year={2006}
}

@InProceedings{borja2019the,
	author="Balle, Borja
	and Bell, James
	and Gasc{\'o}n, Adri{\`a}
	and Nissim, Kobbi",
	title="The Privacy Blanket of the Shuffle Model",
	booktitle="CRYPTO",
	year="2019",
}

@inproceedings{ghazi2024pure,
  author       = {Badih Ghazi and
                  Ravi Kumar and
                  Pasin Manurangsi},
  title        = {Pure-DP Aggregation in the Shuffle Model: Error-Optimal and Communication-Efficient},
  booktitle    = {5th Conference on Information-Theoretic Cryptography, {ITC} 2024},
  volume       = {304},
  pages        = {4:1--4:13},
  year         = {2024},
}

@inproceedings {cao2021data,
    author = {Xiaoyu Cao and Jinyuan Jia and Neil Zhenqiang Gong},
    title = {Data Poisoning Attacks to Local Differential Privacy Protocols},
    booktitle = {30th USENIX Security Symposium (USENIX Security 21)},
    year = {2021},
    pages = {947--964},
}

@inproceedings {li2023fine,
    author = {Xiaoguang Li and Ninghui Li and Wenhai Sun and Neil Zhenqiang Gong and Hui Li},
    title = {Fine-grained Poisoning Attack to Local Differential Privacy Protocols for Mean and Variance Estimation},
    booktitle = {32nd USENIX Security Symposium (USENIX Security 23)},
    year = {2023},
    pages = {1739--1756},
}

@inproceedings {wu2022poisoning,
    author = {Yongji Wu and Xiaoyu Cao and Jinyuan Jia and Neil Zhenqiang Gong},
    title = {Poisoning Attacks to Local Differential Privacy Protocols for {Key-Value} Data},
    booktitle = {31st USENIX Security Symposium (USENIX Security 22)},
    year = {2022},
    pages = {519--536},
}

@inproceedings{tong2024data,
    author = {Tong, Wei and Chen, Haoyu and Niu, Jiacheng and Zhong, Sheng},
    title = {Data Poisoning Attacks to Locally Differentially Private Frequent Itemset Mining Protocols},
    year = {2024},
    booktitle = {Proceedings of the 2024 on ACM SIGSAC Conference on Computer and Communications Security},
    pages = {3555–3569},
    numpages = {15},
}

@inproceedings{wang2017malicious,
    author = {Wang, Xiao and Ranellucci, Samuel and Katz, Jonathan},
    title = {Authenticated Garbling and Efficient Maliciously Secure Two-Party Computation},
    year = {2017},
    booktitle = {Proceedings of the 2017 ACM SIGSAC Conference on Computer and Communications Security},
    pages = {21–37},
    numpages = {17},
}

@InProceedings{ivan2012malicious,
    author="Damg{\aa}rd, Ivan
    and Pastro, Valerio
    and Smart, Nigel
    and Zakarias, Sarah",
    title="Multiparty Computation from Somewhat Homomorphic Encryption",
    booktitle="Advances in Cryptology -- CRYPTO 2012",
    year="2012",
    pages="643--662",
}

@InProceedings{bendlin2011malicious,
    author="Bendlin, Rikke
    and Damg{\aa}rd, Ivan
    and Orlandi, Claudio
    and Zakarias, Sarah",
    title="Semi-homomorphic Encryption and Multiparty Computation",
    booktitle="Advances in Cryptology -- EUROCRYPT 2011",
    year="2011",
    pages="169--188",
}

@inproceedings{goldreich1987how,
    author = {Goldreich, O. and Micali, S. and Wigderson, A.},
    title = {How to play ANY mental game},
    year = {1987},
    booktitle = {Proceedings of the Nineteenth Annual ACM Symposium on Theory of Computing},
    pages = {218–229},
    numpages = {12},
}

@article{qardaji2013understanding,
    author = {Qardaji, Wahbeh and Yang, Weining and Li, Ninghui},
    title = {Understanding hierarchical methods for differentially private histograms},
    year = {2013},
    issue_date = {September 2013},
    publisher = {VLDB Endowment},
    volume = {6},
    number = {14},
    issn = {2150-8097},
    url = {https://doi.org/10.14778/2556549.2556576},
    doi = {10.14778/2556549.2556576},
    journal = {Proc. VLDB Endow.},
    month = {sep},
    pages = {1954–1965},
    numpages = {12}
}

@article{hay2010hierarchical,
    author = {Hay, Michael and Rastogi, Vibhor and Miklau, Gerome and Suciu, Dan},
    title = {Boosting the accuracy of differentially private histograms through consistency},
    year = {2010},
    issue_date = {September 2010},
    publisher = {VLDB Endowment},
    volume = {3},
    number = {1–2},
    issn = {2150-8097},
    url = {https://doi.org/10.14778/1920841.1920970},
    doi = {10.14778/1920841.1920970},
    journal = {Proc. VLDB Endow.},
    month = {sep},
    pages = {1021–1032},
    numpages = {12}
}

@inproceedings{barreno2006can,
    author = {Barreno, Marco and Nelson, Blaine and Sears, Russell and Joseph, Anthony D. and Tygar, J. D.},
    title = {Can machine learning be secure?},
    year = {2006},
    booktitle = {Proceedings of the 2006 ACM Symposium on Information, Computer and Communications Security},
    pages = {16–25},
    numpages = {10},
}

@inproceedings{battista2012poisoning,
    author = {Biggio, Battista and Nelson, Blaine and Laskov, Pavel},
    title = {Poisoning attacks against support vector machines},
    year = {2012},
    booktitle = {Proceedings of the 29th International Coference on International Conference on Machine Learning},
    pages = {1467–1474},
    numpages = {8},
}

@inproceedings {fang2020local,
    author = {Minghong Fang and Xiaoyu Cao and Jinyuan Jia and Neil Gong},
    title = {Local Model Poisoning Attacks to {Byzantine-Robust} Federated Learning},
    booktitle = {29th USENIX Security Symposium (USENIX Security 20)},
    year = {2020},
    pages = {1605--1622},
}

@inproceedings{tolpegin2020data,
  title={Data poisoning attacks against federated learning systems},
  author={Tolpegin, Vale and Truex, Stacey and Gursoy, Mehmet Emre and Liu, Ling},
  booktitle={Computer security--ESORICs 2020: 25th European symposium on research in computer security, ESORICs 2020, guildford, UK, September 14--18, 2020, proceedings, part i 25},
  pages={480--501},
  year={2020},
  organization={Springer}
}

@misc{kaggle_sf_salaries_2014,
  title = {San Francisco City Employee Salary Data},
  author = {{Kaggle}},
  year = {2014},
  howpublished = {\url{https://www.kaggle.com/datasets/kaggle/sf-salaries}},
  note = {Accessed: 2025-05-18}
}

@misc{kaggle_brazil_salary_2020,
  title = {Monthly Salary of Public Worker in Brazil},
  author = {{Kaggle}},
  year = {2020},
  howpublished = {\url{https://www.kaggle.com/datasets/gustavomodelli/monthly-salary-of-public-worker-in-brazil}},
  note = {Accessed: 2025-05-18}
}

@inproceedings{chaum1983blind,
  title={Blind signatures for untraceable payments},
  author={Chaum, David},
  booktitle={Advances in Cryptology: Proceedings of Crypto 82},
  pages={199--203},
  year={1983},
  organization={Springer}
}

@article{rm2,
    author = {Luo, Qiyao and Yu, Jianzhe and Dong, Wei and Xu, Quanqing and Yang, Chuanhui and Yi, Ke},
    title = {RM2: Answer Counting Queries Efficiently under Shuffle Differential Privacy},
    year = {2025},
    volume = {3},
    number = {3},
    journal = {Proc. ACM Manag. Data},
    articleno = {210},
    numpages = {24},
}

@inproceedings{dong2022r2t,
author = {Dong, Wei and Fang, Juanru and Yi, Ke and Tao, Yuchao and Machanavajjhala, Ashwin},
title = {R2T: Instance-optimal Truncation for Differentially Private Query Evaluation with Foreign Keys},
year = {2022},
isbn = {9781450392495},
publisher = {Association for Computing Machinery},
address = {New York, NY, USA},
url = {https://doi.org/10.1145/3514221.3517844},
doi = {10.1145/3514221.3517844},
booktitle = {Proceedings of the 2022 International Conference on Management of Data},
pages = {759–772},
numpages = {14},
location = {Philadelphia, PA, USA},
series = {SIGMOD '22}
}

@inproceedings{kifer2011no,
  title={No free lunch in data privacy},
  author={Kifer, Daniel and Machanavajjhala, Ashwin},
  booktitle={Proceedings of the 2011 ACM SIGMOD International Conference on Management of data},
  pages={193--204},
  year={2011}
}

@article{zhang2017privbayes,
  title={Privbayes: Private data release via bayesian networks},
  author={Zhang, Jun and Cormode, Graham and Procopiuc, Cecilia M and Srivastava, Divesh and Xiao, Xiaokui},
  journal={ACM Transactions on Database Systems (TODS)},
  volume={42},
  number={4},
  pages={1--41},
  year={2017},
  publisher={ACM New York, NY, USA}
}

@inproceedings{cormode2018marginal,
  title={Marginal release under local differential privacy},
  author={Cormode, Graham and Kulkarni, Tejas and Srivastava, Divesh},
  booktitle={Proceedings of the 2018 International Conference on Management of Data},
  pages={131--146},
  year={2018}
}

@article{li2024local,
  title={Local differentially private heavy hitter detection in data streams with bounded memory},
  author={Li, Xiaochen and Liu, Weiran and Lou, Jian and Hong, Yuan and Zhang, Lei and Qin, Zhan and Ren, Kui},
  journal={Proceedings of the ACM on Management of Data},
  volume={2},
  number={1},
  pages={1--27},
  year={2024},
  publisher={ACM New York, NY, USA}
}

@article{zhang2025federated,
  title={Federated Heavy Hitter Analytics with Local Differential Privacy},
  author={Zhang, Yuemin and Ye, Qingqing and Hu, Haibo},
  journal={Proceedings of the ACM on Management of Data},
  volume={3},
  number={1},
  pages={1--27},
  year={2025},
  publisher={ACM New York, NY, USA}
}

@article{he2024common,
  title={Common Neighborhood Estimation over Bipartite Graphs under Local Differential Privacy},
  author={He, Yizhang and Wang, Kai and Zhang, Wenjie and Lin, Xuemin and Zhang, Ying},
  journal={Proceedings of the ACM on Management of Data},
  volume={2},
  number={6},
  pages={1--26},
  year={2024},
  publisher={ACM New York, NY, USA}
}

@inproceedings{kulkarni2019answering,
  title={Answering range queries under local differential privacy},
  author={Kulkarni, Tejas},
  booktitle={Proceedings of the 2019 International Conference on Management of Data},
  pages={1832--1834},
  year={2019}
}

@inproceedings{li2020estimating,
  title={Estimating numerical distributions under local differential privacy},
  author={Li, Zitao and Wang, Tianhao and Lopuha{\"a}-Zwakenberg, Milan and Li, Ninghui and {\v{S}}koric, Boris},
  booktitle={Proceedings of the 2020 ACM SIGMOD International Conference on Management of Data},
  pages={621--635},
  year={2020}
}

@inproceedings{ren2022ldp,
  title={LDP-IDS: Local differential privacy for infinite data streams},
  author={Ren, Xuebin and Shi, Liang and Yu, Weiren and Yang, Shusen and Zhao, Cong and Xu, Zongben},
  booktitle={Proceedings of the 2022 international conference on management of data},
  pages={1064--1077},
  year={2022}
}

@article{he2025robust,
  title={Robust Privacy-Preserving Triangle Counting under Edge Local Differential Privacy},
  author={He, Yizhang and Wang, Kai and Zhang, Wenjie and Lin, Xuemin and Zhang, Ying and Ni, Wei},
  journal={Proceedings of the ACM on Management of Data},
  volume={3},
  number={3},
  pages={1--26},
  year={2025},
  publisher={ACM New York, NY, USA}
}

@article{yu2025privrm,
  title={PrivRM: A Framework for Range Mean Estimation under Local Differential Privacy},
  author={Yu, Liantong and Ye, Qingqing and Du, Rong},
  journal={Proceedings of the ACM on Management of Data},
  volume={3},
  number={3},
  pages={1--26},
  year={2025},
  publisher={ACM New York, NY, USA}
}

@article{10.14778/3424573.3424576,
author = {Wang, Tianhao and Ding, Bolin and Xu, Min and Huang, Zhicong and Hong, Cheng and Zhou, Jingren and Li, Ninghui and Jha, Somesh},
title = {Improving utility and security of the shuffler-based differential privacy},
year = {2020},
issue_date = {September 2020},
publisher = {VLDB Endowment},
volume = {13},
number = {13},
issn = {2150-8097},
url = {https://doi.org/10.14778/3424573.3424576},
doi = {10.14778/3424573.3424576},
journal = {Proc. VLDB Endow.},
month = sep,
pages = {3545–3558},
numpages = {14}
}

@inproceedings{mcsherry2009privacy,
  title={Privacy integrated queries: an extensible platform for privacy-preserving data analysis},
  author={McSherry, Frank D},
  booktitle={Proceedings of the 2009 ACM SIGMOD International Conference on Management of data},
  pages={19--30},
  year={2009}
}

@inproceedings{hay2016principled,
  title={Principled evaluation of differentially private algorithms using dpbench},
  author={Hay, Michael and Machanavajjhala, Ashwin and Miklau, Gerome and Chen, Yan and Zhang, Dan},
  booktitle={Proceedings of the 2016 International Conference on Management of Data},
  pages={139--154},
  year={2016}
}

@article{dwork2014algorithmic,
  title={The algorithmic foundations of differential privacy},
  author={Dwork, Cynthia and Roth, Aaron and others},
  journal={Foundations and trends{\textregistered} in theoretical computer science},
  volume={9},
  number={3--4},
  pages={211--407},
  year={2014},
  publisher={Now Publishers, Inc.}
}

\end{document}